\let\hat\widehat
\let\tilde\widetilde
\def \iid {\stackrel{\text{i.i.d.}}{\sim}}
\newcommand{\dm}{{\textrm{DiM}}}
\newcommand{\deb}{{\textrm{Debias}}}
\theoremstyle{plain}
\def \pto {\stackrel{P}{\to}}
\def\##1\#{\begin{align}#1\end{align}}
\def\$#1\${\begin{align*}#1\end{align*}}
\tikzset{brace/.style={decorate, decoration={brace}},
  brace mirrored/.style={decorate, decoration={brace,mirror}},
}
\def\keywords{\vspace{.5em}
{\textit{Keywords}:\,\relax%
}}
\title{\LARGE Towards Optimal Variance Reduction in Online Controlled Experiments}
\author{Ying Jin}
\author[2]{Shan Ba}
\affil{Department of Statistics, Stanford University
}
\affil[2]{Data Science Applied Research, LinkedIn Corporation}
\begin{document}

\maketitle

\begin{abstract}
We study optimal variance reduction solutions for 
count and ratio metrics in online controlled experiments.
Our methods leverage flexible machine learning tools to incorporate covariates that are independent from the treatment but have predictive power for the outcomes, and employ 
the cross-fitting technique to remove the bias in complex machine learning models. 
We establish CLT-type asymptotic inference based on our estimators under mild convergence conditions. 
Our procedures 
are optimal (efficient) for the corresponding targets
as long as the machine learning estimators are consistent, 
without any requirement for their convergence rates. 
In complement to the general optimal procedure, 
we also derive a linear adjustment method for ratio metrics 
as a special case  that 
is computationally efficient and can flexibly incorporate any pre-treatment covariates. 
We evaluate the proposed variance reduction procedures with
comprehensive simulation studies and provide  practical suggestions regarding 
commonly adopted assumptions in computing ratio metrics. 
When tested on real online experiment data from LinkedIn, 
the proposed optimal procedure for ratio metrics can reduce 
up to 80\% of variance compared to the standard difference-in-mean estimator and also further reduce up to 30\% of variance compared to the CUPED approach by going beyond linearity and incorporating a large number of extra covariates.
\end{abstract}

\keywords{A/B test, randomized experiments, variance reduction, semiparametric efficiency, causal inference, covariate adjustment, ratio metrics.}

\section{Introduction}
Online controlled experiments, also known as A/B tests, are extensively used in tech companies to assess the impacts of product changes on business metrics. 
Although online experiments can involve millions of users, their sensitivity is still a major challenge because 
the treatment effect is often small 
compared to the noise.  
Failing to detect even small differences in key metrics can have significant business implications~\citep{kohavi_tang_xu_2020} and it is crucial to develop 
powerful 
statistical tools to quickly capture nonzero effects
with fewer samples 
and  shorter
experimentation turn-around time.

To improve the sensitivity of online controlled experiments, variance reduction techniques are commonly used which leverage relevant covariates to 
remove explainable variance in 
the outcomes.  
For count metrics, 
many variance reduction solutions have been developed in the literature based on classical linear adjustment
\citep{yang2001efficiency,freedman2008regression,lin2013agnostic,deng2013improving} 
and 
machine learning tools~\citep{hosseini2019unbiased,guo2021machine}.
%
However, 
the optimality of 
variance reduction procedures has not been thoroughly studied and using suboptimal procedures may lead 
to unnecessary costs. 
In addition to count metrics, online experiments also often involve ratio metrics~\citep{deng2017trustworthy} whose variance reduction is more complex but much less studied. As we will discuss in Section~\ref{subsec:metric}, ratio metrics are similar to the setting of cluster randomized experiments. Existing variance reduction solutions for ratio metric are mostly extensions to those of count metrics \citep{deng2013improving} and 
a rigorous 
statistical framework for (optimal) variance reduction 
remains absent.

In this paper, we target at the natural but unanswered question:
\emph{
    With access to a set of covariates that are independent of the treatment in the experiment,  what is the optimal estimator for comparing the outcomes of treatment and control groups?}
We study the optimal variance reduction procedures for both
count and ratio metrics that are ubiquitous 
in online controlled experiments in the industry. 
The optimality we focus on is 
semiparametric efficiency~\citep{bickel1993efficient}. 
Given an estimand that 
arises from the comparison of experiment outcomes, 
our goal is to 
develop an estimator with the smallest 
asymptotic variance among all asymptotically unbiased estimators. 
We propose procedures that reduce variance 
of treatment effect estimation by incorporating flexible ML regressors with rigorous statistical  guarantee. 
Based on classical semiparametric statistics theory, 
we establish the optimality of our procedures under mild conditions. 
For ratio metrics,
in addition to the optimal (and perhaps nonlinear) 
approaches, 
we also propose a computationally efficient 
linear adjustment method 
which, to the best of our knowledge, are not available in the literature. 

The rest of the paper is organized as follows. In section~\ref{sec:review}, we introduce the definition of count and ratio metrics in online experiments and provide an overview of 
the related literature. 
We develop the variance reduction procedures, asymptotic inference and optimality properties for count metrics in Section~\ref{sec:count}, and for ratio metrics in Section~\ref{sec:rr} and \ref{sec:r}. Section~\ref{sec:simulation} is devoted to simulation studies and Section~\ref{sec:example} illustrates the performance of our methods using real online experiments from LinkedIn. 
Finally, we give concluding remarks in Section~\ref{sec:conclusion}. 

\section{Problem setting and related work} \label{sec:review}

The \emph{randomization unit} and the \emph{analysis unit}
are two important concepts for online experiments. The most common online experiments are randomized by users, while sometimes the experiments can also use alternative randomization units.
For example, in online experiments for enterprise products, the randomization unit typically needs to be a cluster of users (such as
an enterprise account or contract) because users in the same contract must have access to the same product feature.
In another scenario where it is infeasible to identify/track users in a web service, the randomization unit is often chosen to be a service request or a pageview.
The analysis unit of an experiment, on the other hand, may not necessarily be the same as the experiment's randomization unit. For example, in a cluster randomized  experiment for enterprise products, we can choose the analysis unit at either the cluster level (e.g., revenue per cluster) or the individual user level (e.g., revenue per user). Depending on whether the randomization 
and the analysis units coincide, metrics in online experiments can be classified into different types, for which different analysis procedures are needed. 
 
%
%

\subsection{Count and ratio metrics}\label{subsec:metric}

To formalize the 
problem setting, we adopt the potential outcomes framework 
and follow the standard Stable Unit Treatment Value Assumption (SUTVA)~\citep{imbens2015causal} so that there is no interference among
the randomization units. 
We also take a super-population perspective 
where the randomization units can be viewed as i.i.d.,~but the 
analysis units may not. 

\paragraph{Count metrics.} 
The most common metrics are count metrics, whose analysis unit matches the randomization unit in the experiment. For example, in online experiments that are randomized by users, count metrics are those defined on the user level such as revenue per user, pageviews per user, number of clicks per user, etc. Because the analysis units 
are the same as the the i.i.d.~randomization units, 
the variance of count metrics 
can be estimated directly by the sample variance formula.

Formally, suppose there are $n$ units in the experiment, 
    for which we have access to i.i.d.~observations $\{(X_i,Y_i,T_i)\}_{i=1}^n$ from an unknown distribution, where 
  $X_i$ is the pre-treatment covariates, 
$Y_i$ is the measured metric and $T_i$ is the treatment indicator.
Following the standard practice 
of online controlled experiments, 
we assume the treatment indicators $T_i\iid$ Bernoulli($p$) 
for $p\in (0,1)$ and 
are independent of all other information. 
The units in the treated group ($T_i=1$) receive the treatment, 
and the units in the control group ($T_i=0$) do not. 
%
The outcomes $\{Y_i\}_{i=1}^n$ are then measured 
after the experiment. 
Under the potential outcome framework, 
each unit has two potential outcomes $(Y_i(1),Y_i(0))$, 
where $Y_i(1)$ is the outcome that unit $i$ exhibits 
under treatment, and $Y_i(0)$ is that under control. 
For each unit, we only observe one potential outcome $Y_i = Y_i(T_i)$ 
under SUTVA~\citep{imbens2015causal}. 
Typically, count metrics are aggregated by sample means 
such as $\frac{1}{\sum_{i}T_i} \sum_{T_i=1} Y_i$ for 
the units in the treatment or control groups, 
and the  difference between two groups 
shows the causal effect of the treatment. 
The corresponding estimand is 
\$
\tau = \EE\big[Y_i(1)\big] - \EE\big[Y_i(0)\big],
\$ 
where the expectations are taken with respect to 
the distribution that the units are from. 

\paragraph{Ratio metrics.} 
When the analysis unit is at a lower level
than the experiments' randomization unit, the metric of interest is a ratio metric (because it can be expressed as a ratio of two count metrics), whose 
unique structure requires distinct techniques for 
inference. 

In user-randomized experiments, the click-through rate (average number of clicks per pageview, computed as number of clicks /number of pageviews) is an example of ratio metric 
whose analysis unit is at the pageview level. 
Because the experiment is randomized by users and 
different pageviews of the same user are correlated,
only the outcomes aggregated at the user level (randomization unit level) can be viewed as i.i.d..
For analysis purposes, the click-through rate 
can be equivalently viewed as a ratio of 
two user-level count metrics
(number of clicks per user
/number of pageviews per user).
Consider another experiment for enterprise products which needs to be randomized by contracts to ensure that all users within each contract receive the same treatment assignment. Revenue per contract is a standard count metric in this experiment, but in practice we are often more interested in analyzing revenue per user, which is a ratio metric. Because only the contracts are i.i.d.~and users under each contract are not independent, variance of the revenue per user cannot be directly calculated by the sample variance formula. Instead, by viewing revenue per user as a ``ratio'' of two contract-level count metrics (revenue per contract/number of users per contract), we can estimate its variance based on the delta method \citep{deng2017trustworthy}. 
More broadly, this setting is similar to the cluster randomized experiments in causal inference~\citep{green2008analysis,middleton2015unbiased} as each randomization unit can be viewed as a cluster of analysis units.

Formally, let $i=1,\dots,n$ be i.i.d.~randomization units in the experiment from some distribution $\PP$. They are randomly allocated to treated or control groups, indicated by $T_i\in\{0,1\}$. Here we assume $T_i \iid \text{Bernoulli}(p)$ for some $p\in (0,1)$ and are independent from all other information. 
Suppose two count metrics $Y_i$ and $Z_i$ are measured for each randomization unit $i$ (e.g., $Y_i$ is the total number of clicks of user $i$ and $Z_i$ is the total number of pageviews of user $i$). The ratio metrics (e.g., click through rate) for the treated and control groups are defined as 
\$
\frac{  \sum_{i~\textrm{treated}} Y_i}{ \sum_{i~\textrm{treated}} Z_i}  
= ~
\frac{\frac{1}{n_t} \sum_{i~\textrm{treated}}Y_i }{\frac{1}{n_t} \sum_{i~\textrm{treated}} Z_i},\quad 
\frac{ \sum_{i~\textrm{control}}Y_i}{ \sum_{i~\textrm{control}} Z_i}
= ~
\frac{\frac{1}{n_c} \sum_{i~\textrm{control}}Y_i }{\frac{1}{n_c} \sum_{i~\textrm{control}} Z_i},
\$
where $Y_i$ and $Z_i$ are aggregated
across all randomization units 
in the treatment/control groups first before taking the ratio.  
Because randomization units are i.i.d., 
the law of large numbers implies that the 
population-level comparison 
target (the limit 
of difference in ratio metrics) 
is \emph{the difference in ratios of expectations}
$
\frac{\EE[Y_i\given T_i=1]}{\EE[Z_i\given T_i=1]} - \frac{\EE[Y_i\given T_i=0]}{\EE[Z_i\given T_i=0]}.
$ 
In practice, the expectation of $Z_i$ is always nonzero and hence both the population-level and sample-level ratios are well defined.
Note that we do not define ratio metrics for the treatment and control groups as $\frac{1}{n_t}\sum_{i~\textrm{treated}} \frac{Y_i}{Z_i}$ and $\frac{1}{n_c}\sum_{i~\textrm{control}} \frac{Y_i}{Z_i}$, because some individual $Z_i$ may be zero and the corresponding individual ratio $Y_i/Z_i$ is not well-defined.
The previous definition of ratio metrics can actually be viewed as a weighted average of $Y_i/Z_i$ with 
the weights proportional to $Z_i$ (e.g., 
giving more weights to more active users):
\$
    \frac{\frac{1}{n_c} \sum_{i~\textrm{control}}Y_i }{\frac{1}{n_c} \sum_{i~\textrm{control}} Z_i}
    = \sum_{i~\textrm{control}}\underbrace{\frac{Z_i}{\sum_{i~\textrm{control}} Z_i}}_{\textrm{"weights"}} \times \underbrace{\frac{Y_i}{Z_i}}_{\textrm{ratio for randomization unit $i$}} .
    \$

Ratio metrics can also be grouped into two types depending on their assumptions: 
(1) The first type allows both the metric $Z$ in the denominator and the metric $Y$ in the numerator to be changed by the treatment. 
The above click-through rate example belongs to this type.  
(2) The second type assumes that the metric $Z$ in the denominator is a random variable associated with the experiment unit, but is \emph{not} changed by the treatment. 
\cite{deng2017trustworthy} refers to (2) 
as the \emph{stable denominator assumption} (SDA in the following). 
SDA is plausible when the ratio metric mainly uses $Z$ in the denominator as a normalization factor 
to standardize the changes in the numerator metric $Y$. 
For example, when we are interested in a user-level metric ``revenue per user'' but the experiment needs to be randomized by clusters of users, we can choose $Z_i$ as the number of active users in an cluster. 
This assumption always needs to be checked in practice using a separate test on $Z_i$.
If the SDA is violated, the ratio metric itself is hard to interpret: we do not know whether an increase in the ratio is good (e.g., due to an increase in revenue) or bad (e.g., due to a decrease in the number of active users).
For instance, a bad treatment which decreases the number of active users in the contracts can instead yield a higher revenue-per-user ratio as the remaining users most likely are the most active ones.
When the SDA is violated, the analysis should also emphasize on studying the count metric change in the numerator $Y$ and in the denominator $Z$ separately before drawing conclusions.
%

We now define the estimands for the two types of ratio metrics under the potential outcomes framework. 
For type (1) ratio metric, each unit has potential outcomes ($Y_i(1)$, $Y_i(0)$, $Z_i(1)$, $Z_i(0)$), where we observe $(Y_i,Z_i) = (Y_i(T_i),Z_i(T_i))$. 
The estimand is 
\#
\delta = \frac{\EE[Y_i(1)]}{\EE[Z_i(1)]} - \frac{\EE[Y_i(0)]}{\EE[Z_i(0)]}, \label{eq:rr_target}
\# 
where the expectation is with respect to 
the distribution the units are from. 
In the type (2) ratio metric, each unit has potential outcomes $(Y(1),Y(0))$ so that $Y_i=Y_i(T_i)$ while $Z$ is a plain random variable, and the estimand is 
\#
\delta' = \frac{\EE[Y_i(1)]}{\EE[Z_i ]} - \frac{\EE[Y_i(0)]}{\EE[Z_i ]}. \label{eq:r_target}
\#
In this work, we will consider both types of estimands for the ratio metrics and offer practical recommendations. As we will show in our numerical experiments, 
the assumption (2) should be made with caution.

\subsection{Related work}\label{subsec:review}


This work falls into the general randomized experiment setting in causal inference \citep{imbens2015causal},
whereas we study the post-hoc variance reduction 
instead of the 
experimental design strategies. 
In addition, our method leverages 
machine learning models to
learn the conditional relation between the potential outcomes 
and the covariates, 
which is similar to but with distinct goal from 
 the investigation 
of treatment effect heterogeneity in causal inference
\citep{athey2016recursive,chernozhukov2017generic,imai2013estimating,kunzel2019metalearners,nie2020quasi,wager2018estimation}:  we fit 
the conditional mean functions of the potential outcomes to 
remove 
explainable variations in the outcomes, rather than 
to investigate the heterogeneity of the conditional treatment effect.

This work adds to the rich literature of 
variance reduction with covariate adjustment
in randomized experiments. 
Analysis of covariance (ANCOVA) has a long history of 
application in physical experiments \citep{wu_hamada_2009}.
The classic linear regression adjustment \citep{yang2001efficiency,freedman2008regression,lin2013agnostic} is shown to work even when the linear model is misspecified.  
\cite{deng2013improving} proposes to use pre-treatment data as the regression covariates and the corresponding variance reduction method, called CUPED (Controlled-experiment Using Pre-Experiment Data), has been widely used in the industry. 
The blossom of ML research also inspires
a line of recent work on using ML tools for variance reduction, including \cite{hosseini2019unbiased,cohen2020no,guo2021machine} and the references therein,
but few of them focus on the optimality of the variance reduction procedure. 
In particular, among  the existing works for count metrics, 
both \cite{guo2021machine} and \cite{cohen2020no} uses predicted outcomes from ML estimators as covariates in a linear regression adjustment. 
Their intuitions are to create more relevant features with ML tools to improve upon CUPED, but not aimed at optimality. 
As would be discussed in details at the end of Section~\ref{subsec:ct_optimal}, careful considerations are needed for optimality and the method in \cite{guo2021machine} can not be semiparametrically efficient in general.
\cite{hosseini2019unbiased} discusses 
ratio metrics and provides methods 
to obtain unbiased estimators with machine learning tools. 
However, to our knowledge, 
rigorous and explicit statistical inference guarantees and 
the optimality of the procedures are not provided.

Our approach for count metrics is asymptotically
the same as the augmented inverse propensity weighting (AIPW) estimator \citep{robins1994estimation}, 
whose well-established semiparametric efficiency~\citep{hahn1998role} result 
forms the basis of our optimality guarantee. 
For count metrics, we develop valid inference procedures
in randomized experiments ($L_2$
convergence in probability to any fixed function), 
which is  much weaker than 
the pointwise convergence condition 
to true conditional mean functions 
as is often required in observational studies \citep{nichols2007causal,schuler2017targeted,chernozhukov2018double}
or the investigation of heterogeneous treatment effects \citep{chernozhukov2017generic,kunzel2019metalearners,athey2019estimating,nie2020quasi,kennedy2020optimal},
and also differs from the traditional approach of Donsker conditions and empirical process theory \citep{andrews1994empirical,van2000asymptotic,van1996weak} to 
control errors in estimating nuisance components (the conditional mean functions in our setting).

Our estimators for ratio metrics 
as well as its inference and optimality results  
are new to the literature. 
Intuitively, they all have a fit-and-debias flavor related to the AIPW estimator~\citep{robins1994estimation}.
The optimality theories we develop for ratio metrics finds roots in  
the classical semiparametric efficiency theory~\citep{bickel1993efficient,hahn1998role}.


\section{Variance reduction for count metrics}\label{sec:count}


We begin our discussion with 
intuitions on how machine learning can assist 
variance reduction for count metrics. 
To estimate the treatment effect  $\tau = \EE[Y_i(1)]-\EE[Y_i(0)]$,  
 an ideal ``estimator'' is
$
\hat\theta_{\textrm{ideal}} = \frac{1}{n}\sum_{i=1}^n (Y_i(1)-Y_i(0))
$
which directly compares pairs of potential outcomes, 
although it is not rigorously an estimator since 
$Y_i(1)$ and $Y_i(0)$ are not simultaneously observable. 
This ``estimator'' is ideal for two reasons: (1) large sample size: both the treated and the control group leverage $n$ observations; (2) paired comparison: the variance of the individual treatment effect $Y_i(1)-Y_i(0)$ is typically smaller than the variances of $Y_i(1)$ and $Y_i(0)$ alone.
A standard Difference-in-Mean (DiM) estimator 
compares the averages of the two groups:
$
\hat\theta_{\textrm{DiM}}
= \frac{1}{n_t} \sum_{i~\textrm{treated}}  Y_i(1)  - \frac{1}{n_c} \sum_{i~\textrm{control}}  Y_i(0),
$
which has larger variance than $\hat\theta_{\textrm{ideal}}$
by
Cauchy-Schwarz inequality. 
While $\hat\theta_{\textrm{ideal}}$ is not computable, 
a natural idea is to impute the unobserved potential outcomes 
based on covariates $X_i$.
Let 
$\hat\mu_1(\cdot)$ and $\hat\mu_0(\cdot)$ be some machine learning 
predictors for $Y(1)$ and $Y(0)$ based on $X$, respectively. 
By naively plugging in the predictors, we obtain 
$
\hat\theta_{\textrm{plug-in}} = \frac{1}{n} \sum_{i=1}^n \big(\hat{Y}_i(1) - \hat{Y}_i(0)\big),
$
where 
 \$
 \hat{Y}_i(1) = \begin{cases} Y_i(1),\quad &T_i=1\\ \hat\mu_1(X_i),\quad &T_i=0    \end{cases},\qquad \hat{Y}_i(0) = \begin{cases} Y_i(0),\quad &T_i=0\\ \hat\mu_0(X_i),\quad &T_i=1    \end{cases}.
 \$ 
The simplicity of $\hat\theta_{\textrm{plug-in}}$ 
comes with two problems. First, without a well-posed parametric model, the convergence rate of $\hat\mu_1(\cdot)$ and $\hat\mu_0(\cdot)$ is often slower than $n^{-1/2}$, 
hence $\hat\theta_{\textrm{plug-in}}$ can introduce ``\emph{regressor bias}'' 
that is even larger than the variance. 
Secondly, another ``\emph{double-dipping bias}'' occurs as the same dataset is used both for model-fitting and for prediction,
hence
$\{\hat\mu_w(X_i)\}_{1\leq i\leq n}$ are no longer independent copies, posing challenges for statistical analysis. 
We will develop debiasing techniques for these issues.

We also note that 
the above intuition
is related to 
\cite{guo2021generalized,cohen2020no} 
where machine learning predictions 
are used to impute the counterfactuals. 
However, our work takes a specific approach 
that aims at optimality. 
The conditions for valid inference also differ from them.

\subsection{Estimation procedure}\label{subsec:count_procedure}

Our proposed variance reduction procedure 
improves upon the naive $\hat\theta_{\textrm{plug-in}}$ 
in two aspects: it eliminates the ``regressor bias'' by adding a de-biasing term and employs the cross-fitting technique  \citep{chernozhukov2018double} to 
correct for the ``double-dipping bias''. 

The first step is to randomly split 
the original dataset $\cD = (Y_i,T_i,X_i)_{i=1}^n$ into $K$ (roughly) equal-sized folds $\{\cD^{(k)}\}_{1\leq k\leq K}$ with sizes $n_k=|\cD^{(k)}|$, each fold containing $n_{k,t} = \sum_{i\in \cD^{(k)}}T_i$  treated samples and $n_{k,c} = n_k - n_{k,t}$ control samples. 
Here the random splitting is conducted separately in the treated and control groups to achieve 
balanced numbers of treated samples 
across folds. In practice, $K=2$ generally works well.

The second step is cross-fitting \citep{chernozhukov2018double}. Let $\cD^{(-k)} = \cD\backslash \cD^{(k)}$ denote all data after holding out the $k$-th fold.
For each $k\in[K]$, fit a function $\hat\mu_1^{(k)}(x)$ for $\EE[Y(1)\given X=x]$ using $\big\{(X_i,Y_i)\colon T_i=1, i\in\cD^{(-k)} \big\}$,  and fit a function $\hat\mu_0^{(k)}(x)$ for $\EE[Y(0)\given X=x]$ using $\big\{(X_i,Y_i)\colon T_i=0, i\in\cD^{(-k)} \big\}$. Then apply the fitted functions to the held-out samples in $\cD^{(k)}$ to generate out-of-sample predictions  $\hat\mu_0(X_i) = \hat\mu_0^{(k)}(X_i)$ and $\hat\mu_1(X_i) = \hat\mu_1^{(k)}(X_i)$ for $i\in \cD^{(k)}$. 
These out-of-sample predictions are independent conditional on $\cD^{(-k)}$, helping alleviate the ``double-dipping bias''.

Finally, we estimate $\tau$ via
\#\label{eq:db_count}
& \hat\theta_\deb = \frac{1}{K}\sum_{k=1}^K ~\hat\theta_{\deb}^{(k)},\quad \text{where}\\
\hat\theta_{\deb}^{(k)} = 
\frac{1}{n_k}\sum_{i\in \cD^{(k)}} &  \big(\hat\mu_1(X_i) - \hat\mu_0(X_i)\big) + \frac{1}{n_{k,t}}\sum_{\substack{T_i=1,\\i\in \cD^{(k)}}} \big( Y_i - \hat\mu_1(X_i) \big) - \frac{1}{n_{k,c}} \sum_{\substack{T_i=0,\\i\in \cD^{(k)}}} \big( Y_i - \hat\mu_0(X_i)  \big). \notag
\#
As will be shown later, this estimator is finite-sample unbiased and one could also obtain similar performance by the following estimator
\#\label{eq:db_count_0}
&\tilde\theta_\deb =  
\frac{1}{n }\sum_{i=1}^n   \big(\hat\mu_1(X_i) - \hat\mu_0(X_i)\big) + \frac{1}{n_t}\sum_{i~\textrm{treated}} \big( Y_i(1) - \hat\mu_1(X_i) \big) - \frac{1}{n_{c}} \sum_{i~\textrm{control}} \big( Y_i(0) - \hat\mu_0(X_i)  \big),
\#
which might have an $O(1/n)$ bias from unequal fold sizes. 
Note that $\tilde\theta_\deb$ adds a de-biasing term to the naive plug-in estimator 
\$
\tilde\theta_\deb \approx \hat\theta_{\textrm{plug-in}} + \frac{ n_c}{n\cdot n_t}\sum_{i~{\scriptsize \textrm{treated}}} \big( Y_i(1) - \hat\mu_1(X_i) \big) - \frac{ n_t}{n\cdot n_c} \sum_{i~{\scriptsize \textrm{control}}} \big( Y_i(0) - \hat\mu_0(X_i)  \big).
\$
In this equation, $\hat\theta_{\textrm{plug-in}}$ uses machine learning tools to 
impute the unobserved potential outcomes, and the remaining two terms serve as corrections to the ``regressor bias'' in fitting the mean functions. 
As will be seen in Section~\ref{sec:rr} and~\ref{sec:r}, 
the debiasing technique 
is also useful when developing estimators 
for ratio metrics with careful considerations on the imputation of the predicted values.  
The proposed procedure is summarized in Algorithm~\ref{alg:debias}. 
\begin{algorithm}[H]
\caption{Debiased Variance Reduction}\label{alg:debias}
\begin{algorithmic}[1]
\STATE Input: Dataset $\cD =\{(Y_i,X_i,T_i\}_{i=1}^{n}$, number of folds $K$.
\STATE Randomly split $\cD$ into $K$ folds $\cD^{(k)}$, $k=1,\dots,K$.
\FOR{$k=1,\dots,K$} 
\STATE Use all $(X_i,Y_i)$ with $T_i=1$ and $i\notin \cD^{(k)}$ to obtain $\hat\mu_1^{(k)}(x)$ for $\EE[Y(1)\given X=x]$; 
\STATE Use all $(X_i,Y_i)$ with $T_i=0$ and $i\notin \cD^{(k)}$ to obtain $\hat\mu_0^{(k)}(x)$ for $\EE[Y(0)\given X=x]$;
\STATE For all $i\in \cD^{(k)}$, compute $\hat\mu_1(X_i) = \hat\mu_1^{(k)}(X_i)$ and  $\hat\mu_0(X_i) = \hat\mu_0^{(k)}(X_i)$.
\ENDFOR
\STATE Compute estimator $\hat\theta_{\deb}$ or $\tilde\theta_\deb$ on $\cD$ according to~\eqref{eq:db_count} or~\eqref{eq:db_count_0}.
\end{algorithmic}
\end{algorithm}

Ignoring the nuisance in estimation and 
assuming $\hat\mu_w(x)\to \mu_w(x):=\EE[Y(w)\given X=x]$ for $w\in \{0,1 \}$,
one can show that the asymptotic variance of $\tilde\theta_\deb$ and $\hat\theta_\deb$ satisfies
\$
\Var(\tilde\theta_\deb) \approx  
\underbrace{\frac{1}{n }\Var\big(\mu_1(X ) - \mu_0(X )\big)}_{\textrm{(i) predictable part}} + \underbrace{\frac{1}{n_t}\Var\big( Y (1) - \mu_1(X ) \big) + \frac{1}{n_{c}} \Var\big( Y (0) - \mu_0(X )  \big)}_{\textrm{(ii) irreducible variance}}
\$
In the above decomposition, the (i) predictable part echoes our intuitions of imputing $Y_i(1)-Y_i(0)$ with predicted values, thereby inheriting the advantages of $\hat\theta_{\textrm{ideal}}$ of increased sample size and decreased single-term variance when the treatment effects are typically small. 
In the meantime, (i) is the variance of the projection of $Y_i(1)-Y_i(0)$ on the $X$-space, which is 
the best effort to predict $Y_i(1)-Y_i(0)$ with $X_i$.
The (ii) is the variance that cannot be eliminated by the information of $X$ embodied in the $n_t$ treated samples and $n_c$ control samples. 
 In other words, one might view $\tilde\theta_{\deb}$ as the best efforts towards exploiting the information in $X$. 
 As the variance in (i) is deflated by a larger factor $n$,  
 predictors with smaller variance in (ii), i.e., more accurate ML predictors $\mu_w(x)$, are desirable. 
\begin{remark}
For the case of count metrics, our estimator is 
asymptotically equivalent to the AIPW estimator \citep{robins1994estimation}, whose
semiparametric efficiency \citep{hahn1998role} directly implies the 
optimality of our estimator. 
As will be shown in Section \ref{subsec:ct_unbias} and \ref{subsec:ct_optimal}, the differences here include 1) we establish 
valid inference as long as the ML estimator 
converges to deterministic functions 
without consistency, 
and 2) we establish the optimality under a 
weaker condition of $L_2$ convergence in probability. 
Our debiasing term 
is also related to the ``prediction unbiasedness'' condition in~\cite{guo2021generalized}.
\end{remark}

When we restrict $\hat\mu_w$ in Algorithm~\ref{alg:debias} to be a linear function of $x$, 
our estimator is connected to the well-known CUPED estimator~\citep{deng2013improving}
\#\label{eq:vanilla_cuped}
\hat\theta_{\textrm{CUPED}} = \frac{1}{n_t}\sum_{i~\textrm{treated}} \big(Y_i(1) - \hat\theta (X_i- \bar{X} )\big) - \frac{1}{n_c} \sum_{i~\textrm{control}} \big(Y_i(0) - \hat\theta (X_i- \bar{X} )\big),
\#
where $\hat\theta$ is the OLS projection of $Y_i$ (the pooled outcomes of control and treated groups) on (centered) pre-treatment metric $X_i$. The connection can be easily established as follows. By replacing $\mu_w(x)$ with the linear predictor $\beta_w^\top x$ ($w$ = 0 or 1) where $\beta_w$ is the least-square linear coefficient of $Y(w)$ on $X$, the proposed optimal estimator
$\tilde\theta_\deb$ in (\ref{eq:db_count_0}) can be simplified as
\$
\tilde\theta_\deb 
&= \frac{1}{n_t}\sum_{i~\textrm{treated}} \big( Y_i(1) - \beta_1^\top (X_i - \bar{X}) \big) - \frac{1}{n_{c}} \sum_{i~\textrm{control}} \big( Y_i(0) - \beta_0^\top (X_i - \bar{X})  \big). 
\$
This improves upon CUPED by running separate regressions in the treated and control groups. This estimator enjoys the agnostic property~\citep{lin2013agnostic}: it leads to no larger variance than the diff-in-mean estimator without any assumptions, while the vanilla CUPED in~\eqref{eq:vanilla_cuped} may not. In such low-dimensional case, due to the low complexity of linear function classes, the double-dipping bias is not a concern and cross-fitting is not necessary.
But for high-dimensional linear regression such as the LASSO, 
it is important to use our proposed algorithm to cross-fit and debias with $\hat\mu$ being the LASSO estimator, otherwise the bias induced by 
the high dimensional regression could be of the same order as the variance.

\subsection{Unbiasedness and asymptotic inference} \label{subsec:ct_unbias} 
In this part, we establish the finite-sample unbiasedness and asymptotic inference for our procedures for count metrics.
In the following theorem, we show that $\hat\theta_{\deb}$ is unbiased and 
$\tilde\theta_\deb$ has a negligible $O(1/n)$ bias due to potentially
unequally-sized folds. 
Such finite-sample unbiasedness holds for any machine learning regressors.

\begin{theorem}[Finite-sample unbiasedness]\label{thm:deb_unbiased}
Suppose $(X_i,Y_i(0),Y_i(1))\iid\PP$  
are independent of the treatment assignments $\cT = \{T_i\}_{i=1}^n$. Then $\EE[\hat\theta_{\deb}\given \cT] = \tau$ 
for any $n$. Furthermore,
suppose there exists an absolute constant $c_0>0$ such that 
$\big|\EE[\hat\mu_w^{(k)}(X_i)\given \cD^{(-k)}]\big|\leq c_0$ 
and $\big|\EE[Y_i(w)]\big|\leq c_0$ for all $w\in\{0,1\}$.  
Then $\big|\EE[\tilde\theta_{\deb}\given \cT] - \tau \big| \leq c/\min\{n_t,n_c\}$ for some absolute constant $c>0$.
\end{theorem}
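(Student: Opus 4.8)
The plan is to establish exact unbiasedness of $\hat\theta_\deb$ first, by a conditioning argument that exploits the cross-fitting structure, and then to obtain the $O(1/\min\{n_t,n_c\})$ bound for $\tilde\theta_\deb$ by comparing it to $\hat\theta_\deb$ block by block.

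For the first claim I would fix a fold $k$ and condition on both the treatment assignments $\cT$ and the out-of-fold data $\cD^{(-k)}$. Since $\hat\mu_1^{(k)}$ and $\hat\mu_0^{(k)}$ are built only from $\cD^{(-k)}$ (and $\cT$), they become deterministic functions under this conditioning, while the in-fold triples $(X_i,Y_i(0),Y_i(1))_{i\in\cD^{(k)}}$ remain i.i.d.\ from $\PP$ and -- crucially -- independent of $\cT$, so conditioning on $\cT$ (hence on each $T_i$) does not perturb their law. Writing $\bar m_w^{(k)} := \EE[\hat\mu_w^{(k)}(X)\given \cD^{(-k)},\cT]$ for $X\sim P_X$ independent of $\cD^{(-k)}$, the imputation term of $\hat\theta_\deb^{(k)}$ has conditional mean $\bar m_1^{(k)}-\bar m_0^{(k)}$, the treated correction term has mean $\EE[Y(1)]-\bar m_1^{(k)}$, and the control correction term has mean $\EE[Y(0)]-\bar m_0^{(k)}$. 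Adding them, the nuisance quantities $\bar m_w^{(k)}$ cancel \emph{exactly}, leaving $\EE[\hat\theta_\deb^{(k)}\given\cD^{(-k)},\cT]=\tau$ regardless of the quality of the regressors; the tower property over $\cD^{(-k)}$ and averaging over $k$ then give $\EE[\hat\theta_\deb\given\cT]=\tau$. This exact cancellation is really the heart of the argument and explains why unbiasedness needs no assumption on $\hat\mu_w$.

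For the second claim I would write $\tilde\theta_\deb-\hat\theta_\deb$ as a sum of three reweighting discrepancies, one for the imputation block and one for each correction block. Using the same conditional-mean computation, the imputation block contributes $\sum_k(\tfrac{n_k}{n}-\tfrac1K)(\bar m_1^{(k)}-\bar m_0^{(k)})$, the treated block contributes $\sum_k(\tfrac1K-\tfrac{n_{k,t}}{n_t})\bar m_1^{(k)}$, and the control block $\sum_k(\tfrac1K-\tfrac{n_{k,c}}{n_c})\bar m_0^{(k)}$, where in each sum the coefficients sum to zero. Because the split is performed separately within the treated and control groups, $|n_{k,t}-n_t/K|\le 1$ and $|n_{k,c}-n_c/K|\le 1$, whence $|\tfrac1K-\tfrac{n_{k,t}}{n_t}|\le 1/n_t$, $|\tfrac1K-\tfrac{n_{k,c}}{n_c}|\le 1/n_c$, and $|\tfrac{n_k}{n}-\tfrac1K|\le 2/n$. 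The boundedness hypotheses $|\bar m_w^{(k)}|\le c_0$ (which hold with the extra conditioning on $\cT$ by the independence noted above) and $|\EE[Y(w)]|\le c_0$ then bound each sum by a constant multiple of $1/\min\{n_t,n_c\}$, using $\min\{n_t,n_c\}\le n$ for the imputation block. Combining with $\EE[\hat\theta_\deb\given\cT]=\tau$ and the triangle inequality yields $|\EE[\tilde\theta_\deb\given\cT]-\tau|\le c/\min\{n_t,n_c\}$ with $c$ depending only on $c_0$ and $K$.

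The only genuine subtlety is that conditioning on $\cT$ must not disturb the i.i.d.\ structure within each fold; this is exactly where the hypothesis that $(X_i,Y_i(0),Y_i(1))\iid\PP$ are independent of $\cT$ (together with the data-independent random splitting) is used, and it is what lets conditioning on $T_i=w$ be replaced by an unconditional expectation of $Y(w)$. The remainder is bookkeeping: tracking which terms are measurable with respect to $\cD^{(-k)}$, verifying the zero-sum structure of the reweighting coefficients, and converting the coefficient bounds into the stated rate. Notably, the $1/\min\{n_t,n_c\}$ rate (rather than $1/n$) is forced by the two correction blocks, whose normalizers deviate from the equal-fold normalizers by $O(1/n_t)$ and $O(1/n_c)$ respectively.
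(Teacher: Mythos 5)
Your proof is correct and follows essentially the same route as the paper: exact cancellation of the conditional regressor means under conditioning on $(\cD^{(-k)},\cT)$ gives unbiasedness of $\hat\theta_\deb$ for the first claim, and a fold-by-fold reweighting comparison with the $|n_{k,t}/n_t - 1/K|$-type coefficient bounds gives the second. The only (immaterial) difference is that you compare $\tilde\theta_\deb$ to $\hat\theta_\deb$ itself, which produces a third $O(1/n)$ discrepancy term from the imputation block, whereas the paper compares to the $n_k/n$-weighted combination of the $\hat\theta_{\deb}^{(k)}$ so that only the two correction blocks contribute; both yield the stated $c/\min\{n_t,n_c\}$ bound.
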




We now study the asymptotic inference for our estimators. 
We assume the convergence of $\hat\mu_1$, $\hat\mu_0$ to fixed functions (this is mild because we do not require it to converge to the true conditional mean functions) and the treatment assignment mechanism. 

\begin{assumption}[Convergence]\label{assump:conv}
There exists two fixed functions $\mu^*_1(\cdot)$ and $\mu^*_0(\cdot)$, so that $\|\hat\mu^{(k)}_1 - \mu^*_1\|_2 \stackrel{P}{\to}0$ and $\|\hat\mu^{(k)}_0 - \mu^*_0\|_2 \stackrel{P}{\to}0$ for all $k\in[K]$.
\end{assumption}

\begin{assumption}[Treatment assignment mechanism]\label{assump:treat}
Assume $n_t/n\stackrel{P}{\to}p$ for some fixed $p\in(0,1)$, so that $n_{k,t}/n_k \stackrel{P}{\to } p$ for all $k\in[K]$.
\end{assumption}

\begin{theorem}[Asymptotic confidence intervals]\label{thm:asymp_var}
Suppose Assumptions~\ref{assump:conv} and~\ref{assump:treat} hold. Then $\sqrt{n}(\hat\theta_\deb - \tau) \stackrel{d}{\to} N(0,\sigma_\deb^2)$ and $\sqrt{n}(\tilde\theta_\deb - \tau) \stackrel{d}{\to} N(0,\sigma_\deb^2)$, where 
$
\sigma_{\deb}^2 = \frac{1}{p}\Var\big(Y_i(1) - (1-p) \mu_1^*(X_i) - p \mu_0^*(X_i)\big) + \frac{1}{1-p}\Var\big( Y_i(0) - (1-p) \mu_1^*(X_i) - p \mu_0^*(X_i)\big).
$
Furthermore, define the variance estimator 
$
\hat \sigma_\deb^2 = \frac{n}{n_t^2} \sum_{i=1}^nT_i (A_i - \bar{A} )^2 +  \frac{n}{n_c^2} \sum_{i=1}^n(1-T_i)  (B_i - \bar{B} )^2,
$ 
where $\bar{A} = \frac{1}{n_t} \sum_{i=1}^nT_i  A_i$, $\bar{B} = \frac{1}{n_c} \sum_{i=1}^n(1-T_i) B_i $, and 
$
A_i = Y_i(1) - \frac{n_c}{n} \hat\mu_1 (X_i) - \frac{n_t}{n }\hat\mu_0 (X_i),
$  
$
B_i = Y_i(0) - \frac{n_c}{n} \hat\mu_1(X_i) - \frac{n_t}{n} \hat\mu_0(X_i). 
$
Then $\hat\sigma_{\deb}^2\stackrel{P}{\to}\sigma_{\deb}^2$, 
and $\hat\theta_\deb \pm z_{1-\alpha/2} \hat\sigma_\deb/\sqrt{n}$ and $\tilde\theta_\deb \pm z_{1-\alpha/2} \hat\sigma_\deb/\sqrt{n}$ are both asymptotically valid $(1-\alpha)$ confidence intervals for $\tau = \EE[Y(1)]-\EE[Y(0)]$.
\end{theorem}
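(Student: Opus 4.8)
The plan is to reduce both estimators to an \emph{oracle} statistic built from the population limits $\mu_0^*,\mu_1^*$, establish a CLT for the oracle by conditioning on the treatment assignments $\cT=\{T_i\}$, show that replacing $\mu_w^*$ by the cross-fitted $\hat\mu_w$ is negligible at the $n^{-1/2}$ scale, and finally verify consistency of $\hat\sigma_\deb^2$. Throughout write $m(X_i)=\mu_1^*(X_i)-\mu_0^*(X_i)$, and set $U_i:=Y_i(1)-(1-p)\mu_1^*(X_i)-p\,\mu_0^*(X_i)$ and $V_i:=Y_i(0)-(1-p)\mu_1^*(X_i)-p\,\mu_0^*(X_i)$, so that $\EE[U_i]-\EE[V_i]=\tau$ and $\sigma_\deb^2=\tfrac1p\Var(U_i)+\tfrac1{1-p}\Var(V_i)$. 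Let $\theta_\deb^{*(k)}$ denote $\hat\theta_\deb^{(k)}$ with each $\hat\mu_w^{(k)}$ replaced by $\mu_w^*$, and $\theta_\deb^*=\tfrac1K\sum_k\theta_\deb^{*(k)}$.

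\textbf{Oracle CLT.} Using Assumption~\ref{assump:treat} and the balanced splitting (fold sizes differ by $O(1)$), one checks that $\theta_\deb^*$ equals the global form $\tfrac1n\sum_i m(X_i)+\tfrac1{n_t}\sum_{T_i=1}(Y_i(1)-\mu_1^*(X_i))-\tfrac1{n_c}\sum_{T_i=0}(Y_i(0)-\mu_0^*(X_i))$ up to $O_P(1/n)$, so it suffices to linearize this form. Conditioning on $\cT$, the summands are fixed functions of i.i.d.\ units and the treated and control blocks are independent (disjoint units); a short expansion shows that the randomness of $n_t/n$ contributes a term proportional to $\sqrt n(n_t/n-p)\,\EE[m(X)]$ in the treated block that is \emph{exactly cancelled} by its control-block counterpart, leaving $\sqrt n(\theta_\deb^*-\tau)=\tfrac{\sqrt n}{n_t}\sum_{T_i=1}(U_i-\EE U_i)-\tfrac{\sqrt n}{n_c}\sum_{T_i=0}(V_i-\EE V_i)+o_P(1)$. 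The classical CLT within each block (conditionally on $\cT$), with $(n/n_t)\stackrel{P}{\to}p^{-1}$, $(n/n_c)\stackrel{P}{\to}(1-p)^{-1}$, and Slutsky, gives the conditional limit $N(0,\sigma_\deb^2)$; since it does not depend on $\cT$, a bounded-convergence argument on characteristic functions yields $\sqrt n(\theta_\deb^*-\tau)\stackrel{d}{\to}N(0,\sigma_\deb^2)$.

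\textbf{Negligibility of the regressor error (the crux).} This is the one step using only $L_2$-consistency, with \emph{no rate}. Writing $\hat\Delta_w=\hat\mu_w^{(k)}-\mu_w^*$ and letting $\bar\Delta_{w,t},\bar\Delta_{w,c}$ be the averages of $\hat\Delta_w(X_i)$ over the treated and control units of $\cD^{(k)}$, rearranging~\eqref{eq:db_count} gives the exact per-fold identity
\[
\hat\theta_\deb^{(k)}-\theta_\deb^{*(k)}=(1-\hat p_k)\big(\bar\Delta_{1,c}-\bar\Delta_{1,t}\big)+\hat p_k\big(\bar\Delta_{0,c}-\bar\Delta_{0,t}\big),\qquad \hat p_k:=n_{k,t}/n_k.
\]
Conditionally on $\cD^{(-k)}$ the function $\hat\Delta_w$ is deterministic and the held-out covariates are i.i.d.\ and independent of $\cT$, so $\bar\Delta_{w,t}$ and $\bar\Delta_{w,c}$ are \emph{both} unbiased for $\EE[\hat\Delta_w(X)\mid\cD^{(-k)}]$; their difference has conditional mean zero and conditional variance at most $\|\hat\Delta_w\|_2^2(n_{k,t}^{-1}+n_{k,c}^{-1})$. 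Hence
\[
n\,\EE\big[(\hat\theta_\deb^{(k)}-\theta_\deb^{*(k)})^2\mid\cD^{(-k)}\big]\lesssim \|\hat\Delta_1\|_2^2+\|\hat\Delta_0\|_2^2\stackrel{P}{\to}0
\]
by Assumption~\ref{assump:conv}, and conditional Chebyshev gives $\sqrt n(\hat\theta_\deb^{(k)}-\theta_\deb^{*(k)})\stackrel{P}{\to}0$. The point, which I expect to be the main obstacle to state cleanly, is that cross-fitting and the known constant propensity force the regressor error to enter only through a centered difference of treated and control averages of the \emph{same} function, so its $L_2$ norm tending to zero at any rate suffices. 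Summing the $K$ folds and combining with the oracle CLT by Slutsky yields $\sqrt n(\hat\theta_\deb-\tau)\stackrel{d}{\to}N(0,\sigma_\deb^2)$; the same holds for $\tilde\theta_\deb$ since $\hat\theta_\deb-\tilde\theta_\deb=O_P(1/n)$ from the unequal-fold normalizations.

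\textbf{Variance estimator and coverage.} The first summand of $\hat\sigma_\deb^2$ equals $\tfrac{n}{n_t}$ times the treated sample variance of $A_i$, with $n/n_t\stackrel{P}{\to}p^{-1}$, and $A_i-U_i=-(1-\hat p)\hat\Delta_1(X_i)-\hat p\,\hat\Delta_0(X_i)+(\hat p-p)m(X_i)$ for $\hat p=n_t/n$. Splitting $\tfrac1{n_t}\sum_{T_i=1}A_i^2$ into $\tfrac1{n_t}\sum_{T_i=1}U_i^2\stackrel{P}{\to}\EE[U_i^2]$ (law of large numbers, assuming finite second moments), a remainder $\tfrac1{n_t}\sum_{T_i=1}(A_i-U_i)^2$ whose conditional expectation is $O(\|\hat\Delta_1\|_2^2+\|\hat\Delta_0\|_2^2+(\hat p-p)^2)\stackrel{P}{\to}0$, and a Cauchy--Schwarz cross term, and similarly showing $\bar A\stackrel{P}{\to}\EE[U_i]$, gives treated sample variance of $A_i\to\Var(U_i)$ and hence the first summand $\to p^{-1}\Var(U_i)$; the second converges to $(1-p)^{-1}\Var(V_i)$ symmetrically, so $\hat\sigma_\deb^2\stackrel{P}{\to}\sigma_\deb^2$. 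Coverage is then immediate: Slutsky with the CLT gives $\sqrt n(\hat\theta_\deb-\tau)/\hat\sigma_\deb\stackrel{d}{\to}N(0,1)$ (and likewise for $\tilde\theta_\deb$), so the stated intervals attain asymptotic coverage $1-\alpha$.
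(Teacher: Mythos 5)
Your proposal is correct and follows essentially the same route as the paper's proof: reduce to an oracle statistic with $\mu_w^*$ in place of $\hat\mu_w$, show the substitution error has conditional mean zero and conditional second moment $o_P(1/n)$ given $\cD^{(-k)}$ and $\cT$ (so only $L_2$-consistency without a rate is needed), prove the CLT for the oracle by conditioning on $\cT$ and exploiting the cancellation of the $(n_t/n-p)\EE[m(X)]$ terms between the two blocks, and establish $\hat\sigma_\deb^2\pto\sigma_\deb^2$ by Cauchy--Schwarz decompositions of $A_i$ and $B_i$. Your per-fold identity $\hat\theta_\deb^{(k)}-\theta_\deb^{*(k)}=(1-\hat p_k)(\bar\Delta_{1,c}-\bar\Delta_{1,t})+\hat p_k(\bar\Delta_{0,c}-\bar\Delta_{0,t})$ is a cleaner packaging of the paper's cross-term expansion, but it is the same underlying argument.
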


\subsection{Optimality: semiparametric efficiency} \label{subsec:ct_optimal}

Returning to our motivating question, we establish the optimality of our procedure, building upon semiparametric statistics theory~\citep{bickel1993efficient,robins1994estimation,hahn1998role}. We are to show that the asymptotic variance $\sigma_\deb^2$ is no larger than any `regular' estimators (roughly speaking, those asymptotically linear ones with the form $\frac{1}{n}\sum_{i=1}^n \phi(X_i,Y_i,T_i)+o_P(1/\sqrt{n})$ for some function $\phi$,  including the ones obtained from linear regression as in~\cite{deng2013improving,guo2021machine,cohen2020no}). 
Due to the limit of paper length, we omit the formal backgrounds on semiparametric efficiency in the main text and defer the discussions to 
Appendix~\ref{app:var_bd_background}, 
which includes notions of regular non-parametric space and efficient influence functions. 

We impose the following condition on the consistency of the estimators $\hat\mu_1(\cdot),~\hat\mu_0(\cdot)$.

\begin{assumption}[Consistency]\label{assump:consist}
Let $\mu_1(x) = \EE[Y(1)\given X=x]$ and $\mu_0(x) = \EE[Y(0)\given X=x]$ be the two true mean functions. Suppose $\|\hat\mu^{(k)}_w - \mu_w\|_2 \stackrel{P}{\to}0$ for $w\in\{0,1\}$ and all $k\in[K]$.
\end{assumption}

Assumption~\ref{assump:consist} is a mild condition 
without any requirement of the convergence rates. 
This is in contrast to 
the conditions on convergence rates (though doubly-robust properties help relax them to lower orders) for observational studies \citep{nichols2007causal,schuler2017targeted,chernozhukov2018double} 
where the propensity score function $e(x) = \PP(T=1\given X=x)$ also needs to be estimated. Also, we do not need the convergence to be point-wise. 



\begin{theorem}[Semiparametric efficiency]
\label{thm:efficiency}
Suppose Assumptions~\ref{assump:treat} and~\ref{assump:consist} hold. Then the asymptotic variance of $\hat\theta_\deb$ and $\tilde\theta_\deb$ in Theorem~\ref{thm:asymp_var} 
is the semiparametric  variance bound for $\tau = \EE[Y(1)] - \EE[Y(0)]$. 
\end{theorem}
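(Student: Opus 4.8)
The plan is to show that, under Assumption~\ref{assump:consist}, the asymptotic variance $\sigma_\deb^2$ from Theorem~\ref{thm:asymp_var} coincides with the semiparametric efficiency bound $V^*$ for $\tau$ in the randomized design. First I would recall, from the semiparametric theory deferred to Appendix~\ref{app:var_bd_background} (following \citep{hahn1998role}), that the efficient influence function for $\tau = \EE[Y(1)] - \EE[Y(0)]$ is
\[
\psi(X,Y,T) = \frac{T}{p}\big(Y - \mu_1(X)\big) - \frac{1-T}{1-p}\big(Y - \mu_0(X)\big) + \mu_1(X) - \mu_0(X) - \tau,
\]
whose variance gives the bound
\[
V^* = \frac{1}{p}\EE\big[\sigma_1^2(X)\big] + \frac{1}{1-p}\EE\big[\sigma_0^2(X)\big] + \Var\big(\mu_1(X) - \mu_0(X)\big),
\]
where $\sigma_w^2(x) = \Var(Y(w)\given X = x)$. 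Assumption~\ref{assump:consist} forces the fixed limits $\mu_w^*$ of Assumption~\ref{assump:conv} to equal the truth, $\mu_w^* = \mu_w$, so it suffices to verify $\sigma_\deb^2 = V^*$ after substituting $\mu_w^* = \mu_w$ into the expression from Theorem~\ref{thm:asymp_var}.

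The core of the argument is an algebraic decomposition of the two variance terms in $\sigma_\deb^2$. For the treated term I would rewrite the centered summand as
\[
Y(1) - (1-p)\mu_1(X) - p\mu_0(X) = \big(Y(1) - \mu_1(X)\big) + p\big(\mu_1(X) - \mu_0(X)\big),
\]
and for the control term as
\[
Y(0) - (1-p)\mu_1(X) - p\mu_0(X) = \big(Y(0) - \mu_0(X)\big) - (1-p)\big(\mu_1(X) - \mu_0(X)\big).
\]
In each identity the residual $Y(w) - \mu_w(X)$ has conditional mean zero given $X$ and is therefore uncorrelated with any function of $X$; by the law of total variance the cross term vanishes and $\Var\big(Y(w) - \mu_w(X)\big) = \EE[\sigma_w^2(X)]$. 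Consequently the treated variance equals $\EE[\sigma_1^2(X)] + p^2\,\Var\big(\mu_1(X) - \mu_0(X)\big)$ and the control variance equals $\EE[\sigma_0^2(X)] + (1-p)^2\,\Var\big(\mu_1(X) - \mu_0(X)\big)$.

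Dividing these by $p$ and $1-p$ respectively and summing, the $\Var\big(\mu_1(X) - \mu_0(X)\big)$ contributions appear with coefficients $p$ and $1-p$, which combine to $1$; the result is exactly $V^*$, completing the proof. I expect the only genuine obstacle to be the first step, namely confirming that the displayed $V^*$ really is the efficiency bound for this particular design. Because the treatment is assigned by a \emph{known} $\text{Bernoulli}(p)$ mechanism, one must invoke the fact \citep{hahn1998role} that knowledge of the propensity score does not lower the semiparametric bound for the average treatment effect, so that the tangent-space computation still returns $V^*$. This is precisely the content supplied by Appendix~\ref{app:var_bd_background}; once it is in hand, the remainder is the routine variance bookkeeping sketched above.
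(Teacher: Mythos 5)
Your proposal is correct and follows essentially the same route as the paper's proof: both reduce to the case $\mu_w^*=\mu_w$ via Assumption~\ref{assump:consist}, quote the Hahn (1998) bound with $e(X)\equiv p$, and then verify $\sigma_\deb^2=V^*$ using exactly the decomposition $Y(w)-(1-p)\mu_1(X)-p\mu_0(X)=(Y(w)-\mu_w(X))\pm(\cdot)(\mu_1(X)-\mu_0(X))$ together with the orthogonality of the conditional-mean residual to functions of $X$. The algebra checks out, so no changes are needed.
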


Theorem~\ref{thm:efficiency}, together with Theorem~\ref{thm:asymp_var},
indicates that the covariates $X_i$ should 
be chosen to be powerful predictors for the outcomes, 
such that $\Var(Y(w)-\mu_w(X))$ is relatively small for $w\in \{0,1\}$. 
Since our estimator is finite-sample unbiased conditional on $\cT$, if we view $\cT$ as fixed, 
it has the smallest variance among all asymptotically unbiased regular estimators. 
This result might be of interest for two-sample mean test 
and completely randomized experiments as well; 
in the latter case, 
units in the treated and control 
groups still have i.i.d.~potential outcomes 
if we assume the units are i.i.d.~before treatment assignment.

%
We take a moment here to compare Theorem~\ref{thm:efficiency} 
to other machine-learning empowered methods in the literature~\citep{cohen2020no,guo2021machine,hosseini2019unbiased}. 
As indicated by~\eqref{eq:db_count} and~\eqref{eq:db_count_0}, our estimator is essentially a linear combination of $Y_i$, $\hat\mu_1(X_i)$ and $\hat\mu_0(X_i)$ that achieves smallest variance. 
Thus, it can be obtained from a linear regression 
when Assumption~\ref{assump:consist} holds. 
Referring to~\eqref{eq:db_count} and~\eqref{eq:db_count_0}, for optimality, one needs to include both $\hat\mu_1(X_i)$ and $\hat\mu_0(X_i)$ in the regression terms and run the regression separately on treated and control groups. Therefore, although machine learning helps to  exploit nonlinearity, the regression in \cite{guo2021machine} 
with one single predictor for $Y_i$ (not separately for $Y_i(1)$ and $Y_i(0)$) would not achieve optimality in general, unless $\mu_1(x)$ and $\mu_0(x)$ are completely colinear. This issue is the same for \cite{hosseini2019unbiased} where only one predictor is used. 
The method in \cite{cohen2020no} may be optimal, 
which, however, 
requires a slightly stronger condition of $L_4$-distance convergence.

\section{Variance reduction for ratio metrics: without SDA} \label{sec:rr}


We now 
study variance reduction procedures for ratio metrics introduced in Section~\ref{subsec:metric} whose denominator $Z_i=Z_i(T_i)$ can  be  changed  by  the  treatment (without SDA). 
The  results are new to the literature while sharing similar ideas to 
our results for count metrics.

\subsection{Estimation procedure}
The first step  
is  the $K$-fold sample splitting for $\cD = (Y_i,Z_i,T_i,X_i)_{i=1}^n$
as introduced in Section~\ref{subsec:count_procedure}. 
The second step, cross-fitting, needs 
careful consideration: for each $k\in[K]$, we use the data $\big\{(X_i,Z_i ,Y_i)\colon T_i=1, i\in\cD^{(-k)} \big\}$ to obtain estimators $\hat\mu_1^{Y,(k)}(x)$ for $\EE[Y(1)\given X=x]$ and $\hat\mu_1^{Z,(k)}(x)$ for $\EE[Z(1)\given X=x]$. Likewise, we use $\big\{(X_i,Z_i,Y_i)\colon T_i=0, i\in\cD^{(-k)} \big\}$ to obtain $\hat\mu_0^{Y,(k)}(x)$ and $\hat\mu_0^{Z,(k)}(x)$. Then, we 
calculate predictions $\hat\mu_w^Y(X_i) = \hat\mu_w^{Y,(k)}(X_i)$ and  $\hat\mu_w^Z(X_i) = \hat\mu_w^{Z,(k)}(X_i)$ for all $i\in \cD^{(k)}$, $w\in \{0,1\}$. 
Finally, we estimate $\delta$ in~\eqref{eq:rr_target} by
\#\label{eq:est_delta}
 \hat\delta &= \frac{ \sum_{i=1}^n A_i  }{ \sum_{i=1}^n B_i}- \frac{ \sum_{i=1}^n C_i  }{ \sum_{i=1}^n D_i}, 
\#
where $A_i = \hat\mu_1^Y(X_i ) + \frac{T_i}{\hat p}  \big(Y_i - \hat\mu_1^Y(X_i ) \big)$, 
$B_i =  \hat\mu_1^Z(X_i ) + \frac{T_i}{\hat p}  \big(Z_i - \hat\mu_1^Z(X_i ) \big)$,
$C_i =  \hat\mu_0^Y(X_i ) + \frac{1-T_i}{1-\hat p}  \big(Y_i - \hat\mu_0^Y(X_i ) \big)$,  and 
$D_i =  \hat\mu_0^Z(X_i ) + \frac{1-T_i}{1-\hat p}  \big(Z_i - \hat\mu_0^Z(X_i ) \big)$ for  $\hat p = n_t/n$.  
The procedure is summarized in Algorithm~\ref{alg:ratio}. 
Compared to  
$
\hat\delta_{\dm} := 
\frac{ \sum_{i~\textrm{treatment}}Y_i}{ \sum_{i~\textrm{treatment}} Z_i}
-
\frac{ \sum_{i~\textrm{control}}Y_i}{ \sum_{i~\textrm{control}} Z_i},
$
our estimator 
substitutes the sample means of  the treated and control groups with average of the fit-and-debias predictions for all $n$ units, which is similar to our estimator for count metrics.

\begin{algorithm}[h]
\caption{Debiased Variance Reduction for Ratio Metric without SDA}\label{alg:ratio}
\begin{algorithmic}[1]
\STATE Input: Dataset $\cD =\{(Y_i,X_i,Z_i,T_i\}_{i=1}^{n}$, number of folds $K$.
\STATE Randomly split $\cD$ into $K$ folds $\cD^{(k)}$, $k=1,\dots,K$.
\FOR{$k=1,\dots,K$} 
\STATE Use all $(X_i,Z_i,Y_i)$ with $T_i=1$ and $i\notin \cD^{(k)}$ to obtain $\hat\mu_1^{Y,(k)}(x)$ and $\hat\mu_1^{Z,(k)}(x)$; 
\STATE Use all $(X_i,Z_i,Y_i)$ with $T_i=0$ and $i\notin \cD^{(k)}$ to obtain $\hat\mu_0^{Y,(k)}(x)$ and $\hat\mu_0^{Z,(k)}(x)$;
\STATE Compute $\hat\mu_w^Y(X_i) = \hat\mu_w^{Y,(k)}(X_i)$ and  $\hat\mu_w^{Z}(X_i) = \hat\mu_w^{Z,(k)}(X_i)$ for all $i\in \cD^{(k)}$ and $w\in\{0,1\}$.
\ENDFOR
\STATE Compute estimator $\hat\delta$ on $\cD$ according to~\eqref{eq:est_delta}.
\end{algorithmic}
\end{algorithm}

\subsection{Asymptotic inference}

The analysis of ratio metrics is naturally asymptotic (\cite{deng2017trustworthy}), and thus 
we focus more on the asymptotic properties of the proposed estimator. 

We impose the following conditions on the treatment assignment mechanism and convergence of cross-fitted functions. In Assumption~\ref{assump:rr_conv}, we only require the convergence of estimated functions to deterministic functions, not the true conditional mean functions. This is a mild condition that holds for general machine learning regression methods. 

\begin{assumption}[Data Generating Process]
  \label{assump:rr_data}
  $(X_i,Z_i(0),Z_i(1),Y_i(0),Y_i(1))\iid \PP$ and the treatment assignments $T_i\iid \textrm{Bernoulli}(p)$ are independent of all other random variables. 
\end{assumption}

\begin{assumption}[Convergence]
\label{assump:rr_conv}
There exists some fixed fuctions $\mu_{1,Y}^*(\cdot)$, $\mu_{0,Y}^*(\cdot), \mu_{1,Z}^*(\cdot)$, $\mu_{0,Z}^*(\cdot)$, so that both $\|\hat\mu_{w}^{Y,(k)} - \mu_{w,Y}^*\|_2, \|\hat\mu_{w}^{Z,(k)} - \mu_{w,Z}^*\|_2\stackrel{P}{\to} 0$ for $w\in \{0,1\}$. 
\end{assumption}

In preparation for  inferential guarantees, 
we define the influence function 
\#
\phi_\delta(Y_i,Z_i,X_i,T_i) &= \frac{A_i^*}{\EE[Z_i(1)]} - \frac{\EE[Y_i(1)]}{\EE[Z_i(1)]^2} B_i^* - \frac{C_i^*}{\EE[Z_i(0)]} + \frac{\EE[Y_i(0)]}{\EE[Z_i(0)]^2} D_i^*,\label{eq:rr_var}
\# 
where $A_i^* =  \tilde\mu_{1,Y}^*(X_i ) + \frac{T_i}{p}  \big(Y_i - \EE[Y_i(1)]- \tilde\mu_{1,Y}^*(X_i ) \big),$
$B_i^*  =  \tilde\mu_{1,Z}^*(X_i ) + \frac{T_i}{p}  \big(Z_i -\EE[Z_i(1)] - \tilde\mu_{1,Z}^*(X_i ) \big)$,
$C_i^*  =  \tilde\mu_{0,Y}^*(X_i ) + \frac{1-T_i}{1-p}  \big(Y_i  - \EE[Y_i(0)] -  \tilde\mu_{0,Y}^*(X_i ) \big)$,
and $D_i^* =   \tilde\mu_{0,Z}^*(X_i ) + \frac{1-T_i}{1-p}  \big(Z_i - \EE[Z_i(0)]- \tilde\mu_{0,Z}^*(X_i ) \big)$. 
Here $\tilde \mu_{w,Y}^*(X_i) = \mu_{w,Y}^*(X_i) - \EE[\mu_{w,Y}^*(X_i)]$ and $\tilde \mu_{w,Z}^*(X_i) = \mu_{w,Z}^*(X_i) - \EE[\mu_{w,Z}^*(X_i)]$, $w\in \{0,1\}$ are the centered limiting functions. The following theorem establishes the asymptotic confidence intervals.

\begin{theorem}\label{thm:rr_inference}
Suppose Assumptions~\ref{assump:rr_data} and~\ref{assump:rr_conv} hold, and let $\hat\delta$ be the output of Algorithm~\ref{alg:ratio}. Then $\sqrt{n}(\hat\delta - \delta) \stackrel{d}{\to} N(0,\sigma_\delta^2)$, where $\sigma_\delta^2 = \Var(\phi_\delta(Y_i,Z_i,X_i,T_i))$. 
Moreover, define the variance estimator 
$\hat\sigma_\delta^2 = \frac{1}{n } \sum_{T_i=1} \big(d_{1,i}  - \bar{d}_1\big)^2 + \frac{1}{n} \sum_{T_i=0} \big(d_{0,i}  - \bar{d}_0\big)^2$, where 
$
d_{1,i} = -\frac{n_c}{n_t \bar{Z}(1)}  \hat\mu_{1}^Y(X_i) + \frac{n}{n_t \bar{Z}(1)} Y_i
 + \frac{n_c}{n_t} \frac{\bar{Y}(1)}{\bar{Z}(1)^2}   \hat \mu_{1}^Z (X_i) 
 - \frac{n}{n_t} \frac{\bar{Y}(1)}{\bar{Z}(1)^2}  Z_i
 - \frac{1}{\bar{Z}(0)}  \hat\mu_{0}^Y (X_i) + \frac{\bar{Y}(0)}{\bar{Z}(0)^2}  \hat\mu_{0}^Z (X_i),
$ and
$
d_{0,i} =  \frac{1}{\bar{Z}(1)}  \hat\mu_{1}^Y   (X_i) - \frac{\bar{Y}(1)}{\bar{Z}(1)^2}  \hat\mu_{1 }^Z(X_i) - \frac{n}{n_c\bar{Z}(0)} Y_i
+ \frac{n_t}{n_c \bar{Z}(0)} \hat\mu_0^Y(X_i) + \frac{n }{n_c} \frac{\bar{Y}(0)}{\bar{Z}(0)^2} Z_i
- \frac{n_t}{n_c}\frac{\bar{Y}(0)}{\bar{Z}(0)^2} \hat\mu_0^Z(X_i).
$
Then $\hat\delta \pm \hat\sigma_\delta \cdot z_{1-\alpha/2}/\sqrt{n}$ is an asymptotically valid $(1-\alpha)$ confidence interval for $\delta$.
\end{theorem}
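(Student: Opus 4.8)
The plan is to treat $\hat\delta = g(\bar A,\bar B,\bar C,\bar D)$ as a smooth ratio functional of the four sample averages $\bar A=\frac1n\sum_i A_i,\ \bar B=\frac1n\sum_i B_i,\ \bar C=\frac1n\sum_i C_i,\ \bar D=\frac1n\sum_i D_i$, where $g(a,b,c,d)=a/b-c/d$, and to obtain the limit by a joint central limit theorem for the averages followed by the delta method. Note $\delta = g(\EE[Y(1)],\EE[Z(1)],\EE[Y(0)],\EE[Z(0)])$, and the gradient of $g$ at that point has components $1/\EE[Z(1)]$, $-\EE[Y(1)]/\EE[Z(1)]^2$, $-1/\EE[Z(0)]$, $\EE[Y(0)]/\EE[Z(0)]^2$, which are exactly the coefficients multiplying $A_i^*,B_i^*,C_i^*,D_i^*$ in the influence function $\phi_\delta$. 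So if each average is asymptotically linear with influence contributions $A_i^*,\dots,D_i^*$, the delta method will automatically produce $\sigma_\delta^2=\Var(\phi_\delta)$.

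First I would establish the expansion $\sqrt n(\bar A-\EE[Y(1)]) = \frac1{\sqrt n}\sum_i A_i^* + o_P(1)$, and analogously for $\bar B,\bar C,\bar D$. I would split $A_i$ into an \emph{oracle} part that replaces $\hat\mu_1^Y$ by its limit $\mu_{1,Y}^*$ and a \emph{regressor-error} part $R_A=\frac1n\sum_i(\hat\mu_1^Y(X_i)-\mu_{1,Y}^*(X_i))(1-T_i/\hat p)$. For the oracle part, because the estimator uses the empirical $\hat p=n_t/n$ (for which $\frac1n\sum_i T_i/\hat p = \hat p/\hat p = 1$ holds identically), a short Slutsky/Taylor expansion of the $1/\hat p$ factor around $p$ shows that the per-unit influence contribution of the oracle part is exactly $A_i^*$; the fluctuation of $\hat p$ around $p$ is precisely the piece that the centering built into $A_i^*$ absorbs.

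The crux is controlling the regressor-error term and showing $\sqrt n\,R_A=o_P(1)$ with \emph{no} convergence-rate assumption on $\hat\mu$. Here I would use cross-fitting: conditional on $\cD^{(-k)}$ the gap $g^{(k)}:=\hat\mu_1^{Y,(k)}-\mu_{1,Y}^*$ is a fixed function, and (fold by fold) $R_A$ is the difference between the full-sample average and the treated-sample average of $g^{(k)}(X_i)$ over $\cD^{(k)}$. Since $T_i\perp X_i$ in a randomized experiment, both averages share the same conditional mean $\EE[g^{(k)}(X)\mid\cD^{(-k)}]$, so their difference is conditionally mean-zero with conditional variance $O(\|g^{(k)}\|_2^2/n)$; conditional Chebyshev then gives $\sqrt n\,R_A = O_P(\|g^{(k)}\|_2)=o_P(1)$ under Assumption~\ref{assump:rr_conv}. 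This is the step where randomization removes the usual first-order bias and lets mere consistency suffice, and I expect it to be the main obstacle to state cleanly (including the bookkeeping of the global $\hat p$ against fold sizes $n_k,n_{k,t}$, all of which contribute lower-order terms). With the four expansions in hand, the i.i.d.\ vectors $(A_i^*,B_i^*,C_i^*,D_i^*)$ give a joint CLT, and the delta method yields $\sqrt n(\hat\delta-\delta)\stackrel d\to N(0,\sigma_\delta^2)$ with $\sigma_\delta^2=\Var(\phi_\delta)$; I would also verify via the LLN and $\EE[Z(w)]\neq 0$ that the denominators $\bar B,\bar D$ stay bounded away from zero.

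For the variance estimator I would show that the population limits of $d_{1,i}$ and $d_{0,i}$ (replacing the sample means by their limits, $\hat\mu$ by $\mu^*$, and $n_t/n,n_c/n$ by $p,1-p$) coincide with $\phi_\delta$ restricted to the treated and control groups respectively, up to additive constants. A key algebraic check is that the debiasing structure together with $T\perp X$ forces $\EE[\phi_\delta\mid T=1]=\EE[\phi_\delta\mid T=0]=0$ (each of $A_i^*,\dots,D_i^*$ has vanishing conditional mean in either group). Consequently the between-group term in the law of total variance vanishes, and the group-wise centered sample variances, each normalized by $1/n$, sum to $p\,\Var(\phi_\delta\mid T=1)+(1-p)\,\Var(\phi_\delta\mid T=0)=\Var(\phi_\delta)$. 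Consistency $\hat\sigma_\delta^2\pto\sigma_\delta^2$ then follows from the LLN for the plug-in sample means and an $L_2$-continuity argument for the $\hat\mu$-dependent squares (again using cross-fitting independence and Assumption~\ref{assump:rr_conv}, together with the implicit finite second moments of $Y,Z$ and $\mu^*$). Finally Slutsky's theorem combines the two parts into $\sqrt n(\hat\delta-\delta)/\hat\sigma_\delta\stackrel d\to N(0,1)$, which gives the stated $(1-\alpha)$ confidence interval.
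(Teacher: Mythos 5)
Your proposal is correct and follows essentially the same route as the paper's proof: an oracle decomposition of $\bar A,\bar B,\bar C,\bar D$ with the regressor error killed by a conditional mean-zero/variance argument under cross-fitting, the $\hat p$ fluctuation absorbed by the centering in $A_i^*,\dots,D_i^*$, the delta method for the ratio functional, and consistency of $\hat\sigma_\delta^2$ via the group-wise variance decomposition (with vanishing conditional means) plus $L_2$-continuity of the plug-ins. No substantive differences or gaps.
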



\subsection{Optimality}
Our estimator is optimal (semiparametric efficient) when the estimated functions are consistent. 
We begin with  mild convergence assumptions. 



\begin{assumption}\label{assump:rr_consistency}
$\|\hat\mu^{Y,(k)}_w - \mu_{w,Y}\|_2 \stackrel{P}{\to}0$ and $\|\hat\mu^{Z,(k)}_0 - \mu_{w,Z}\|_2 \stackrel{P}{\to}0$ for all $k\in[K]$ and $w\in \{0,1\}$, where 
$\mu_{w,Y}(x) = \EE[Y(w)\given X=x]$ and $\mu_{w,Z}(x) = \EE[Z(w)\given X=x]$.
\end{assumption}

We are to show that the efficient influence function for the estimation of $\delta$ is given by  
\#\label{eq:rr_eff_infl}
\phi_{\delta}^\dag&(Y_i,Z_i,X_i,T_i) = \frac{A_i^{\dag}}{\EE[Z_i(1)]} - \frac{\EE[Y_i(1)]}{\EE[Z_i(1)]^2} B_i^{\dag} - \frac{C_i^{\dag}}{\EE[Z_i(0)]} + \frac{\EE[Y_i(0)]}{\EE[Z_i(0)]^2} D_i^{\dag},
\#
where
$A_i^{\dag}  =   \mu_{1,Y} (X_i ) + \frac{T_i}{p}  \big(Y_i  - \mu_{1,Y} (X_i ) \big) - \EE[Y_i(1)]$, 
$B_i^{\dag} =   \mu_{1,Z} (X_i ) + \frac{T_i}{p}  \big(Z_i  - \mu_{1,Z} (X_i ) \big)-\EE[Z_i(1)]$, 
$C_i^{\dag}  = \mu_{0,Y}^*(X_i )  + \frac{1-T_i}{1-p}  \big(Y_i   -  \mu_{0,Y} (X_i ) \big)- \EE[Y_i(0)]$, 
$D_i^{\dag} = \mu_{0,Z}^*(X_i ) + \frac{1-T_i}{1-p}  \big(Z_i -  \mu_{0,Z} (X_i ) \big)- \EE[Z_i(0)]$. 
The following theorem shows that under consistency, the asymptotic variance of $\hat\delta$ coincides with the variance of $\phi_{\delta,\dag}$, which is also the efficient variance bound. Thus, the optimality of the proposed procedure is established under appropriate conditions. 

\begin{theorem}\label{thm:rr_optimal}
Suppose Assumptions~\ref{assump:rr_data} and~\ref{assump:rr_consistency} hold. Then it holds that $\sqrt{n}(\hat\delta - \delta)\stackrel{d}{\to} N(0,\sigma_{\delta,\dag}^2)$, where 
$
\sigma_{\delta,\dag}^2 = \Var\big( \phi_{\delta,\dag}(Y_i,Z_i,X_i,T_i)\big) 
$ (c.f.~\eqref{eq:rr_var})
is the semiparametric asymptotic variance bound for $\delta = \EE[Y(1)]/\EE[Z(1)] - \EE[Y(0)]/\EE[Z(0)]$.  
\end{theorem}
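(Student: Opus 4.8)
The plan is to separate the claim into (a) the $\sqrt n$-asymptotic normality of $\hat\delta$ and (b) the identification of its limiting variance with the semiparametric bound. For (a) I would observe that Assumption~\ref{assump:rr_consistency} is exactly Assumption~\ref{assump:rr_conv} with the deterministic limits pinned down to the true conditional means, i.e.~$\mu_{w,Y}^*=\mu_{w,Y}$ and $\mu_{w,Z}^*=\mu_{w,Z}$. Hence Theorem~\ref{thm:rr_inference} applies verbatim and yields $\sqrt n(\hat\delta-\delta)\stackrel{d}{\to}N(0,\sigma_\delta^2)$. It then remains to check that, once the limits are the true means, the influence function $\phi_\delta$ in~\eqref{eq:rr_var} coincides with $\phi_{\delta,\dag}$ in~\eqref{eq:rr_eff_infl}. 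This is a routine substitution: using $\EE[\mu_{1,Y}^*(X)]=\EE[Y(1)]$ (and the analogous identities for the other three functions), the centered limit becomes $\tilde\mu_{1,Y}^*(X)=\mu_{1,Y}(X)-\EE[Y(1)]$, so that $A_i^*=A_i^\dag$, and likewise $B_i^*=B_i^\dag$, $C_i^*=C_i^\dag$, $D_i^*=D_i^\dag$. Therefore $\phi_\delta=\phi_{\delta,\dag}$ and $\sigma_\delta^2=\sigma_{\delta,\dag}^2$, which gives the stated limit law.

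The substance of the theorem is the efficiency claim, for which I would show that $\phi_{\delta,\dag}$ is the efficient influence function for $\delta$ in the model of Assumption~\ref{assump:rr_data}. The cleanest route is to write $\delta=g(\theta)$ with $\theta=(\EE[Y(1)],\EE[Z(1)],\EE[Y(0)],\EE[Z(0)])^\top$ and $g(a,b,c,d)=a/b-c/d$, and to invoke the standard fact that the efficient influence function of a smooth transformation is the gradient of $g$ applied to the joint efficient influence function of $\theta$. Computing $\nabla g(\theta)=\big(1/\EE[Z(1)],\,-\EE[Y(1)]/\EE[Z(1)]^2,\,-1/\EE[Z(0)],\,\EE[Y(0)]/\EE[Z(0)]^2\big)^\top$ and pairing it with the component influence functions $(A^\dag,B^\dag,C^\dag,D^\dag)$ reproduces exactly~\eqref{eq:rr_eff_infl}. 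So the problem reduces to establishing that $(A^\dag,B^\dag,C^\dag,D^\dag)$ is the joint efficient influence function of the four means.

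For that step I would characterize the tangent space of the observed-data model. Because $T_i\iid\mathrm{Bernoulli}(p)$ is independent of $(X,Y(0),Y(1),Z(0),Z(1))$, the observed likelihood factorizes into the marginal of $X$, the Bernoulli factor for $T$, and the two treatment-specific conditional laws of $(Y,Z)$ given $X$; correspondingly the tangent space is the orthogonal direct sum of (i) mean-zero functions $s(X)$, (ii) the span of $T-p$, (iii) scores $T\,b_1(X,Y,Z)$ with $\EE[b_1\mid X,T=1]=0$, and (iv) scores $(1-T)\,b_0(X,Y,Z)$ with $\EE[b_0\mid X,T=0]=0$. I would then verify that each component, e.g.~$A^\dag=\mu_{1,Y}(X)-\EE[Y(1)]+\tfrac{T}{p}(Y-\mu_{1,Y}(X))$, lies in this tangent space (its first piece is a mean-zero function of $X$, its second piece has the conditional-mean-zero form of (iii)) and satisfies the pathwise-derivative identity $\frac{d}{dt}\EE_{P_t}[Y(1)]\big|_{0}=\EE[A^\dag s]$ along every regular submodel; an influence function lying in the tangent space is necessarily the efficient one. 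A point worth flagging is that the $(T-p)$ direction is irrelevant: $\delta$ and each mean are functionals of the outcome and covariate laws only, so their pathwise derivatives in the $p$-direction vanish, which is why constant (and even unknown) propensity does not enter the bound and $\phi_{\delta,\dag}$ carries no $T-p$ term.

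The main obstacle is this efficiency argument rather than the distributional limit, which is essentially inherited from Theorem~\ref{thm:rr_inference}. Concretely, the delicate points are the justification of the delta-method/chain-rule step for the ratio functional $g$, which needs $g$ differentiable at $\theta$ with $\EE[Z(w)]\neq0$ (guaranteed since the denominators are nonzero), and the tangent-space bookkeeping showing that the four component influence functions are jointly efficient and that their orthogonal decomposition is respected. Once these are in place, $\Var(\phi_{\delta,\dag})$ is the projection-based variance bound and the optimality of $\hat\delta$ follows.
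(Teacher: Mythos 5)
Your proposal is correct and follows essentially the same route as the paper: the CLT is inherited from Theorem~\ref{thm:rr_inference} once Assumption~\ref{assump:rr_consistency} pins the limiting functions to the true conditional means (so $A_i^*=A_i^\dag$, etc.), and the efficiency claim is established by the standard tangent-space/pathwise-derivative/projection argument. The paper's Theorem~\ref{thm:rr_var_bound} carries out the same computation you outline --- your $\nabla g$ chain-rule step appears there as the quotient rule applied when differentiating $\delta(\theta)$ along submodels, followed by a termwise verification of the four component Riesz identities --- only stated for a general propensity $e(x)$ and then specialized to $e(x)\equiv p$.
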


Theorem~\ref{thm:rr_optimal} is based on
a more general result (Theorem~\ref{thm:rr_var_bound} in the appendix) in situations 
where the treatment assignment might  depend on $X$. 
It may be of independent interest for 
efficient estimation of ratio metrics in 
stratified experiments and observational studies. 


\subsection{Special case: optimal linear adjustment for ratio metrics} \label{subsec:linear_rr}

As a special case of Alg.~\ref{alg:ratio}, 
we derive an optimal \emph{linear adjustment} method 
for ratio metrics, which is computationally efficient and has several advantages over the existing approaches in the literature.

We suppose $X_i\in \RR^p$ for some fixed $p$ (with intercept). In Alg.~\ref{alg:ratio},
let $\hat\mu_w^{Y,(k)}(x) = \hat\beta_{Y,w,(k)}^\top x$ for all $w\in\{0,1\}$ and $k=1,\dots,K$, where $\hat\beta_{Y,w,(k)}$ is the OLS coefficient of $\{Y_i\colon T_i=w,i\in \cD^{(-k)}\}$ on $\{X_i\colon T_i=w,i\in \cD^{(-k)}\}$. The fitted function $\hat\mu_w^{Z,(k)}(x)$ can be similarly obtained. 
As we have discussed at the end of Section~\ref{subsec:count_procedure},
in this fixed-$p$ setting, because of 
the low complexity of linear function classes,
  the cross-fitting step may not be needed and we can further simplify the approach by letting $\hat\mu_w^Y(x) = \hat\beta_{Y, w}^\top x$ in Algorithm~\ref{alg:ratio}, 
 where $\hat\beta_{Y,w}$ is the empirical OLS coefficient of $\{Y_i\colon T_i=w\}$ 
 on $\{X_i\colon T_i=w\}$ for $w\in \{0,1\}$, i.e.,
 running linear regressions separately on treated and control groups 
 without sample splitting. 
 The imputations $\hat\mu_w^Z(x)$ can be similarly obtained. 
 One can show that the two estimators with or without sample splitting are asymptotically equivalent due to the convergence property of linear regression coefficients. 
Both estimators admit the asymptotic linear expansion (up to additive constants)
\$
\frac{1}{n} \sum_{i=1}^n \Big( \alpha_{1,*}^\top X_i - \alpha_{0,*}^\top X_i + \frac{T_i}{p}\big(\Gamma_i - \alpha_{1,*}^\top X_i )   - \frac{1-T_i}{1-p}\big(\Gamma_i - \alpha_{0,*}^\top X_i )  \Big) + o_P(1/\sqrt{n}),
\$
where $\Gamma_i = \frac{T_i Y_i}{\EE[Z(1)]} - \frac{T_i Z_i \EE[Y(1)]}{\EE[Z(1)]^2} - \frac{(1-T_i) Y_i}{\EE[Z(0)]} + \frac{(1-T_i) Z_i \EE[Y(0)]}{\EE[Z(0)]^2} $, 
and $\alpha_{1,*}$, $\alpha_{0,*}$ are the population OLS coefficients 
of $\Gamma_i$ on $X_i$ in treated and control groups. 
Because of the ``agnostic'' property of linear adjustment~\citep{lin2013agnostic}, 
the variance of our estimator is always no larger than that of the diff-in-mean estimator. 
Moreover, our estimator achieves 
semiparametric efficiency if the actual conditional means are linear. 
Our linear adjustment approach improves  
in several aspects upon 
the ratio-metric extention of CUPED~\citep{deng2013improving}: 
\#\label{eq:rr_cuped}
\frac{\sum_{T_i=1}Y_i}{\sum_{T_i=1}Z_i} -  \frac{\sum_{T_i=0}Y_i}{\sum_{T_i=0}Z_i} -  \hat\theta \cdot \Bigg( \frac{\sum_{T_i=1}\tilde Y_i}{\sum_{T_i=1}\tilde Z_i} -\frac{\sum_{T_i=0}\tilde Y_i}{\sum_{T_i=0}\tilde Z_i} \Bigg)
\#
for some $\hat\theta\in \RR$, where $\tilde{Y}_i$ and $\tilde{Z}_i$ are pre-treatment versions of $Y_i$ and $Z_i$. 

\begin{enumerate}[(i)]
\item 
Firstly, the estimator~\eqref{eq:rr_cuped} only works with pre-treatment metrics as covariates, while our method incorporates arbitrary covariates, hence more flexible and powerful. 
\item
Secondly, our method provably reduces variance compared to the diff-in-mean estimator because it runs
separate regressions in the treated and control groups. 
A single $\hat\theta$ in~\eqref{eq:rr_cuped} is not guaranteed to reduce variance in some adversarial settings~\citep{lin2013agnostic}.
\item 
Thirdly, the linear expansion of the estimator in~\eqref{eq:rr_cuped} (assuming $\hat\theta\stackrel{P}{\to} \theta$) is 
$
\frac{1}{n} \sum_{i=1}^n \big[ \frac{T_i}{p}\big(\Gamma_i - \theta \tilde\Gamma_i\big) - \frac{1-T_i}{1-p} \big(\Gamma_i - \theta\tilde\Gamma_i \big) \big]+ o_P(1/\sqrt{n}),
$
where $\tilde\Gamma_i$ is similarly defined as $\Gamma_i$ with $Y_i,Z_i$ replaced by $\tilde{Y}_i,\tilde{Z}_i$, and $\theta$ is the OLS coefficient of $\Gamma_i$ on $\tilde\Gamma_i$.
Our estimator achieves the OLS projection of $\Gamma_i$ on the whole linear space of $X_i$, 
while~\eqref{eq:rr_cuped} only projects on the linear space of $\tilde\Gamma_i$, 
a subspace of $X_i$ 
when $X_i$ contains $\tilde{Y}_i,\tilde{Z}_i$. Therefore, our estimator achieves more reduction of variance. 
\end{enumerate}


\section{Variance reduction for ratio metrics: with SDA}\label{sec:r}

In this section, we consider the second type of ratio metrics introduced in Section~\ref{subsec:metric} whose denominator $Z$ is assumed to be stable (SDA). 
Due to the page limit, 
we only describe the procedure here in Algorithm~\ref{alg:ratio'}
for the reference of practitioners. 
Inference and optimality guarantees 
can be found in Appendix~\ref{app:sda}.
We adopt the similar plug-in-and-debias idea 
as in Sections~\ref{sec:count} and~\ref{sec:rr},
while here we pool all samples to estimate $\EE[Z]$, 
and fit the conditional mean functions for $Y$ based on $(X,Z)$ separately in two groups. 

\begin{algorithm}[h]
\caption{Debiased Variance Reduction for Ratio Metric with Stable $Z$}\label{alg:ratio'}
\begin{algorithmic}[1]
\STATE Input: Dataset $\cD =\{(Y_i,X_i,Z_i,T_i\}_{i=1}^{n}$, number of folds $K$.
\STATE Randomly split $\cD$ into $K$ folds $\cD^{(k)}$, $k=1,\dots,K$.
\FOR{$k=1,\dots,K$} 
\STATE Use $\{(X_i,Z_i,Y_i)\colon T_i=1,i\notin \cD^{(k)}\}$ to obtain $\hat\mu_1^{(k)}(x,z)$ for $\EE[Y(1)\given X=x,Z=z]$; 
\STATE Use  $\{(X_i,Z_i,Y_i)\colon T_i=0,i\notin \cD^{(k)}\}$ to obtain $\hat\mu_0^{(k)}(x,z)$ for $\EE[Y(0)\given X=x,Z=z]$;
\STATE Compute $\hat\mu_1(X_i,Z_i) = \hat\mu_1^{(k)}(X_i,Z_i)$ and  $\hat\mu_0(X_i) = \hat\mu_0^{(k)}(X_i,Z_i)$ for all $i\in \cD^{(k)}$.
\ENDFOR
\STATE Compute $\hat p= n_t/n$ and $\Gamma_i = \hat\mu_1(X_i,Z_i) - \hat\mu_0(X_i,Z_i)  + \frac{T_i}{\hat p}  \big(Y_i - \hat\mu_1(X_i,Z_i) \big) - \frac{1-T_i}{1-\hat p}  \big(Y_i - \hat\mu_0(X_i,Z_i)\big)$ for all $i\in \cD$.
\STATE Compute estimator $\hat\delta' = \frac{ \sum_{i=1}^n \Gamma_i  }{ \sum_{i=1}^n Z_i}$.
\end{algorithmic}
\end{algorithm}

\section{Simulation studies} \label{sec:simulation}

We conduct simulations to demonstrate the performance 
of all proposed optimal variance reduction procedures, 
based on which we offer practical suggestions.

\subsection{Count metrics}\label{subsec:simu_count}

In this section, we use simulations to validate  
whether Alg.~\ref{alg:debias} can outperform state-of-the-art
methods when $Y$ and $X$ have a nonlinear relationship,
while achieving comparable performance when the relationship is indeed linear (in which case linear adjustment is optimal). 

We design one nonlinear and one linear data-generating processes where $X\in \RR^d$
for $d\in\{10,100\}$.
The sample size is fixed at $n=10000$ and we generate $X_i\iid N(0,I_d)$ for $d\in\{10,100\}$ except for a categorical variable $X_6\sim \text{Unif}\{1,2,\dots,10\}$ to represent both continuous and categorical covariates. 
Let treatments $T_i\iid$ Bernoulli$(0.5)$ and the outcomes are generated by
$
Y_i = b(X_i) + T_i \cdot \tau(X_i) + \epsilon_i,
$
where $\epsilon_i \iid N(0,1)$ is the random noise. 
In the nonlinear setting  
inspired by~\cite{guo2021machine,friedman1991multivariate}, 
we define the conditional treatment effect $\tau(x) =10 x_1 +5 \log(1+\exp(x_2))
+ \ind\{x_6\in\{1,5,9\}\}$
and baseline  $b(x) = 10\sin(\pi \cdot x_1x_2) + 20(x_3-0.5)^2 + 10 x_4 + 5 \ind\{x_6\in\{1,5,9\}\}$.
In the linear setting, 
we specify  $b(x) = \beta^\top x $ and 
$\tau(x) = 1+ \delta^\top x+ \ind\{x_6\in\{1,5,9\}\}$, where 
$
\beta = (5.31, 1.26, 3.12, -0.85,0,\dots,0)^\top,$ $\delta = (1.26,-3.14,0,\dots,0)^\top \in \RR^d. 
$
The ground truth 
is $\tau=4.34$ for the nonlinear setting 
and $\tau=0.303$ for the linear setting. 
In $N=1000$ independent runs, 
we evaluate the estimated 
standard deviation and 
the empirical coverage of the 0.95-confidence interval. 
Valid empirical coverage would certify the validity of the inference procedure,
under which smaller estimated standard deviation indicates shorter confidence intervals and higher efficiency.

We compare our method to the diff-in-mean (DiM) estimator 
and the popular CUPED estimator.
To demonstrate the performance with different types of machine learning algorithms, 
we use the 
random forest regressor (RF), gradient boosting (GB) 
and neural networks (NN, two layers) 
from \texttt{scikit-learn} python library, 
all without manual model tuning. 
The CUPED method we implement 
is equivalent to~\cite{lin2013agnostic} which uses
multivariate
covariates and runs
separate regressions in the treated and control groups.
It has better performance than 
the vanilla version of~\cite{deng2013improving}, 
and is  optimal in the  linear setting.
Results averaged over $1000$ replicates are summarized in Table~\ref{tab:simu}. 



\begin{table}[htbp]
\centering
{\footnotesize
\begin{tabular}{c|c |c | c|c |c |c|c |c|c}
\toprule
\multirow{2}{*}{Setting} 
&    \multicolumn{6}{c|}{Algorithm~\ref{alg:debias} } 
&\multicolumn{2}{c|}{ CUPED } 
  &   DiM   \\
\cline{2-10}
&  \multicolumn{3}{c|}{ Var.Red\%.} & \multicolumn{3}{c|}{ Emp.Cov. }
& \multirow{2}{*}{Var.Red\%.} &  \multirow{2}{*}{Emp.Cov.} 
  & \multirow{2}{*}{Emp.Cov.} 
\\
\cline{2-7}
& RF & GB & NN & RF & GB & NN & & &  \\
\hline 
Lin, $d=10$ &   \textbf{90.51 } 
&  91.91  & 91.80 
& 0.950  & 0.958   &  0.947 
& 92.12 & 0.941    &  0.951   \\
  \hline
Lin, $d=100$ &   \textbf{89.92} 
&   91.48  & 88.31 
& 0.951   & 0.944 & 0.951
& 92.12 
& 0.945   &  0.949   \\
  \hline
Nonlin, $d=10$ & \textbf{88.51}  
& 89.59   & 67.57 
&   0.953     &   0.943  & 0.949
&  34.11  & 0.944   &  0.945  \\
  \hline
Nonlin, $d=100$ & \textbf{88.00} 
& 90.42   & 73.48 
& 0.956   &   0.953  & 0.960
& 37.62 & 0.936     &  0.941   \\
\bottomrule
\end{tabular}
}%

\caption{ Variance reduction \% compared to DiM (Var. red\%.) and the 
empirical coverage of the 0.95-C.I.s (Emp.\,cov.) for all methods 
in the linear (Lin) and nonliner (Nonlin) settings. }
\label{tab:simu}
\end{table}

From Table~\ref{tab:simu}, the results for linear setting shows that our proposed Alg.~\ref{alg:debias} is valid
and achieves comparable efficiency as the optimal linear method.
The results for nonlinear setting shows Alg.~\ref{alg:debias} has much higher efficiency than the state-of-the-art 
linear method, as machine learning regressors are capable 
to capture nonlinear dependencies. 
The variance reduction of NN 
is smaller than that of RF and GB, which probably is because NN has lower prediction accuracy due to the lack of model tuning. 

\subsection{Ratio metrics}\label{subsec:simu_ratio}

Variance reduction for ratio metrics is rarely studied in the literature and existing benchmarks are scarce.
In this section, we conduct simulations to evaluate the performance of our 
methods and some alternative solutions.

We design two data generating processes,
one satisfying the SDA introduced in Section~\ref{subsec:metric} and one does not. 
In both settings, 
the  marginal distributions of $X_i$ and $Z_i$ 
are the same, 
and $Y_i =b(X_i,Z_i) + T_i \cdot \tau(X_i,Z_i) + \epsilon_i$ for 
$\epsilon_i\iid N(0,1)$, where we define 
$
b(x,z) = \big(1.5+\sin(\pi x_1 x_2)\big) \cdot z + 0.5 x_4^2$ and
$
\tau(x,z) = 0.5 z\cdot \big( x_1 + \log(1+e^{x_3})\big) 
+ 0.2 \ind\{x_6\in\{1,5,9\}\}.
$ 
The only difference is whether $Z_i$ is influenced by the treatment. 
In both settings, we generate $X_i\iid N(0,I_d)$ for $d\in\{10,100\}$ 
except for a categorical variable $X_6\sim \text{Unif}\{1,2,\dots,10\}$.
To illustrate the impact of imposing the SDA on $Z$, we test Alg.~\ref{alg:ratio} and~\ref{alg:ratio'} in both settings. 


Setting 1 satisfies the SDA, 
and 
$
\delta = \delta' =  \EE[Y(1)-Y(0)] / \EE[Z].
$
We generate  $Z_i=\log(1+\exp(1+X_{i,1})) +D_i ( 0.2X_{i,3}^2+0.1\ind\{x_6\in\{1,5,9\}\})$ for 
 $D_i\iid$ Bernoulli$(0.5)$ to ensure $Z_i>0$.    
We compare our Alg.~\ref{alg:ratio} and~\ref{alg:ratio'} to 
the difference-in-mean estimator 
$\hat\delta'_{\textrm{DiM}} = \big(\frac{1}{n_t}\sum_{T_i=1}Y_i -  \frac{1}{n_c}\sum_{T_i=0}Y_i\big) /\big(\frac{1}{n}\sum_{i=1}^nZ_i\big)$  
as well as $\hat\delta'_{\textrm{CUPED}} = \hat\theta_{\textrm{CUPED}}/\big(\frac{1}{n}\sum_{i=1}^nZ_i\big)$, 
where $\hat\theta_{\textrm{CUPED}}$ is the 
linear adjustment estimator for count metrics (multivariate and separate regression version). 
Both of them pool all $Z_i$ to estimate the denominator and 
apply methods for count metrics to estimate the numerator,
hence comparable to Alg.~\ref{alg:ratio'}. 
The true estimands are 
$\delta = \delta'= 0.641$. 

Setting 2 does not satisfy the SDA, and 
the two estimands $\delta$ and $\delta'$ are distinct.
The only difference to setting 1 is that we generate 
$Z_i = \log(1+\exp(X_{i,1})) + T_i (0.2 X_{i,3}^2+0.1\ind\{x_6\in\{1,5,9\}\})$, 
where we replace $D_i$ with $T_i$ 
so that $Z_i$ is a realized potential outcome. 
As the two algorithms are not directly comparable,
our evaluation focuses  on Alg.~\ref{alg:ratio}, 
while we still evaluate Alg.~\ref{alg:ratio'} 
to show the consequences of erroneously assuming SDA. 
We compare these two nonlinear algorithms to the difference-in-mean estimator 
$\hat\theta_{\textrm{DiM}} = \sum_{T_i=1}Y_i /\sum_{T_i=1}Z_i -  \sum_{T_i=0}Y_i /\sum_{T_i=0}Z_i$ as well as our linear adjustment method proposed
in Section~\ref{subsec:linear_rr}. 
(Note that we do not include the ratio metric extension of the CUPED method outlined in the appendix of \cite{deng2013improving} because it cannot incorporate covariates other than pre-treatment metrics.)
The true estimands are $\delta=0.463$ and 
$\delta'=0.740$.

In both settings, our procedures are 
implemented with random forest,
gradient boosting, and neural network 
from \texttt{scikit-learn} python library. 
We evaluate the estimated variance and 
the empirical coverage of the $0.95$-confidence intervals over $N=1000$ independent runs with sample size $n=10000$, and we also assess the proportions of reduced variance compared 
to the diff-in-mean estimator, as well as 
the coverage of confidence intervals for estimands 
(i.e., $\delta$ for $\hat\delta$, while $\delta'$ for $\hat\delta'$). 
Due to the page limit, Table~\ref{tab:simu_ratio} only 
summarizes the results with random forest, while others follow similar patterns.

\begin{table}[h]
\centering
{\footnotesize
\renewcommand\arraystretch{1.0}
\begin{tabular}{c|c |c | c|c |c|c  }
\toprule
\multirow{2}{*}{Setting} &    \multicolumn{2}{c|}{Algorithm~\ref{alg:ratio'}} &  \multicolumn{2}{c|}{Algorithm~\ref{alg:ratio}} &\multicolumn{2}{c}{ Linear } \\
\cline{2-7}
 & Var.Red\%& Emp.Cov. 
& Var.Red\%&  Emp.Cov. 
& Var.Red\%&  Emp.Cov. 
\\
\hline 
SDA, $d=10$ & \textbf{71.39$\%$} & 0.977 &  70.73$\%$   &  0.947   & 46.71$\%$ & 0.959       \\
  \hline
  SDA, $d=100$ & \textbf{71.05$\%$}  & 0.991 & 70.74$\%$  &   0.939  & 42.43$\%$ & 0.817   \\
\hline 
non-SDA, $d=10$ & 42.23$\%$ & 0.000  & \textbf{41.30$\%$} & \textbf{0.942}  & 1.25$\%$ & 0.954     \\
  \hline
non-SDA, $d=100$ & 48.01$\%$  & 0.000 & \textbf{39.64$\%$}  &  \textbf{0.941}  & 2.63$\%$ & 0.954   \\
\bottomrule
\end{tabular}
}%
\caption{Variance reduction (relative to the 
corresponding diff-in-mean estimator) and empirical coverage for ratio metrics.
``Linear'' is $\hat\delta'_{\textrm{CUPED}}$ for SDA settings 
and the linear adjustment method in Section~\ref{subsec:linear_rr} for non-SDA settings. }
\label{tab:simu_ratio}
\end{table}

In setting 1 with SDA, 
the first two lines in Table~\ref{tab:simu_ratio}
confirms the optimality of 
Alg.~\ref{alg:ratio'}:  
its 
coverage is above the nominal level $0.95$, 
and it achieves the best variance reduction performance. 
As $\delta=\delta'$, Alg.~\ref{alg:ratio} is also valid for $\delta$ 
with comparable variance reduction to 
Alg.~\ref{alg:ratio'}; 
this is due to the nature of the simulation design, 
not necessarily true in general. 
For linear methods, CUPED (linear regression without cross-fitting) 
for the SDA, $d=100$ setting (line 2)
does not provide valid coverage, showing problems
with linear regression asymptotics under 
relatively high dimensionality.
On the contrary, our cross-fitting based linear method from Section~\ref{subsec:linear_rr} reliably 
handles high dimensionality ($d=100$) and achieves the desired coverage (line 4).

The last two lines in Table~\ref{tab:simu_ratio} 
illustrate the 
performance of different methods
in absence of SDA (setting 2) and Alg.~\ref{alg:ratio} is clearly the best.
Linear adjustment does not reduce much 
variance due to the nonlinearity of the data. 
Although Alg.~\ref{alg:ratio'} achieves more variance reduction 
due to the nature of the data-generating process, 
the inference based on Alg.~\ref{alg:ratio'} 
for $\delta'$ 
is not valid (coverage is zero).  
This happens because 
the asymptotic unbiasedness of $\hat\delta'$ relies crucially on 
$\PP_{(X_i,Z_i)\given T_i=1} = \PP_{(X_i,Z_i)\given T_i=0}$. 
However, without SDA, 
the debiasing term $\frac{1}{n_c}\sum_{T_i=0}(Y_i(0)-\hat\mu_0(X_i,Z_i(0)))$ cannot correct for the bias of $\frac{1}{n_t}\sum_{T_i=1}\hat\mu_0(X_i,Z_i(1))$. 
In fact, the confidence intervals derived from Alg.~\ref{alg:ratio'} covers another quantity 
that is neither $\delta$ nor $\delta'$ and 
whose 
practical interpretation is unclear (the quantity 
equals $\EE[\Gamma_i]/\EE[Z_i]$ for $\Gamma_i$ 
defined in Appendix~\ref{app:sda}.
Thus, the smaller variance of Alg.~\ref{alg:ratio'} 
does not make it more attractive than Alg.~\ref{alg:ratio}. 

Some takeaway messages and practical suggestions are summarized below.
\begin{enumerate}[(i)]
\item When the SDA does \emph{not} hold, the estimand is 
 $\delta = \EE[Y(1)]/\EE[Z(1)] - \EE[Y(0)]/\EE[Z(0)]$ 
and $\hat\delta$ from Alg.~\ref{alg:ratio} is optimal.
\item When the SDA holds, 
the target is $\delta' = \EE[Y(1)-Y(0)]/\EE[Z]$ and 
Alg.~\ref{alg:ratio'} is optimal. 
\vspace{-0.5em}
\item In practice, the SDA for variance reduction needs to be made with caution, because if it is violated, 
$\hat\delta'$ from Alg.~\ref{alg:ratio'} 
may be invalid and its actual target 
lacks clear interpretation. 
There should always be a separate test (such as applying Alg.~\ref{alg:debias} for count metrics) on whether $\EE[Z(1)]=\EE[Z(0)]$. 
Without strong evidence for SDA, 
we recommend dropping
the SDA and using Alg.~\ref{alg:ratio} for optimal variance reduction.  
Even if SDA \emph{actually} holds, Alg.~\ref{alg:ratio} is still valid and in some cases only slightly inferior to Alg.~\ref{alg:ratio'}.

\end{enumerate}

\section{Real examples} \label{sec:example}

In this section, we provide two real examples at LinkedIn: 
one applying Alg.~\ref{alg:debias} to analyze 
the count metrics in a LinkedIn feed experiment, 
and the other one applying Alg.~\ref{alg:ratio}  
to analyze ratio metrics in 
an enterprise experiment for LinkedIn learning.

\subsection{Count metrics in a LinkedIn feed experiment}

The LinkedIn feed is an online system exposing members to contents posted in their network, including career news, ideas, questions, and jobs in 
the form of short text, articles, images, and videos. ML algorithms are used to rank the tens of thousands of candidate updates for each member to help them discover the most relevant contents. 
We consider an experiment on LinkedIn's feed homepage where members are randomly assigned into a treatment group and a control group. The treatment group is assigned a new version of feed relevance algorithm, 
which would be compared to the baseline algorithm in the control group. 
The goal of the experiment is to understand how the new ranking algorithm would impact the revenue from feed. As discussed in \cite{deng2013improving}, the effectiveness of variance reduction using CUPED depends on the linear correlation of the experiment outcome with the pre-experiment metric. 
It is challenging to reduce the variance for revenue 
because member's revenue is a volatile metric whose autocorrelation
across different time periods is weak. 
We hope to improve the performance of variance reduction by incorporating 
more covariates and 
exploiting nonlinearity with ML methods. 

The experiment takes a random sample of $n=400000$ members from the LinkedIn online feed traffic 
and remove outliers whose revenues are  above the $99.5\%$ quantile. We implement Alg.~\ref{alg:debias} with gradient boosting in the \texttt{scikit-learn} Python library and compare it to CUPED (with separate linear regressions on treated and control outcomes) as well as the diff-in-mean estimator. 
To illustrate the advantage of incorporating side information, 
the CUPED method only uses pre-treatment revenue metric, whilst 
two sets of covariates are used in Alg~\ref{alg:debias}: 
1) pre-treatment revenue metric; 
2) pre-treatment revenue metric and other member attributes including country code, industry, membership status, job seeker class, profile viewer count, connection count, network density, etc., where categorical features are transformed into binary variables with one-hot encoding.

With only pre-treatment revenue metrics, CUPED reduces $15.91\%$ of variance compared to the diff-in-mean estimator, whereas Alg.~\ref{alg:debias} reduces $19.77\%$ by exploiting nonlinear relations with gradient boosting models. 
By further incorporating member attribute covariates, Alg.~\ref{alg:debias} reduces $22.22\%$ of variance compared to the diff-in-mean estimator, 
showing the efficiency gained from more side information.

\subsection{Ratio metrics in a LinkedIn learning experiment}

LinkedIn Learning platform is launching a new notification center from which learners can receive customized course recommendations based on what they watched, saved or the trending courses. It also ensures that the learners would not miss assignments from their managers, active conversations with instructors and learners on their favorite contents.
 
 \begin{figure}[htb] 
   \centering
   \includegraphics[width=400pt]{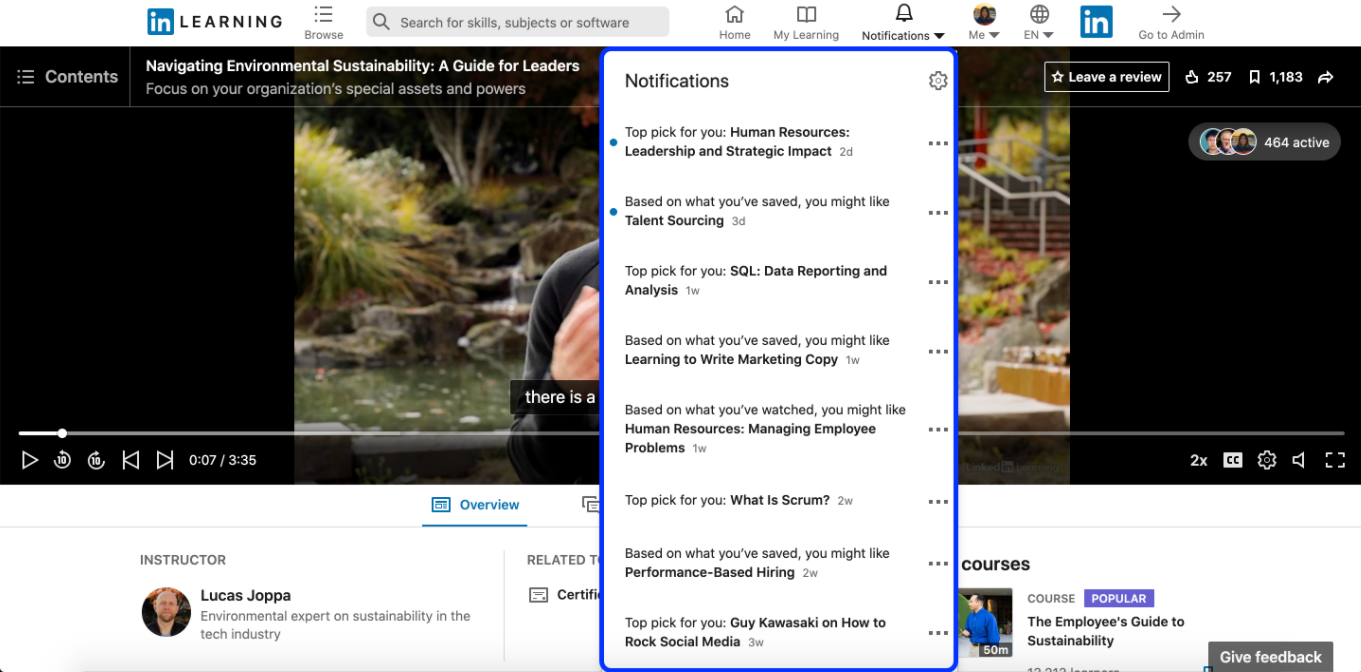}
   \caption{The new LinkedIn Learning notification center.}\label{Fig:notification}
 \end{figure}
 
An online experiment is conducted to assess the impact of this new notification center on the enterprise learners. Because the notification center is an explicit new feature with UI changes (as shown in Figure~\ref{Fig:notification}), 
the experiment requires that learners from the same enterprise account will either be all in control or all in treatment to avoid
jeopardizing customer trust.
In order to ensure ``same account same experience'', the experiment is randomized by enterprise accounts (instead of by individual learners). 
We focus on the ``learning engagement per learner'' metric, which measures 
the contribution of one day's engagement  
to future engagement (video time watched). As introduced in Section~\ref{subsec:metric}, 
since the analysis unit (learner) is at a lower level than the randomization unit (enterprise account), our metric of interest is a ratio metric ``learning engagement per account/number of active learners per account''. This experiment can also be viewed as a cluster randomized experiment.

For target $\delta$, we compare Alg.~\ref{alg:ratio} to an extension (separate regressions on each group)
of
the CUPED outlined in the appendix of~\cite{deng2013improving}.  
When the target is $\delta'$, CUPED 
means $\hat\delta'_{\textrm{CUPED}} = \hat\theta_{\textrm{CUPED}}/\big(\frac{1}{n}\sum_{i=1}^nZ_i\big)$, 
where $\hat\theta_{\textrm{CUPED}}$ is the CUPED estimator for count metrics ``learning engagement per account''. 
The corresponding diff-in-mean estimators are the same as introduced 
in Section~\ref{subsec:simu_ratio}. 
To illustrate the advantage of 
incorporating extra covariates, the CUPED
method only uses pre-treatment versions of $Y,Z$, whereas
for each target, 
two sets of covariates are used in 
Algs~\ref{alg:ratio} and~\ref{alg:ratio'}: 
1) only pre-treatment versions of $Y,Z$; 
2) pre-treatment versions of $Y,Z$ and features of contracts, 
including industry segment, 
SSO information, 
country code, account size category, etc.
In both settings, we implement our procedures
with the \texttt{XGBoost} python library without special tuning.
The results 
are summarized in Table~\ref{tab:real}.

\begin{table}[htbp]
\centering
{\footnotesize
\renewcommand\arraystretch{1.0}
\begin{tabular}{c|c |c | c |c   }
\toprule
\multirow{2}{*}{Covariates} &    \multicolumn{2}{c|}{ Target: $\delta$ } &\multicolumn{2}{c }{ Target: $\delta'$ }\\
\cline{2-5}
 & Alg.~\ref{alg:ratio} & CUPED  
& Alg.~\ref{alg:ratio'} & CUPED  
\\
\hline 
1) & 3.13$\%$ (1.40$\%$) & \multirow{2}{*}{1.76$\%$} 
&  74.1$\%$ (-11.13$\%$)  &  \multirow{2}{*}{76.62$\%$}     \\
  \cline{1-2}\cline{4-4}
2) & 12.35$\%$ (10.78$\%$) &  &   83.6$\%$ (29.58$\%$)  &        \\
\bottomrule
\end{tabular}
}%
\caption{ 
Proportion of reduced variance compared to DiM 
(compared to CUPED). }
\label{tab:real}
\end{table}

For both targets, we see the efficacy
of using machine learning tools and incorporating 
large numbers of covariates.  
For target $\delta$ when CUPED does not reduce much variance, 
flexible machine learning tools improves CUPED by
going beyond linearity as in Line 1
and incorporating a large number of extra covariates as in Line 2. 
For $\delta'$, we see the substantial improvement
(about $70\%$ to $80\%$ of variance reduced)
compared to the diff-in-mean estimator
and our optimal estimator can further achieve 30\% lower variance than CUPED by exploiting the information in the many covariates.

\section{Conclusions} \label{sec:conclusion}
We establish a rigorous statistical framework for variance reduction of count and ratio metrics that are popular in online controlled experiments.
We 
propose variance reduction methods that utilize  flexible 
ML tools to incorporate large numbers of covariates, 
yielding unbiased estimators for treatment effects under mild conditions. 
Based on semiparametric efficiency theory, we establish the optimality of the proposed procedures, sheding light on ideal solutions for variance reduction. 
Simulation studies 
illustrate the performance of our methods and also give practical suggestions for ratio metrics under different assumptions. 
Finally, 
two real online experiments from LinkedIn are used to illustrate the applicability of our methods.

Our current work opens  several interesting avenues for future work. 
Firstly, 
direct extension of the current work is 
to add a linear regression step on top of our methods to 
utilize the ``agnostic'' property of linear adjustment. 
This might help in practice when the ML estimators are much off; 
however, this does not help with optimality, since 
under the   consistency condition for the ML predictor,
adding a linear regression step 
does not change the asymptotic behavior of the estimator. 
Another line of future work is variance reduction with limited capacity of ML estimators
 --- the consistency condition 
may not be satisfied 
especially when the ML estimator is chosen among a function class that does not cover the true conditional mean function.
In this case, it would be of interest to investigate optimal solutions given such constraints. 
Another possible extension is to study robust solutions 
for ratio metrics. 
As illustrated in the simulations, erroneously making the SDA might hurt the validity of the optimal procedure for $\delta'$ in~\eqref{eq:r_target}. 
A procedure that yields valid inference for the target $\delta'$ even when the SDA is violated might be desired. Instead of targeting at optimality, it would be interesting to find a robust solution that is ``optimal'' in a possibly more restrictive sense.

\section{Acknowledgements}

This work is done during the first author's internship at LinkedIn Applied Research team. 
The authors would like to thank Dominik Rothenh\"{a}usler for helpful discussions and thank Weitao Duan, Rina Friedberg, Reza Hosseini, Juanyan Li, Min Liu, Jackie Zhao, Sishi Tang and Parvez Ahammad for their suggestions and feedbacks.  


\bibliographystyle{ims}
\bibliography{reference}

\newpage 

\appendix

\section{Deferred details for ratio metrics under SDA}
\label{app:sda}

\subsection{Asymptotic inference }

To establish asymptotically valid inference procedure, we impose several conditions on the data generating process and convergence of the estimators. 

\begin{assumption}[Data Generating Process]
  \label{assump:r_data}
  Suppose $(X_i,Z_i,Y_i(0),Y_i(1))\iid \PP$ and $T_i\iid \textrm{Bernoulli}(p)$ which are independent of all other random variables. 
\end{assumption}

\begin{assumption}[Convergence]
\label{assump:r_conv}
There exists some fixed fuctions $\mu_{1 }^*(\cdot,\cdot)$, $\mu_{0 }^*(\cdot,\cdot)$ so that $\|\hat\mu_w^{(k)} - \mu_{w}^*\|_2\pto 0$ for all $k\in [K]$ and $w\in\{0,1\}$, where $\|\cdot\|_2$ denotes the $L_2$ norm on the probability space of $(X_i,Z_i)$. 
\end{assumption}

Under the above two conditions, we are to show that the estimator $\hat\delta'$ has influence function
\#\label{eq:r_infl}
\phi_{\delta}'(Y_i,Z_i,X_i,T_i) = \frac{\Gamma_i^*}{\EE[Z_i]}  - \frac{\EE[Y_i(1)]-\EE[Y_i(0)]}{\EE[Z_i]^2} \big(Z_i - \EE[Z_i]\big),\quad \text{where}
\#
\$
\Gamma_i^* &=  \tilde\mu_1^*(X_i,Z_i) - \tilde\mu_0^*(X_i,Z_i)  +  \frac{T_i}{  p}
\big(\tilde Y_i(1) - \tilde\mu_1^*(X_i,Z_i)\big) - \frac{1-T_i}{1- p} \big(\tilde Y_i(0) - \tilde\mu_0^*(X_i,Z_i)\big), 
\$
and $\tilde{Y}_i(w) = Y_i(w) - \EE[Y_i(w)]$, $\tilde\mu_w^*(X_i,Z_i) = \mu_w^*(X_i,Z_i) - \EE[\mu_w^*(X_i,Z_i)]$ for $w\in \{0,1\}$ are centered random variables. 
The following theorem establishes the asymptotic behavior of the proposed estimator, whose proof is deferred to 
Appendix~\ref{app:r_inference}.
\begin{theorem}\label{thm:r_inference}
Suppose Assumptions~\ref{assump:r_data} and~\ref{assump:r_conv} hold, and let $\hat\delta'$ be the output of Algorithm~\ref{alg:ratio'}. Then $\sqrt{n}(\hat\delta' - \delta')\stackrel{d}{\to} N(0,\sigma_{\delta'}^2)$, where $\sigma_{\delta'}^2 = \Var\big(\phi_{\delta}'(Y_i,Z_i,X_i,T_i)\big)$ 
for the influence function~\eqref{eq:r_infl}. 
Furthermore, define the variance estimator 
\#\label{eq:r_var_est}
&\hat\delta_{\delta'}^2  =  
\frac{1}{n}\sum_{T_i=1} \big(g_{1,i} - \bar{g}_1(1)\big)^2 + \frac{1}{n}\sum_{T_i=0} \big( g_{0,i} - \bar{g}_0(0)\big)^2 \\
\textrm{where}~~ g_{1,i} &= \frac{1}{\bar{Z}} \big(   \hat \mu_1(X_i,Z_i) - \hat  \mu_0 (X_i,Z_i)\big) + \frac{n}{n_t \bar{Z}}  \big( {Y}_i(1) -  \hat \mu_1 (X_i,Z_i)\big)  - \frac{\bar{Y}(1)-\bar{Y}(0)}{\bar{Z}^2} Z_i, \notag \\
g_{0,i} &= \frac{1}{\bar{Z}} \big(   \hat \mu_1(X_i,Z_i) - \hat  \mu_0 (X_i,Z_i)\big) - \frac{n}{n_c \bar{Z}}  \big( {Y}_i(0) -  \hat \mu_0 (X_i,Z_i)\big)  - \frac{\bar{Y}(1)-\bar{Y}(0)}{\bar{Z}^2} Z_i. \notag
\#
Then $\hat\delta' \pm \hat\sigma_{\delta'} \cdot z_{1-\alpha/2}/\sqrt{n}$ is an asymptotically valid $(1-\alpha)$ confidence interval for $\delta'$.
\end{theorem}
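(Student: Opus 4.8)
The plan is to exploit the structure created by the stable denominator assumption (Assumption~\ref{assump:r_data}): because $Z_i$ is a plain covariate unaffected by the treatment, the numerator $\bar\Gamma := \frac1n\sum_{i=1}^n\Gamma_i$ of $\hat\delta'$ is \emph{exactly} the cross-fitted debiased count-metric estimator $\tilde\theta_\deb$ in~\eqref{eq:db_count_0}, built on the enlarged covariate $(X_i,Z_i)$ with target $a:=\EE[Y(1)]-\EE[Y(0)]$. Indeed, since $\hat p=n_t/n$, the weights $\frac{T_i}{\hat p}$ and $\frac{1-T_i}{1-\hat p}$ collapse the correction terms into clean subgroup averages, so $\bar\Gamma=\frac1n\sum_i(\hat\mu_1-\hat\mu_0)+\frac1{n_t}\sum_{T_i=1}(Y_i-\hat\mu_1)-\frac1{n_c}\sum_{T_i=0}(Y_i-\hat\mu_0)$. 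Writing $\hat\delta'=\bar\Gamma/\bar Z$ with $\bar Z=\frac1n\sum_iZ_i\pto b:=\EE[Z]$, I would obtain the limit law by a first-order delta-method argument once the joint asymptotic linear expansion of $(\bar\Gamma,\bar Z)$ is established; the numerator piece parallels Theorem~\ref{thm:asymp_var} (whose Assumptions~\ref{assump:conv} and~\ref{assump:treat} are implied by Assumptions~\ref{assump:r_conv} and~\ref{assump:r_data}).

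The core step is the asymptotic linearity of $\bar\Gamma$. First I would isolate the estimation error $e_w:=\hat\mu_w-\mu_w^*$ by grouping the $\hat\mu_1$ contributions into $\frac1n\sum_i\hat\mu_1-\frac1{n_t}\sum_{T_i=1}\hat\mu_1$ (and symmetrically for $\hat\mu_0$). Conditioning on the out-of-fold data $\cD^{(-k)}$, cross-fitting makes $e_w^{(k)}$ deterministic and the held-out points i.i.d.; since $T_i\perp(X_i,Z_i)$, both averages of $e_w^{(k)}$ approximately share the conditional mean $\EE[e_w^{(k)}\mid\cD^{(-k)}]$, so their difference has conditional variance $O(\|e_w^{(k)}\|_2^2/n)$ with mean of smaller order. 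Under Assumption~\ref{assump:r_conv} ($\|e_w^{(k)}\|_2\pto0$, \emph{with no rate}) this difference is $o_P(1/\sqrt n)$, which is exactly where debiasing buys first-order insensitivity to the regressors. What remains are the oracle terms built from $\mu_w^*$. The subtle point is that $\mu_w^*$ need not be the true conditional mean, so $Y(w)-\mu_w^*$ has a nonzero mean $m_w$ that the randomness of $n_t$ multiplies; writing $\frac1{n_t}\sum_{T_i=1}(Y_i-\mu_1^*)=m_1+\frac1n\sum_i\frac{T_i}{p}(\tilde Y_i(1)-\tilde\mu_1^*)+o_P(1/\sqrt n)$ and recombining yields $\bar\Gamma=a+\frac1n\sum_i\Gamma_i^*+o_P(1/\sqrt n)$, with the centered $\Gamma_i^*$ of the statement. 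This centering is precisely what absorbs the $m_w$-times-$n_t$-fluctuation, and I expect it to be the main obstacle to get right.

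With both expansions in hand, $(\bar\Gamma,\bar Z)$ are averages of i.i.d.~influence terms and hence jointly asymptotically normal, so the first-order ratio expansion $\frac{\bar\Gamma}{\bar Z}-\frac{a}{b}=\frac{\bar\Gamma-a}{b}-\frac{a}{b^2}(\bar Z-b)+o_P(\cdot)$ gives $\sqrt n(\hat\delta'-\delta')=\frac1{\sqrt n}\sum_i\big[\Gamma_i^*/b-(a/b^2)(Z_i-b)\big]+o_P(1)$. The summand is exactly $\phi_\delta'$ in~\eqref{eq:r_infl}, and the classical CLT delivers $N(0,\Var(\phi_\delta'))$.

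For $\hat\sigma_{\delta'}^2\pto\sigma_{\delta'}^2$ I would use that $T_i\in\{0,1\}$ splits the influence function as $\phi_\delta'=T_i\psi_1+(1-T_i)\psi_0$, with $\psi_1,\psi_0$ functions of $(X,Z,Y(1))$ and $(X,Z,Y(0))$; since $T_i$ is independent of the data and $\EE[\phi_\delta']=0$, this yields $\sigma_{\delta'}^2=p\,\EE[\psi_1^2]+(1-p)\,\EE[\psi_0^2]$. I would then verify that $g_{1,i}$ and $g_{0,i}$ in~\eqref{eq:r_var_est} are the sample analogues of $\psi_1$ and $\psi_0$ up to additive constants: substituting the probability limits of $\bar Z$, $\bar Y(w)$, $n/n_t$, $n/n_c$ and replacing $\hat\mu_w$ by $\mu_w^*$ recovers $\psi_1,\psi_0$ modulo centering, while the subtracted empirical means $\bar g_1(1),\bar g_0(0)$ neutralize those constants. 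The two subsample sums normalized by $n$ then converge, by a law of large numbers on the treated and control subsamples, to $p\,\EE[\psi_1^2]$ and $(1-p)\,\EE[\psi_0^2]$. The one place needing care here is justifying the replacement of $\hat\mu_w$ by $\mu_w^*$ inside the \emph{second} moments: I would control the squared and cross terms in $e_w$ via Cauchy--Schwarz together with $\|\hat\mu_w^{(k)}-\mu_w^*\|_2\pto0$ and the cross-fitting independence, so that these terms are $o_P(1)$ on average, giving $\hat\sigma_{\delta'}^2\pto\sigma_{\delta'}^2$ and hence the asymptotic validity of the stated confidence interval.
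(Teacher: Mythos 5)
Your proposal is correct and follows essentially the same route as the paper: it identifies $\bar\Gamma$ with the debiased count-metric estimator $\tilde\theta_\deb$ on the enlarged covariate $(X_i,Z_i)$, handles the cross-fitting error and the centering of $Y(w)-\mu_w^*$ exactly as the paper does (by reduction to the Theorem~\ref{thm:asymp_var} argument), applies the delta method to the ratio $\bar\Gamma/\bar Z$, and establishes variance-estimator consistency through the same $T_i\psi_1+(1-T_i)\psi_0$ split with Cauchy--Schwarz control of the $\hat\mu_w\to\mu_w^*$ substitution. No gaps.
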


\subsection{Optimality}

In this part, we show the semiparametric efficiency of 
the proposed estimator under similar consistency conditions. 
To begin with, we impose the following two assumptions on the data generating process and consistency. 


\begin{assumption}\label{assump:r_consistency}
Denote $\mu_w(x,z) = \EE[Y(w)\given X=x,Z=z]$ for all $w\in \{0,1\}$. Suppose $\|\hat\mu_w^{(k)} - \mu_{w}\|_2\pto 0$ for all $k\in[K]$ and $w\in\{0,1\}$, where $\|\cdot\|_2$ denotes the $L_2$-norm in the probability space of $(X_i,Z_i)$.
\end{assumption}

To begin with, we define the efficient influence function for estimating $\delta'$ as 
\#
&\phi_{\delta',\dag}(Y_i,Z_i,X_i,T_i) = \frac{\Gamma_i^{\dag}}{\EE[Z_i]} - \frac{\EE[Y_i(1)]-\EE[Y_i(0)]}{\EE[Z_i]^2}\big(Z_i-\EE[Z_i]), \label{eq:r_eff_infl}\\
\textrm{where}\quad &\Gamma_i^{\dag} =   \mu_{1} (X_i,Z_i ) - \mu_0(X_i,Z_i) - \big( \EE[Y_i(1)] - \EE[Y_i(0)] \big)\notag  \\
&\qquad \qquad + \frac{T_i}{p}  \big(Y_i  - \mu_{1 } (X_i,Z_i ) \big) - \frac{1-T_i}{1-p}\big(Y_i  - \mu_{0 } (X_i,Z_i ) \big) . \notag 
\#

The following theorem establishes the semiparametric efficiency of $\hat\delta'$ under the consistency assumption~\ref{assump:r_consistency} by referring to a more general result (c.f.~Theorem~\ref{thm:r_optimality})
which provides efficiency bounds under $T_i\indep (Y_i(1),Y_i(0))\given X_i,Z_i$. The latter works for more general settings than the randomized experiment protocol, which might be of independent interest for observational studies. 

\begin{theorem}\label{thm:r_optimality}
Suppose Assumptions~\ref{assump:r_data} and~\ref{assump:r_consistency} hold. Then it holds that $\sqrt{n}(\hat\delta' - \delta')\stackrel{d}{\to}N(0,\sigma_{\delta',\dag}^2)$, where 
$
\sigma_{\delta',\dag}^2 = \Var\big(\phi_{\delta',\dag}(Y_i,Z_i,X_i,T_i)\big)
$
for $\phi_{\delta',\dag}$ defined in~\eqref{eq:r_eff_infl} 
is the semiparametric asymptotic variance bound for $\delta' = \EE[Y(1)-  Y(0)]/\EE[Z]$. 
\end{theorem}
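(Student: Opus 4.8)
The plan is to split the statement into two parts that can be handled separately: establishing the CLT with asymptotic variance $\sigma_{\delta',\dag}^2$, and identifying $\sigma_{\delta',\dag}^2$ as the semiparametric variance bound.

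For the CLT, I would first note that Assumption~\ref{assump:r_consistency} is merely a strengthening of Assumption~\ref{assump:r_conv} in which the deterministic limits $\mu_w^*$ are pinned down to the true conditional means $\mu_w(x,z)=\EE[Y(w)\given X=x,Z=z]$. Theorem~\ref{thm:r_inference} therefore applies directly and yields $\sqrt{n}(\hat\delta'-\delta')\stackrel{d}{\to}N(0,\sigma_{\delta'}^2)$ with $\sigma_{\delta'}^2=\Var(\phi_\delta')$. It then remains to check that substituting $\mu_w^*=\mu_w$ into~\eqref{eq:r_infl} reproduces $\phi_{\delta',\dag}$ in~\eqref{eq:r_eff_infl}; this reduces to verifying $\Gamma_i^*=\Gamma_i^\dag$. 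Writing out the centered functions $\tilde\mu_w^*=\mu_w-\EE[\mu_w(X,Z)]$ and using the iterated-expectation identity $\EE[\mu_w(X,Z)]=\EE[Y(w)]$, the recentering constants cancel against those inside $\tilde Y_i(w)=Y_i(w)-\EE[Y_i(w)]$, leaving exactly the debiased expression $\Gamma_i^\dag$. This is routine algebra and completes the first part.

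The substantive claim is that $\sigma_{\delta',\dag}^2$ is the efficiency bound, and here I would exploit the ratio structure $\delta'=\theta/\zeta$ with $\theta=\EE[Y(1)-Y(0)]$ and $\zeta=\EE[Z]$. Since the efficient influence function transforms by the delta method under smooth reparametrisation, it suffices to find the efficient influence functions of $\theta$ and $\zeta$ and combine them as $\phi_{\delta',\dag}=\zeta^{-1}\phi_\theta-\theta\zeta^{-2}\phi_\zeta$. For $\zeta=\EE[Z]$, as $Z$ is directly observed, the efficient influence function is simply $Z-\EE[Z]$. For $\theta$, Assumption~\ref{assump:r_data} guarantees the unconfoundedness relation $T\indep(Y(1),Y(0))\given(X,Z)$ with constant propensity $p$, which is precisely the setting in which the efficient influence function of the average treatment effect is the AIPW form $\mu_1-\mu_0-\theta+\frac{T}{p}(Y-\mu_1)-\frac{1-T}{1-p}(Y-\mu_0)=\Gamma_i^\dag$~\citep{hahn1998role,robins1994estimation}. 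Assembling the two pieces recovers~\eqref{eq:r_eff_infl}, and the bound is $\Var(\phi_{\delta',\dag})$.

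To make the efficiency step rigorous rather than heuristic, I would invoke the general variance-bound result (the appendix companion to this theorem, valid under $T\indep(Y(1),Y(0))\given(X,Z)$) and specialise it to the randomized case $e(X,Z)\equiv p$. Its proof proceeds along the standard semiparametric route~\citep{bickel1993efficient,hahn1998role}: decompose the nonparametric tangent space as the orthogonal sum of the score spaces of the $(X,Z)$-marginal, the treatment mechanism, and the conditional laws of $Y$ given $(X,Z,T)$; compute the pathwise derivative of $\delta'$ along arbitrary parametric submodels; and verify that $\phi_{\delta',\dag}$ is a gradient that lies in the tangent space. A gradient belonging to the tangent space is automatically the canonical (efficient) gradient, so its variance is the bound. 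The main obstacle is this last verification: one must check that the debiasing terms $\frac{T}{p}(Y-\mu_1)$ and $-\frac{1-T}{1-p}(Y-\mu_0)$, which are mean-zero given $(X,Z)$ and hence orthogonal to the outcome-regression nuisance directions, simultaneously deliver the correct pathwise derivative and remain orthogonal to all nuisance scores. The ratio layer adds only a bounded delta-method linearisation on top of the classical ATE computation, so beyond careful bookkeeping no genuinely new difficulty arises.
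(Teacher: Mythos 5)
Your proposal is correct and follows essentially the same route as the paper: the CLT part reduces to Theorem~\ref{thm:r_inference} after checking that consistency forces $\mu_w^*=\mu_w$ and hence $\Gamma_i^*=\Gamma_i^\dag$ (using $\EE[\mu_w(X,Z)]=\EE[Y(w)]$), and the efficiency part invokes the general variance-bound theorem proved under $T\indep(Y(1),Y(0))\given (X_i,Z_i)$ and specializes it to $e(X,Z)\equiv p$. Your delta-method combination of the AIPW influence function for $\EE[Y(1)-Y(0)]$ (with covariates $(X,Z)$) and $Z-\EE[Z]$ for $\EE[Z]$ is a clean way to anticipate the form of $\phi_{\delta',\dag}$, but the rigorous verification you outline is exactly the tangent-space projection argument the paper carries out in the appendix.
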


Theorem~\ref{thm:r_optimality} sheds light on the ideal procedure of variance reduction
for ratio metrics with stable denominators: one should aim to fit the two conditional mean functions of the numerator $Y$ on both $(X,Z)$ in the fit-and-plugin framework, where the denominator $Z_i$ should be pooled to estimate the expectation. 

The procedure proposed here is different from the optimal one for a changing $Z$ in Section~\ref{sec:rr} due to differeces in the estimands and 
probabilistic nature of the data-generating processes. 
Although it seems plausible that pooling all data to estimate the denominator (and making more assumptions) should lead
to an estimator with smaller variance, it is not always the case
and might even hurt the validity of the procedure if the assumption is false imposed. 
We will elaborate on this issue by simulation studies in Section~\ref{subsec:simu_ratio}.

\section{Proof of results on count metrics}

\subsection{Proof of unbiasedness} \label{app:unbiased}
\begin{proof}[Proof of Theorem~\ref{thm:deb_unbiased}]
Recall that  $\cD^{(k)}$, $k=1,\dots,K$ are the $K$ randomly-split folds, with $n_k = |\cD^{(k)}|$, $n_{k,t} = \sum_{i\in \cD^{(k)}}T_i$ and $n_{k,c} = n_k - n_{k,t}$. Here $\lfloor n/K\rfloor \leq n_k \leq \lfloor n/K\rfloor+1$. For each $k\in [K]$, since the functions $\hat\mu_1^{(k)}$, $\hat\mu_0^{(k)}$ are fitted with data in $\cD\backslash \cD^{(k)}$, we have 
\$
&\EE\bigg[\frac{1}{n_k}\sum_{i\in \cD^{(k)}}   \big(\hat\mu_1(X_i) - \hat\mu_0(X_i)\big) + \frac{1}{n_{k,t}}\sum_{i\in \cD^{(k)}, T_i=1} \big( Y_i(1) - \hat\mu_1(X_i) \big) - \frac{1}{n_{k,c}} \sum_{i\in \cD^{(k)}, T_i=0} \big( Y_i(0) - \hat\mu_0(X_i)  \big)\bigggiven \cT\bigg]\\
&= \EE\bigg[ \EE\Big[  \big(\hat\mu_1(X_i) - \hat\mu_0(X_i)\big) +  \big( Y_i(1) - \hat\mu_1(X_i) \big) -  \big( Y_i(0) - \hat\mu_0(X_i)  \big)\Biggiven \cT, \cD^{(-k)}\Big]\bigggiven \cT\bigg]\\
&= \EE\big[Y_i(1)-Y_i(0)\biggiven \cT\big] = \EE\big[Y_i(1)-Y_i(0) \big].
\$
Here the first equality follows from the tower property of conditional expectations, as well as the i.i.d.~of $(Y_i(1),Y_i(0),\hat\mu_1(X_i),\hat\mu_0(X_i))$ conditional on $\cD^{(-k)},\cT$ thanks to the cross-fitting technique and the independence of $\cT$. The second equality follows from the independence of $\hat\mu_1,\hat\mu_0$ and $X_i$ for $i\in \cD^{(k)}$ due to cross-fitting, and the last equality follows from the independence of $\cT$ and $(Y_i(1),Y_i(0))$. Since $\hat\theta_{\deb}$ in \eqref{eq:db_count} is the average of all $\hat\theta_{\deb}^{(k)}$, we obtain $\EE[\hat\theta_{\deb}\given \cT] = \delta$ for any sample size $n$. 

We then consider $\tilde\theta_{\deb}$. Note that 
\$
\tilde\theta_{\deb} =& \sum_{k=1}^K \frac{n_k}{n} \Big( \frac{1}{n_k}\sum_{i\in \cD^{(k)}}   \big(\hat\mu_1(X_i) - \hat\mu_0(X_i)\big) \Big) +  \sum_{k=1}^K \frac{n_{k,t}}{n_t}\Big(\frac{1}{n_{k,t}}\sum_{i\in \cD^{(k)}, T_i=1} \big( Y_i(1) - \hat\mu_1(X_i) \big)\Big) \\
&\qquad \qquad  - \sum_{k=1}^K \frac{n_{k,c}}{n_c}\Big(\frac{1}{n_{k,c}} \sum_{i\in \cD^{(k)}, T_i=0} \big( Y_i(0) - \hat\mu_0(X_i)  \big)\Big)\\
=& \sum_{k=1}^K \frac{n_k}{n} ~\hat\theta_{\deb}^{(k)} +  \sum_{k=1}^K \Big(  \frac{n_{k,t}}{n_t} - \frac{n_k}{n}\Big) \Big(\frac{1}{n_{k,t}}\sum_{i\in \cD^{(k)}, T_i=1} \big( Y_i(1) - \hat\mu_1(X_i) \big)\Big) \\
&\qquad \qquad  - \sum_{k=1}^K \Big(\frac{n_{k,c}}{n_c} - \frac{n_{k }}{n }\Big) \Big(\frac{1}{n_{k,c}} \sum_{i\in \cD^{(k)}, T_i=0} \big( Y_i(0) - \hat\mu_0(X_i)  \big)\Big).
\$
By the conditions in (i), we know 
\$
\Big| \frac{n_{k,t}}{n_t} - \frac{n_k}{n} \Big| \leq \frac{2}{n_t},\quad \Big| \frac{n_{k,c}}{n_c} - \frac{n_k}{n} \Big| \leq \frac{2}{n_c}.
\$
Therefore
\$
\Big|\EE\big[\tilde\theta_{\deb} \biggiven \cT\big] - \delta \Big| \leq & ~\Bigg|\EE\bigg[ \sum_{k=1}^K \Big(  \frac{n_{k,t}}{n_t} - \frac{n_k}{n}\Big) \Big(\frac{1}{n_{k,t}}\sum_{i\in \cD^{(k)}, T_i=1} \big( Y_i(1) - \hat\mu_1(X_i) \big)\Big) \bigggiven \cT\bigg]\\
&\qquad \qquad  - \EE\bigg[ \sum_{k=1}^K \Big(\frac{n_{k,c}}{n_c} - \frac{n_{k,t}}{n_t}\Big) \Big(\frac{1}{n_{k,c}} \sum_{i\in \cD^{(k)}, T_i=0} \big( Y_i(0) - \hat\mu_0(X_i)  \big)\Big)\bigggiven \cT\bigg]\Bigg| \\
= & ~\Bigg|\EE\bigg[ \sum_{k=1}^K \Big(  \frac{n_{k,t}}{n_t} - \frac{n_k}{n}\Big) \EE\big[  Y_i(1) - \hat\mu_1(X_i) \biggiven \cD^{(-k)},\cT\big]  \bigggiven \cT\bigg]\\
&\qquad \qquad  - \EE\bigg[ \sum_{k=1}^K \Big(\frac{n_{k,c}}{n_c} - \frac{n_{k }}{n }\Big) \EE\big[  Y_i(0) - \hat\mu_0(X_i) \biggiven \cD^{(-k)},\cT\big] \bigggiven \cT\bigg]\Bigg| \\
\leq& K \cdot  \frac{2}{\min\{n_t,n_c\}} \cdot 2c_0 := \frac{c}{\min\{n_t,n_c\}} 
\$
for absolute constant $c:= 4Kc_0$. Here the first equality follows from the tower property and the i.i.d.~of $(Y_i(1),Y_i(0),\hat\mu_1(X_i),\hat\mu_0(X_i))$ conditional on $\cD^{(-k)},\cT$ thanks to the cross-fitting technique and the independence of $\cT$. The second inequality follows from the conditions of the bounded expectations. Thus we complete the proof.
\end{proof}

\subsection{Proof of asymptotic inference} \label{app:asymp_var}
\begin{proof}[Proof of Theorem~\ref{thm:asymp_var}]
We first investigate the behavior of $\hat\theta_{\deb}^{(k)}$ for each $k\in[K]$. Define its counterpart
\$
\hat\theta_{\deb}^{(*,k)} = 
\frac{1}{n_k}\sum_{i\in \cD^{(k)}}   \big( \mu_1^*(X_i) - \mu_0^*(X_i)\big) + \frac{1}{n_{k,t}}\sum_{\substack{T_i=1,\\i\in \cD^{(k)}}} \big( Y_i(1) -  \mu_1^*(X_i) \big) - \frac{1}{n_{k,c}} \sum_{\substack{T_i=0,\\i\in \cD^{(k)}}} \big( Y_i(0) -  \mu_0^*(X_i)  \big).
\$
Denoting $\Delta_1(X_i) = \hat\mu_1(X_i) - \mu_1^*(X_i)$ and $\Delta_0(X_i) = \hat\mu_0(X_i) - \mu_0^*(X_i)$, we have 
\#\label{eq:diff1}
\hat\theta_{\deb}^{(k)} - \hat\theta_{\deb}^{(*,k)} &= 
\frac{1}{n_k}\sum_{i\in \cD^{(k)}}   \big( \Delta_1(X_i) - \Delta_0(X_i) \big) - \frac{1}{n_{k,t}}\sum_{\substack{T_i=1,\\i\in \cD^{(k)}}}  \Delta_1(X_i)   + \frac{1}{n_{k,c}} \sum_{\substack{T_i=0,\\i\in \cD^{(k)}}} \Delta_0(X_i)
\#
We denote $\EE_c$ as the expectation taken conditional on the sample splitting, $\cD^{(-k)}$ and $\cT$, under which $\Delta_0(X_i)$ for $i\in \cD^{(k)}$ are i.i.d., separately. Then 
\$
&\EE_c\bigg[ \frac{1}{n_k}\sum_{i\in \cD^{(k)}}   \Big( \Delta_1(X_i) - \Delta_0(X_i) - \frac{n_k}{n_{k,t}} T_i \Delta_1(X_i) + \frac{n_k}{n_{k,c}}(1-T_i)\Delta_0(X_i) \Big) \bigg] \\
&= \EE_c\big[ \Delta_1(X_i) - \Delta_0(X_i)   \big] -  \frac{1}{n_k} \frac{n_k}{n_{k,t}} \sum_{i\in \cD^{(k)}} T_i  \EE_c\big[ \Delta_1(X_i)\big] ~+ \frac{1}{n_k} \frac{n_k}{n_{k,c}} \sum_{i\in \cD^{(k)}} (1-T_i)  \EE_c\big[ \Delta_0(X_i)\big] = 0.
\$
On the other hand, by~\eqref{eq:diff1} and spreading out all cross-terms, we have
\$
&n_k \cdot \EE_c\Big[  \big(\hat\theta_{\deb}^{(k)} - \hat\theta_{\deb}^{(*,k)} \big)^2 \Big] \\
&= \frac{1}{n_k} \EE_c\Bigg[ \bigg(  \sum_{\substack{T_i=1,i\in \cD^{(k)}}} \Big(-\frac{n_{k,c}}{n_{k,t}} \Delta_1(X_i) - \Delta_0(X_i)\Big)  + \frac{1}{n_{k,c}} \sum_{\substack{T_i=0,i\in \cD^{(k)}}}\Big( \Delta_1(X_i) + \frac{n_{k,t}}{n_{k,c}} \Delta_0(X_i)   \Big)  \bigg)^2 \Bigg] \\
&= \frac{n_{k,t}}{n_k}\EE_c\bigg[ \Big(-\frac{n_{k,c}}{n_{k,t}} \Delta_1(X_i) - \Delta_0(X_i)\Big)^2 \bigg] + \frac{n_{k,c}}{n_k} \EE_c\bigg[ \Big( \Delta_1(X_i) + \frac{n_{k,t}}{n_{k,c}} \Delta_0(X_i)   \Big) ^2\bigg]\\
&\qquad + \frac{n_{k,t}(n_{k,t}-1)}{n_{k}} \bigg( \EE_c\Big[-\frac{n_{k,c}}{n_{k,t}} \Delta_1(X_i) - \Delta_0(X_i)\Big] \bigg)^2 + \frac{n_{k,c}(n_{k,c}-1)}{n_k}\bigg( \EE_c\Big[\Delta_1(X_i) + \frac{n_{k,t}}{n_{k,c}} \Delta_0(X_i) \Big]\bigg)^2 \\
&\qquad \qquad + \frac{2n_{k,t}n_{k,c}}{n_k} \EE_c\Big[-\frac{n_{k,c}}{n_{k,t}} \Delta_1(X_i) - \Delta_0(X_i)\Big]\cdot  \EE_c\Big[\Delta_1(X_i) + \frac{n_{k,t}}{n_{k,c}} \Delta_0(X_i) \Big].
\$
Here by the Cauchy-Schwarz inequality, we have 
\$
&\frac{n_{k,t}}{n_k}\EE_c\bigg[ \Big(-\frac{n_{k,c}}{n_{k,t}} \Delta_1(X_i) - \Delta_0(X_i)\Big)^2 \bigg] \leq \frac{n_{k,t}}{n_k} \frac{2n_{k,c}^2}{n_{k,t}^2} \|\Delta\|_2^2 + \frac{2n_{k,t}}{n_{k,t}} \|\Delta_0\|_2^2 = o_P(1),\\
&\frac{n_{k,c}}{n_k} \EE_c\bigg[ \Big( \Delta_1(X_i) + \frac{n_{k,t}}{n_{k,c}} \Delta_0(X_i)   \Big) ^2\bigg] \leq \frac{2n_{k,c}}{n_k} \|\Delta_1\|_2^2 + \frac{n_{k,c}}{n_k} \frac{n_{k,t}^2}{n_{k,c}^2} \|\Delta_0\|_2^2 = o_P(1). 
\$
Meanwhile, letting $e_1 = \EE_c[\Delta_1(X_i)]$ and $e_0 = \EE_c[\Delta_0(X_i)]$, the remaining terms can be written as 
\$
& \frac{n_{k,t}(n_{k,t}-1)}{n_{k}} \bigg( \EE_c\Big[-\frac{n_{k,c}}{n_{k,t}} \Delta_1(X_i) - \Delta_0(X_i)\Big] \bigg)^2 + \frac{n_{k,c}(n_{k,c}-1)}{n_k}\bigg( \EE_c\Big[\Delta_1(X_i) + \frac{ n_{k,t}}{n_{k,c}} \Delta_0(X_i) \Big]\bigg)^2 \\
&  \qquad + \frac{2n_{k,t}n_{k,c}}{n_k} \EE_c\Big[-\frac{n_{k,c}}{n_{k,t}} \Delta_1(X_i) - \Delta_0(X_i)\Big]\cdot  \EE_c\Big[\Delta_1(X_i) + \frac{n_{k,t}}{n_{k,c}} \Delta_0(X_i) \Big] \\
&= \frac{n_{k,t}^2 - n_{k,t}}{n_{k}} \bigg( \frac{n_{k,c}^2}{n_{k,t}^2} e_1^2 - \frac{2n_{k,c}}{n_{k,t}} e_1e_0 + e_0^2 \Big)  + \frac{n_{k,c}^2 - n_{k,c}}{n_k} \bigg( e_1^2 + \frac{2n_{k,t}}{n_{k,c}} e_1e_0 + \frac{n_{k,t}^2}{n_{k,c}^2} e_0^2  \bigg) \\
&\qquad + \frac{2n_{k,t}n_{k,c}}{n_k}\bigg( -\frac{n_{k,c}}{n_{k,t}}  e_1^2 - 2e_1e_0 - \frac{n_{k,t}}{n_{k,c}} e_0^2 \bigg) \\
&= - \frac{ n_{k,t}}{n_k}\bigg( \frac{n_{k,c}^2}{n_{k,t}^2} e_1^2 - \frac{2n_{k,c}}{n_{k,t}} e_1e_0 + e_0^2 \Big) - \frac{n_{k,c}}{n_k}\bigg( e_1^2 + \frac{2n_{k,t}}{n_{k,c}} e_1e_0 + \frac{n_{k,t}^2}{n_{k,c}^2} e_0^2  \bigg) = o_P(1)
\$
since $e_1,e_0=o_P(1)$. Therefore, by Markov's inequality, for any $\epsilon>0$, it holds that 
\$
\PP_c\Big( \sqrt{n_k} \big|\hat\theta_{\deb}^{(k)} - \hat\theta_{\deb}^{(*,k)} \big|> \epsilon   \Big) \stackrel{P}{\to} 0.
\$
Applying Lemma~\ref{lem:cond_to_op} with $\cF_n$ being the $\sigma$-algebra of $\PP_c$, we know $\sqrt{n} \big(\hat\theta_{\deb}^{(k)} - \hat\theta_{\deb}^{(*,k)} \big) = o_P(1)$, both marginally and conditionally on $\cT$. 

We now consider the behavior of $\tilde\theta_{\deb}$ by comparing to its counterpart
\$
\tilde\theta_{\deb}^* = \frac{1}{n }\sum_{i=1}^n   \big( \mu^*_1(X_i) -  \mu^*_0(X_i)\big) + \frac{1}{n_t}\sum_{i~\textrm{treated}} \big( Y_i(1) - \mu^*_1(X_i) \big) - \frac{1}{n_{c}} \sum_{i~\textrm{control}} \big( Y_i(0) -  \mu^*_0(X_i)  \big).
\$
Note that 
\$
\tilde\theta_{\deb} - \tilde\theta_{\deb}^* &= \frac{1}{n}\sum_{i~\textrm{treated}} \Big( -\frac{n_c}{n_t} \Delta_1(X_i) - \Delta_0(X_i) \Big) + \frac{1}{n}\sum_{i~\textrm{control}} \Big( \Delta_1(X_i) + \frac{n_t}{n_c} \Delta_0(X_i)\Big)\\
&=\frac{1}{n}\sum_{k=1}^K \sum_{i\in \cD^{(k)}} T_i\Big( -\frac{n_c}{n_t} \Delta_1(X_i) - \Delta_0(X_i) \Big) + (1-T_i)\Big( \Delta_1(X_i) + \frac{n_t}{n_c} \Delta_0(X_i)\Big)\\
&:= \sum_{k=1}^K  \big(\tilde\theta_{\deb}^{(k)} - \tilde\theta_{\deb}^{(*,k)}\big).
\$
Using the aforementioned notations, for each $k\in [K]$, 
\$
&n\cdot\EE_c\Big[ \big(\tilde\theta_{\deb}^{(k)} - \tilde\theta_{\deb}^{(*,k)}\big)^2  \Big] \\
&= \frac{n_{k,t}}{n}\EE_c\bigg[ \Big(-\frac{n_c}{n_t} \Delta_1(X_i) - \Delta_0(X_i)\Big)^2 \bigg] + \frac{n_{k,c}}{n}\EE_c\bigg[ \Big( \Delta_1(X_i) + \frac{n_t}{n_c} \Delta_0(X_i)\Big)^2 \bigg] \\
&\qquad + \frac{n_{k,t}(n_{k,t}-1)}{n} \bigg( \EE_c\Big[-\frac{n_{c}}{n_{t}} \Delta_1(X_i) - \Delta_0(X_i)\Big] \bigg)^2 + \frac{n_{k,c}(n_{k,c}-1)}{n}\bigg( \EE_c\Big[\Delta_1(X_i) + \frac{n_{t}}{n_{c}} \Delta_0(X_i) \Big]\bigg)^2 \\
&\qquad \qquad + \frac{2n_{k,t}n_{k,c}}{n} \EE_c\Big[-\frac{n_{c}}{n_{t}} \Delta_1(X_i) - \Delta_0(X_i)\Big]\cdot  \EE_c\Big[\Delta_1(X_i) + \frac{n_{t}}{n_{c}} \Delta_0(X_i) \Big] = o_P(1)
\$
by similar arguments as before. Therefore, we have $\sqrt{n} \big(\tilde\theta_{\deb} - \tilde\theta_{\deb}^{*} \big) = o_P(1)$, both marginally and conditionally on $\cT$. 

We now show the asymptotic equivalence of $\tilde\theta_\deb^*$ and $\hat\theta_\deb^*$. Note that 
\$
\sqrt{n}\bigg(\frac{\hat\theta_{\deb}^{(*,k)}}{K} - \tilde\theta_{\deb}^{(*,k)}\bigg) &= \sqrt{n}\bigg(\frac{1}{Kn_k} - \frac{1}{n}\bigg)\sum_{i\in \cD^{(k)}}\big( \mu^*_1(X_i) -  \mu^*_0(X_i)\big) \\
&\qquad + \sqrt{n}\bigg(\frac{1}{Kn_{k,t}} - \frac{1}{n_t}\bigg) \sum_{i\in \cD^{(k)}} T_i\big(Y_i(1) - \mu^*_1(X_i) \big)\\
&\qquad \qquad -   \sqrt{n}\bigg(\frac{1}{Kn_{k,c}} - \frac{1}{n_c}\bigg) \sum_{i\in \cD^{(k)}} T_i\big(Y_i(0) - \mu^*_0(X_i) \big).
\$
Here 
\$
\sqrt{n}\bigg(\frac{1}{Kn_k} - \frac{1}{n}\bigg) =\frac{n -Kn_k}{\sqrt{n}\cdot Kn_k} = O\big(n^{-3/2}\big)
\$
since $|n-Kn_k|\leq 1$. Therefore we have (under $\PP_c$)
\$
\sqrt{n}\bigg(\frac{1}{Kn_k} - \frac{1}{n}\bigg)\sum_{i\in \cD^{(k)}}\big( \mu^*_1(X_i) -  \mu^*_0(X_i)\big) = o_P(1).
\$
With similar arguments for the other two terms, we have $\sqrt{n}\big(\frac{\hat\theta_{\deb}^{(*,k)}}{K} - \tilde\theta_{\deb}^{(*,k)}\big) = o_P(1)$ both marginally and conditionally on $\cT$. Adding up the $K$ terms, we have $\sqrt{n}\big(\tilde\theta_\deb^* - \hat\theta_\deb^*\big) = o_P(1)$. To summarize, so far we have shown that 
\$
\sqrt{n}\big(\tilde\theta_\deb  - \hat\theta_\deb \big),~ \sqrt{n}\big(\tilde\theta_\deb^* - \hat\theta_\deb^*\big),~ \sqrt{n}\big(\tilde\theta_\deb^* -  \hat\theta_\deb\big) = o_P(1)
\$
both marginally and conditionally on $\cT$. Therefore, to study the asymptotic behavior of $\hat\theta_\deb$ and $\tilde\theta_\deb$, it suffices to focus on the oracle $\tilde\theta_\deb^*$. Conditionally on $\cT$, i.e., fixing the treatment and control indices, we know 
\$
\tilde\theta_\deb^* = \frac{1}{n_t} \sum_{i~\textrm{treated}} \Big( Y_i(1) - \frac{n_c}{n} \mu_1^*(X_i) - \frac{n_t}{n} \mu_0^*(X_i) \Big) + \frac{1}{n_c} \sum_{i~\textrm{control}} \Big(  - Y_i(0) + \frac{n_c}{n} \mu_1^*(X_i) + \frac{n_t}{n} \mu_0^*(X_i) \Big),
\$
where the two summations are mutually independent. Furthermore, by law of large numbers and $n_t/n\stackrel{P}{\to}p$, 
\$
\frac{1}{n_t} \sum_{i~\textrm{treated}} \Big(   \frac{n_c}{n} - (1-p)\Big) \Big(\mu_1^*(X_i) - \EE\big[\mu_1^*(X_i)\big]   \Big) = o_P(1/\sqrt{n}).
\$
With similar arguments for other terms, we have 
\$
\sqrt{n}\big(\tilde\theta_\deb^* - \delta\big) & = o_P(1) + \frac{1}{n_t} \sum_{i~\textrm{treated}} \Big( Y_i(1) - (1-p) \mu_1^*(X_i) - p \mu_0^*(X_i) - \EE\big[Y_i(1)\big]  \\
&\qquad \qquad \qquad + (1-p) \EE\big[\mu_1^*(X_i)\big] - p\EE\big[ \mu_1^*(X_i)\big] \Big) \\
&\qquad + \frac{1}{n_c} \sum_{i~\textrm{control}} \Big(  - Y_i(0) + (1-p) \mu_1^*(X_i) + p \mu_0^*(X_i) + \EE\big[Y_i(0)\big]  \\
&\qquad \qquad \qquad + (1-p) \EE\big[\mu_1^*(X_i)\big] - p \EE\big[ \mu_1^*(X_i)\big] \Big).
\$
Since $n_t/n\stackrel{P}{\to} p$, by the central limit theorem, we know 
\#
&\sqrt{n}\big(\tilde\theta_\deb^* - \delta\big) \stackrel{d}{\to} N(0,\sigma_\deb^2),  \label{eq:sig_deb}\\
\textrm{where}~~&\sigma_{\deb}^2 = \frac{1}{p}\Var\Big(Y_i(1) - (1-p) \mu_1^*(X_i) - p \mu_0^*(X_i)\Big) + \frac{1}{1-p}\Var\Big( Y_i(0) - (1-p) \mu_1^*(X_i) - p \mu_0^*(X_i)\Big).\notag
\#
By the asymptotic equivalences established before, we also have 
\$
&\sqrt{n}\big(\tilde\theta_\deb  - \delta\big) \stackrel{d}{\to} N(0,\sigma_\deb^2),\quad \sqrt{n}\big(\hat\theta_\deb  - \delta\big) \stackrel{d}{\to} N(0,\sigma_\deb^2)
\$
for the same $\sigma_\deb^2$ in~\eqref{eq:sig_deb}. Such variance can be consistently estimated via 
\$
&\hat \sigma_\deb^2 = \frac{n}{n_t^2} \sum_{T_i=1} \big(A_i - \bar{A}(1)\big)^2 +  \frac{n}{n_c^2} \sum_{T_i=0} \big(B_i - \bar{B}(0)\big)^2,
\$ 
where, recalling the notations, $\bar{A}(1) = \frac{1}{n_t} \sum_{T_i=1}  A_i$, $\bar{B}(0) = \frac{1}{n_c} \sum_{T_i=0} B_i $, and 
\$
A_i = Y_i(1) - \frac{n_c}{n} \hat\mu_1 (X_i) - \frac{n_t}{n }\hat\mu_0 (X_i),\quad B_i = Y_i(0) - \frac{n_c}{n} \hat\mu_1(X_i) - \frac{n_t}{n} \hat\mu_0(X_i). 
\$
To see the consistency, we write $A_i=A_{1,i} + E_{1,i}$, where 
\$
A_{1,i} = Y_i(1) - \frac{n_c}{n}  \mu^*_1 (X_i) - \frac{n_t}{n } \mu^*_0 (X_i).
\$
By Cauchy-Schwarz inequality, we have 
\$
\bigg|\frac{1}{n_t} \sum_{T_i=1} A_i^2 - \frac{1}{n_t} \sum_{T_i=1} A_{1,i}^2 \bigg|= \bigg|\frac{2}{n_t} \sum_{T_i=1} A_i E_{1,i} + \frac{1}{n_t}\sum_{T_i=1} E_{1,i}^2\bigg| \leq \sqrt{\frac{1}{n_t} \sum_{T_i=1} A_{1,i}^2}\cdot \sqrt{\frac{1}{n_t}\sum_{T_i=1} E_{1,i}^2} + \frac{1}{n_t}\sum_{T_i=1} E_{1,i}^2.
\$
Again by Cauchy-Schwarz inequality, 
\$
\frac{1}{n_t}\sum_{T_i=1} E_{1,i}^2 
= 
\frac{1}{n_t}\sum_{T_i=1} \Big( \frac{n_c}{n} \Delta_1(X_i) + \frac{n_t}{n } \Delta_0(X_i) \Big)^2 
\leq 
\frac{n_c^2}{n^2} \frac{2}{n_t}  \sum_{T_i=1} \Delta_1(X_i)^2 + \frac{n_t^2}{n^2} \frac{2}{n_t} \sum_{T_i=1} \Delta_1(X_i)^2.
\$
For each $k\in[K]$, by Assumption~\ref{assump:conv} we have 
\$
\EE_c\bigg[\frac{2}{n_t}  \sum_{T_i=1} \Delta_1(X_i)^2\bigg] = 2\big\| \Delta_1\big\|_2^2 = o_P(1).
\$
Applying Lemma~\ref{lem:cond_to_op} and with similar arguments, we have 
\$
\frac{1}{n_t}\sum_{T_i=1} E_{1,i}^2 \leq \frac{n_c^2}{n^2} \cdot o_P(1) + \frac{n_t^2}{n^2} \cdot  o_P(1)
= o_P(1).
\$
It's not hard to see that $\frac{1}{n_t} \sum_{T_i=1} A_{1,i}^2 = O_P(1)$, hence 
\#\label{eq:ct_A2}
\frac{1}{n_t} \sum_{T_i=1} A_i^2 = \frac{1}{n_t} \sum_{T_i=1} A_{1,i}^2 + o_P(1).
\#
Meanwhile, by Cauchy-Schwarz inequality, we have 
\$
\bigg|\frac{1}{n_t}\sum_{T_i=1} E_{1,i} \bigg| \leq \sqrt{\frac{1}{n_t}\sum_{T_i=1} E_{1,i}^2} = o_P(1),
\$
hence 
\#\label{eq:ct_A}
\bar{A}(1) = \frac{1}{n_t} \sum_{T_i=1} A_i  = \frac{1}{n_t} \sum_{T_i=1}  A_{1,i}  + o_P(1) = \bar{A^*}(0)+o_P(1).
\#
Combining equations~\eqref{eq:ct_A2} and~\eqref{eq:ct_A}, we know 
\$
\frac{1}{n_t}\sum_{T_i=1} \big(A_i - \bar{A}(1)\big)^2 = \frac{1}{n_t} \sum_{T_i=1} A_i^2 - \Big(\frac{1}{n_t}\sum_{T_i=1} A_{i}\Big)^2 = \frac{1}{n_t} \sum_{T_i=1} A_{1,i}^2 - \Big(\frac{1}{n_t}\sum_{T_i=1} A_{1,i}\Big)^2 + o_P(1).
\$
Furthermore, we write $A_{1,i}= A_{2,i} + E_{2,i}$, where 
$
A_{2,i} = Y_i(1) - (1-p) \mu_1^*(X_i) - p \mu_0^*(X_i).
$
Here 
\$
\frac{1}{n_t}\sum_{T_i=1}E_{2,i}^2 = \frac{1}{n_t}\sum_{T_i=1}  \Big( \frac{n_t}{n} - p\Big)^2 \big(\mu_1^*(X_i) - \mu_0^*(X_i)\big)^2 = o_P(1),
\$
since $n_t/n - p =o_P(1)$. By similar arguments as for $A_i$ and $A_{i,1}$, we see that 
\$
&\frac{1}{n_t} \sum_{T_i=1} A_i^2 = \frac{1}{n_t} \sum_{T_i=1} A_{1,i}^2 + o_P(1) = \frac{1}{n_t} \sum_{T_i=1} A_{2,i}^2 + o_P(1),\\
&\frac{1}{n_t} \sum_{T_i=1} A_i  = \frac{1}{n_t} \sum_{T_i=1}  A_{1,i}  + o_P(1) = \frac{1}{n_t} \sum_{T_i=1}  A_{2,i}  + o_P(1).
\$
Therefore by the law of large numbers, we have 
\$
\frac{1}{n_t}\sum_{T_i=1} \big(A_i - \bar{A}(1)\big)^2 = \frac{1}{n_t} \sum_{T_i=1} A_{2,i}^2 - \Big(\frac{1}{n_t}\sum_{T_i=1} A_{2,i}\Big)^2 + o_P(1) = \Var(A_{2,i}) +o_P(1).
\$
By same arguments to $B_i$, we see that 
\$
\frac{1}{n_c}\sum_{T_i=0} \big(B_i - \bar{B}(0)\big)^2 = \Var(B_{2,i}) + o_P(1),
\$
where $B_{2,i} = Y_i(0) - (1-p) \mu_1^*(X_i) - p \mu_0^*(X_i)$. Further since $n/n_t\pto 1/p$ and $n/n_c\pto 1/(1-p)$, we arrive at 
\$
\hat \sigma_\deb^2 &= \frac{n}{n_t } \cdot \frac{1}{n_t} \sum_{T_i=1} \big(A_i - \bar{A}(1)\big)^2 +  \frac{n}{n_c } \cdot \frac{1}{n_c} \sum_{T_i=0} \big(B_i - \bar{B}(0)\big)^2 \\
&= \frac{1}{p}\Var(A_{2,i}) + \frac{1}{1-p}\Var(B_{2,i}) + o_P(1) = \sigma_{\deb}^2 +o_P(1).
\$
Therefore, we complete the proof of Theorem~\ref{thm:asymp_var}.
\end{proof}

\section{Proof of results on ratio metrics}

\subsection{Proof of asymptotic inference: without SDA} \label{app:rr_inference}
\begin{proof}[Proof of Theorem~\ref{thm:rr_inference}]
We first establish the asymptotics of $\hat\delta$ by connecting to its oracle counterpart 
\$
\hat\delta^* = \frac{ \EE[Y_i(1)] + \frac{1}{n} \sum_{i=1}^n A_i^*  }{ \EE[Z_i(1)] + \frac{1}{n} \sum_{i=1}^n B_i^*}- \frac{\EE[Y_i(0)] +  \sum_{i=1}^n C_i^*  }{ \EE[Z_i(0)] + \sum_{i=1}^n D_i^*} = \frac{\EE[Y_i(1)] + \bar{A^*}}{\EE[Z_i(1)] +\bar{B^*}} - \frac{\EE[Y_i(0)] + \bar{C^*}}{\EE[Z_i(0)] +\bar{D^*}},
\$
with $A_i^*$, $B_i^*$, $C_i^*$, $D_i^*$ defined in~\eqref{eq:rr_var}. To this end, we note that 
\$
&\sqrt{n}\big( \bar{A} - \bar{A^*} - \EE[Y_i(1)]\big) = \underbrace{\frac{1}{\sqrt{n}} \sum_{i=1}^n \Big(1 - \frac{T_i}{\hat p} \Big) \big(\hat\mu_1^Y(X_i) - \mu_{1,Y}^*(X_i) \big)}_{\textrm{(i)}} \\
&\quad + \underbrace{
    \sqrt{n}\Bigg(  \frac{1}{n} \sum_{i=1}^n \mu_{1,Y}^*(X_i) + \frac{1}{n_t} \sum_{T_i=1} \big(  Y_i(1) - \mu_{1,Y}^*(X_i) \big) - \EE[Y_i(1)] }_{\textrm{(ii)}}\\
&\qquad 
   \underbrace{ - \frac{1}{n} \sum_{i=1}^n \tilde\mu_{1,Y}^*(X_i) - \frac{1}{np} \sum_{T_i=1} \big(Y_i - \EE[Y_i(1)] -\tilde\mu_{1,Y}^*(X_i) \big)   \Bigg)
}_{\textrm{(ii)}}.
\$
We treat the two terms (i) and (ii) separately. Firstly, we denote  $\Delta_1(X_i) = \hat\mu_{1}^Y(X_i) - \mu_{1,Y}^*(X_i)$. Fix any $k\in[K]$, and denote $\EE_c$ as the expectation taken conditional on the sample splitting, $\cD^{(-k)}$ and $\cT$, under which $\Delta_1(X_i)$ for $i\in \cD^{(k)}$ are i.i.d. Under Assumption~\ref{assump:rr_conv}, we have $\EE_c[\Delta_1(X_i)^2]\stackrel{P}{\to}0$. Then  a straightforward calculation shows 
\$
\EE_c\Bigg[ \frac{1}{n_k} \bigg(  \sum_{i\in \cD^{(k)}} \Delta_1(X_i) - \frac{n_k}{n_{k,t}} \sum_{i\in \cD^{(k)},T_i=1} \Delta_1(X_i)   \bigg)^2\Bigg] = o_P(1).
\$
Then a limiting argument of a.s.~convergence and convergence in probability leads to 
\$
\eta_k:=\frac{1}{\sqrt{n_k}} \bigg(  \sum_{i\in \cD^{(k)}} \Delta_1(X_i) - \frac{n_k}{n_{k,t}} \sum_{i\in \cD^{(k)},T_i=1} \Delta_1(X_i)   \bigg) = o_P(1).
\$
Summing over $k\in[K]$ for a fixed number $K$ of folds, we have 
\$
\textrm{(i)}&=  \sum_{k=1}^K \frac{1}{\sqrt{n}}\bigg( \sum_{i\in \cD^{(k)}} \Delta_1(X_i) - \frac{n}{n_t} \sum_{i\in \cD^{(k)},T_i=1} \Delta_1(X_i) \bigg)\\
&= \sum_{k=1}^K \frac{\sqrt{n_k}}{\sqrt{n}} \bigg( \eta_k + \Big(\frac{n_k}{n_{k,t}} - \frac{n}{n_t} \Big) \sum_{i\in \cD^{(k)},T_i=1} \Delta_1(X_i)\bigg).
\$
Here 
\$
\bigg| \Big(\frac{n_k}{n_{k,t}} - \frac{n}{n_t} \Big) \sum_{i\in \cD^{(k)},T_i=1} \Delta_1(X_i)  \bigg| \leq \frac{n+n_t}{n_t\cdot (n_t/K-1)} \cdot \bigg| \sum_{i\in \cD^{(k)},T_i=1} \Delta_1(X_i)  \bigg| = o_P(1),
\$
since $\sum_{i\in \cD^{(k)},T_i=1} \Delta_1(X_i) = O_P(n)$ by $\EE_c[\Delta_1(X_i)^2]\stackrel{P}{\to}0$ and a standard limiting argument. Therefore we have $\textrm{(i)} = o_P(1)$. Moreover, note that 
\$
\textrm{(ii)}&= \sqrt{n}\bigg( \EE\big[\mu_{1,Y}^*(X_i)\big] + \frac{1}{n_t} \sum_{T_i=1} \big(  Y_i(1) -\EE[Y_i(1)]-  \mu_{1,Y}^*(X_i) \big)   - \frac{1}{np} \sum_{T_i=1} \big(Y_i - \EE[Y_i(1)] -\tilde\mu_{1,Y}^*(X_i) \big)   \bigg) \\
&=  \sqrt{n}\bigg( \frac{1}{n_t} -\frac{1}{np}\bigg) \cdot \sum_{i=1}^n T_i \Big(  Y_i(1) -\EE[Y_i(1)]-  \tilde\mu_{1,Y}^*(X_i) \Big) =   \sqrt{n}\bigg( \frac{1}{n_t} -\frac{1}{np}\bigg) \cdot O_P(\sqrt{n}) = o_P(1).
\$
Here the third equality follows from the fact that $T_i \big(  Y_i(1) -\EE[Y_i(1)]-  \tilde\mu_{1,Y}^*(X_i) \big)$ are i.i.d.~centered random variables. Putting them together, we arrive at 
\$
\bar{A} = o_P\big(1/\sqrt{n}\big) + \EE\big[Y_i(1)\big] + \bar{A^*}.
\$
Following exactly the same arguments, we can show that 
\$
\bar{B} &= o_P\big(1/\sqrt{n}\big) + \EE\big[Z_i(1)\big] + \bar{B^*},\\
\bar{C} &= o_P\big(1/\sqrt{n}\big) + \EE\big[Y_i(0)\big] + \bar{C^*},\\
\bar{D} &= o_P\big(1/\sqrt{n}\big) + \EE\big[Z_i(0)\big] + \bar{D^*}.
\$
Therefore, we have 
\$
\hat\delta = \bar{A} / \bar{B} - \bar{C} /\bar{D} = o_P\big(1/\sqrt{n}\big) + \frac{\EE\big[Y_i(1)\big] + \bar{A^*}}{\EE\big[Z_i(1)\big] + \bar{B^*}} + \frac{\EE\big[Y_i(0)\big] + \bar{C^*}}{\EE\big[Z_i(0)\big] + \bar{D^*}} = \hat\delta^* +  o_P\big(1/\sqrt{n}\big).
\$
Moerover, by Delta method or Taylor expansion for $\hat\delta^*$, we obtain the asymptotic expansion 
\$
\hat\delta = \hat\delta^* +  o_P\big(1/\sqrt{n}\big) =  \delta + \frac{1}{n} \sum_{i=1}^n \phi_\delta(Y_i,Z_i,X_i,T_i)  + o_P\big(1/\sqrt{n}\big),
\$
which leads to the asymptotic variance in~\eqref{eq:rr_var}. 

We now proceed to establish the consistency of the variance estimator. Firstly, for notational simplicity, we write $\phi_\delta(Y_i,Z_i,X_i,T_i) = \theta_A A_i^* + \theta_B B_i^* + \theta_C C_i^*+ \theta_D D_i^*$, where 
\$
\theta_A = \frac{1}{\EE[Z_i(1)]},~\theta_B = - \frac{\EE[Y_i(1)]}{\EE[Z_i(1)]^2},~ \theta_c = -\frac{1}{\EE[Z_i(0)]},~ \theta_D = \frac{\EE[Y_i(0)]}{\EE[Z_i(0)]^2}.
\$
Writing $\tilde{Y}_i(w)=Y_i(w)-\EE[Y_i(w)]$ and $\tilde{Z}_i(w) = Z_i(w) - \EE[Z_i(w)]$ for $w\in\{0,1\}$, we have 
\$
& \phi_\delta (Y_i,Z_i,X_i,T_i) = T_i \phi_1(Y_i(1),Z_i(1),X_i,T_i) + (1-T_i) \phi_0(Y_i(0),Z_i(0),X_i,T_i),
\$
where 
\$
&\phi_1(Y_i(1),Z_i(1),X_i,T_i) \\
&=   \theta_A \Big( \tilde\mu_{1,Y}^*(X_i) + \frac{1}{p}\big(\tilde{Y}_i(1) - \tilde\mu_{1,Y}^*(X_i) \big) \Big) \\
&\qquad +  
\theta_B \Big( \tilde\mu_{1,Z}^*(X_i) + \frac{1}{p}\big(\tilde{Z}_i(1) - \tilde\mu_{1,Z}^*(X_i) \big) \Big) + \theta_C \tilde\mu_{0,Y}^*(X_i) + \theta_D \tilde\mu_{0,Z}^*(X_i) , \\
&\phi_0(Y_i(0),Z_i(0),X_i,T_i) \\
&=    \theta_A   \tilde\mu_{1,Y}^*(X_i) + \theta_B  \tilde\mu_{1,Z}^*(X_i)   + \theta_C \Big(  \tilde\mu_{0,Y}^* (X_i) + \frac{1}{1-p}\big(\tilde{Y}_i(0)-\tilde\mu_{0,Y}^*(X_i)\Big) \\
&\qquad   + \theta_D \Big(  \tilde\mu_{0,Z}^* (X_i) + \frac{1}{1-p}\big(\tilde{Z}_i(0)-\tilde\mu_{0,Z}^*(X_i)\Big).
\$
Therefore, since $T_i$ are independent of all other random variables, we have 
\$
\Var\big(\phi_\delta(Y_i,Z_i,X_i,T_i)\big) &= p\cdot  \Var\big( \phi_1(Y_i(1),Z_i(1),X_i,T_i)\big) + (1-p)\cdot \Var\big( \phi_0(Y_i(0),Z_i(0),X_i,T_i)\big).
\$
Since $\hat{p}\stackrel{P}{\to} p$, it suffices to show that $\hat{V}_1 \stackrel{P}{\to} \Var\big( \phi_1(Y_i(1),Z_i(1),X_i,T_i)\big)$ and $\hat{V}_0  \stackrel{P}{\to}\Var\big( \phi_0(Y_i(0),Z_i(0),X_i,T_i)\big)$. We showcase the first convergence result, while the second follows exactly the same arguments. Since adding constants to random variables does not change the variance, we have 
\#\label{eq:rr_V1}
\Var\big( \phi_1(Y_i(1),Z_i(1),X_i,T_i) \big) = \EE[D_{1,i}^2] - \EE[D_{1,i}]^2 = o_P(1) + \frac{1}{n_t}\sum_{T_i=1} D_{1,i}^2 - \Big( \frac{1}{n_t}\sum_{T_i=1} D_{1,i} \Big)^2
\#
according to the law of large numbers, 
where 
\$
D_{1,i} &=  \theta_A \Big( \mu_{1,Y}^*(X_i) + \frac{1}{p}\big( {Y}_i(1) -  \mu_{1,Y}^*(X_i) \big) \Big) \\
&\qquad +  
\theta_B \Big( \mu_{1,Z}^*(X_i) + \frac{1}{p}\big( {Z}_i(1) -  \mu_{1,Z}^*(X_i) \big) \Big) + \theta_C  \mu_{0,Y}^*(X_i) + \theta_D  \mu_{0,Z}^*(X_i) \\
&= \theta_A\Big(1-\frac{1}{p}\Big) \mu_{1,Y}^*(X_i) + \frac{\theta_A}{p} Y_i(1) + \theta_B\Big(1-\frac{1}{p}\Big) \mu_{1,Z}^*(X_i) + \frac{\theta_B}{p}Z_i(1) + \theta_C  \mu_{0,Y}^*(X_i) + \theta_D  \mu_{0,Z}^*(X_i).
\$
Meanwhile, the law of large numbers gives $\hat\theta_A\stackrel{P}{\to} \theta_A$, $\hat\theta_B\stackrel{P}{\to} \theta_B$, $\hat\theta_C\stackrel{P}{\to} \theta_C$, $\hat\theta_D\stackrel{P}{\to} \theta_D$ for 
\$
\hat\theta_A = 1/\bar{Z}(1), \quad \hat\theta_B = - \bar{Y}(1)/\bar{Z}(1)^2,\quad \hat\theta_C = -1/\bar{Z}(0),\quad \hat\theta_D = \bar{Y}(0)/\bar{Z}(0)^2.
\$
Writing $D_{1,i} = D_{2,i} +E_{2,i}$, where 
\$
D_{2,i} =\hat\theta_A\Big(1-\frac{1}{\hat p}\Big) \mu_{1,Y}^*(X_i) + \frac{\hat\theta_A}{\hat p} Y_i(1) + \hat\theta_B\Big(1-\frac{1}{\hat p}\Big) \mu_{1,Z}^*(X_i) + \frac{\hat\theta_B}{\hat p}Z_i(1) + \hat\theta_C  \mu_{0,Y}^*(X_i) + \hat\theta_D  \mu_{0,Z}^*(X_i).
\$
Then by Cauchy-Schwarz inequality, 
\$
\bigg|\frac{1}{n_t}\sum_{T_i=1} D_{1,i}^2  - \frac{1}{n_t}\sum_{T_i=1} D_{2,i}^2\bigg| = \bigg|\frac{2}{n_t} \sum_{T_i=1} D_{2,i}E_{2,i} +  \frac{1}{n_t}\sum_{T_i=1} E_{2,i}^2 \bigg|\leq \sqrt{\frac{1}{n_t}\sum_{T_i=1} D_{2,i}^2} \cdot \sqrt{\frac{1}{n_t}\sum_{T_i=1} E_{2,i}^2} + \frac{1}{n_t}\sum_{T_i=1} E_{2,i}^2.
\$
Again by Cauchy-Schwarz inequality, 
\#
\frac{1}{n_t}\sum_{T_i=1} E_{2,i}^2 &= \frac{1}{n_t} \sum_{T_i=1}\bigg[ \bigg( \hat\theta_A\Big(1-\frac{1}{\hat p}\Big) -  \theta_A\Big(1-\frac{1}{  p}\Big) \bigg) \mu_{1,Y}^*(X_i) + \Big( \frac{\hat\theta_A}{\hat p} - \frac{ \theta_A}{ p}  \Big) Y_i(1)+ \big( \hat\theta_C - \theta_C\big)  \mu_{0,Y}^*(X_i)\notag \\
&\qquad \qquad  + \bigg( \hat\theta_B\Big(1-\frac{1}{\hat p}\Big) -  \theta_B\Big(1-\frac{1}{  p}\Big) \bigg)  \mu_{1,Z}^*(X_i) + \Big(  \frac{\hat\theta_B}{\hat p} - \frac{\theta_B}{p}\Big) Z_i(1)  + \big( \hat\theta_D - \theta_D\big)  \mu_{0,Z}^*(X_i)  \bigg]^2 \notag \\
&\leq \frac{6}{n_t} \sum_{T_i=1} \bigg( \hat\theta_A\Big(1-\frac{1}{\hat p}\Big) -  \theta_A\Big(1-\frac{1}{  p}\Big) \bigg)^2 \mu_{1,Y}^*(X_i)^2  
+ \frac{6}{n_t} \sum_{T_i=1}  \Big( \frac{\hat\theta_A}{\hat p} - \frac{ \theta_A}{ p}  \Big)^2 Y_i(1)^2 \notag  \\
&\qquad \qquad  +\frac{6}{n_t} \sum_{T_i=1} \bigg( \hat\theta_B\Big(1-\frac{1}{\hat p}\Big) -  \theta_B\Big(1-\frac{1}{  p}\Big) \bigg)^2  \mu_{1,Z}^*(X_i)^2 + \frac{6}{n_t} \sum_{T_i=1}\Big(  \frac{\hat\theta_B}{\hat p} - \frac{\theta_B}{p}\Big)^2 Z_i(1)^2\notag \\
&\qquad \qquad \qquad +\frac{6}{n_t} \sum_{T_i=1} \big( \hat\theta_C - \theta_C\big)^2  \mu_{0,Y}^*(X_i)^2 + \frac{6}{n_t} \sum_{T_i=1}\big( \hat\theta_D - \theta_D\big)  \mu_{0,Z}^*(X_i). \label{eq:rr_e2}
\#
Since $\hat\theta_A \pto \theta_A$ and $\hat p \pto p$, by the law of large numbers we have 
\$
\frac{6}{n_t} \sum_{T_i=1}  \hat\theta_A\Big(1-\frac{1}{\hat p}\Big) = o_P(1).
\$
Similarly, all terms in the right-most handed equation~\eqref{eq:rr_e2} are $o_P(1)$, which implies
\$
\frac{1}{n_t}\sum_{T_i=1} D_{1,i}^2  =  \frac{1}{n_t}\sum_{T_i=1} D_{2,i}^2 + o_P(1).
\$
Now writing $D_{2,i} = D_{3,i} + E_{3,i}$ so that 
\$
D_{3,i} =\hat\theta_A\Big(1-\frac{1}{\hat p}\Big) \hat\mu_{1,Y}(X_i) + \frac{\hat\theta_A}{\hat p} Y_i(1) + \hat\theta_B\Big(1-\frac{1}{\hat p}\Big) \hat \mu_{1,Z} (X_i) + \frac{\hat\theta_B}{\hat p}Z_i(1) + \hat\theta_C  \hat\mu_{0,Y} (X_i) + \hat\theta_D  \hat\mu_{0,Z} (X_i).
\$
It's straightforward to see that $D_{3,i} = d_{1,i}$ for $d_{1,i}$ defined in 
Theorem~\ref{thm:rr_inference}. 
Similar arguments yield 
\$
\bigg|\frac{1}{n_t}\sum_{T_i=1} D_{3,i}^2  - \frac{1}{n_t}\sum_{T_i=1} D_{2,i}^2\bigg| 
&= \bigg|-\frac{2}{n_t} \sum_{T_i=1} D_{2,i}E_{3,i} +  \frac{1}{n_t}\sum_{T_i=1} E_{3,i}^2 \bigg|\\
&\leq \sqrt{\frac{1}{n_t}\sum_{T_i=1} D_{2,i}^2} \cdot \sqrt{\frac{1}{n_t}\sum_{T_i=1} E_{3,i}^2} + \frac{1}{n_t}\sum_{T_i=1} E_{3,i}^2.
\$
Here similar to the arguments in~\eqref{eq:rr_e2}, Cauchy-Schwarz inequality implies 
\#
\frac{1}{n_t}\sum_{T_i=1} E_{3,i}^2 &\leq  \hat\theta_A^2\Big(1-\frac{1}{\hat p}\Big)^2  \cdot \frac{4}{n_t} \sum_{T_i=1}  \big[ \hat\mu_1^Y(X_i) - \mu_{1,Y}^*(X_i)\big]^2    +\hat\theta_B^2 \Big(1-\frac{1}{\hat p}\Big)^2 \cdot \frac{4}{n_t} \sum_{T_i=1}  \big[ \hat\mu_1^Z(X_i) - \mu_{1,Z}^*(X_i)\big]^2 \notag \\
&\qquad \qquad  +\hat\theta_C ^2 \cdot \frac{4}{n_t} \sum_{T_i=1}   \big[ \hat\mu_0^Y(X_i) -  \mu_{0,Y}^*(X_i)\big]^2 + \hat\theta_D ^2  \cdot \frac{4}{n_t} \sum_{T_i=1}  \big[ \hat\mu_0^Z(X_i) -  \mu_{0,Z}^*(X_i)\big]^2 . \label{eq:rr_e3}
\#
We decompose into the $K$ cross-fitting folds, so that 
\$
\frac{4}{n_t} \sum_{T_i=1}  \big[ \hat\mu_1^Y(X_i) - \mu_{1,Y}^*(X_i)\big]^2 = \sum_{k=1}^K \frac{4}{n_t} \sum_{i\in \cD^{(k)}, T_i=1} \big[ \hat\mu_1^Y(X_i) - \mu_{1,Y}^*(X_i)\big]^2.
\$
Note that under $\EE_c$, the conditional expectation given $\cD^{(-k)}$ and $\cT$, the summation terms $ \big[ \hat\mu_1^Y(X_i) - \mu_{1,Y}^*(X_i)\big]^2$ for $i\in \cD^{(k)}$ and $T_i=1$ are i.i.d., and 
\$
\EE_c\bigg[\frac{4}{n_t} \sum_{i\in \cD^{(k)}, T_i=1} \big[ \hat\mu_1^Y(X_i) - \mu_{1,Y}^*(X_i)\big]^2 \bigg] = \frac{4n_{k,t}}{n_t} \big\|\hat\mu_1^{Y,(k)} - \hat\mu_{1,Y}^*\big\|_2 = o_P(1).
\$
By a standard argument on a.s.~convergence and convergence in probability, we have 
\$
\frac{4}{n_t} \sum_{i\in \cD^{(k)}, T_i=1} \big[ \hat\mu_1^Y(X_i) - \mu_{1,Y}^*(X_i)\big]^2 = o_P(1)
\$
for all $k\in[K]$. Adding them up, we have 
\$
\frac{4}{n_t} \sum_{T_i=1}  \big[ \hat\mu_1^Y(X_i) - \mu_{1,Y}^*(X_i)\big]^2 = o_P(1).
\$
Exactly the same arguments for the rest terms in~\eqref{eq:rr_e3} leads to $\frac{1}{n_t}\sum_{T_i=1} E_{3,i}^2 = o_P(1)$. Also it's straightforward that $\frac{1}{n_t}\sum_{T_i=1} D_{2,i}^2 = O_P(1)$, hence 
\#\label{eq:rr_d1^2}
\frac{1}{n_t}\sum_{T_i=1} D_{1,i}^2  =  \frac{1}{n_t}\sum_{T_i=1} D_{2,i}^2 + o_P(1) =  \frac{1}{n_t}\sum_{T_i=1} D_{3,i}^2 + o_P(1).
\#
Moreover, by Cauchy-Schwarz inequality, we have 
\$
o_P(1) = \frac{1}{n_t }\sum_{T_i=1} E_{2,i}^2 \geq \bigg(\frac{1}{n_t }\sum_{T_i=1} E_{2,i}\bigg)^2,\quad o_P(1) = \frac{1}{n_t }\sum_{T_i=1} E_{3,i}^2 \geq \bigg(\frac{1}{n_t }\sum_{T_i=1} E_{3,i}\bigg)^2,
\$
which indicates
\#\label{eq:rr_d1}
\frac{1}{n_t}\sum_{T_i=1} D_{1,i} = \frac{1}{n_t}\sum_{T_i=1} D_{3,i} + \frac{1}{n_t }\sum_{T_i=1} E_{2,i} + \frac{1}{n_t }\sum_{T_i=1} E_{3,i} =  \frac{1}{n_t}\sum_{T_i=1} D_{3,i} + o_P(1).
\#
Combining equations~\eqref{eq:rr_V1},~\eqref{eq:rr_d1^2} and~\eqref{eq:rr_d1}, we have 
\$
\Var\big( \phi_1(Y_i(1),Z_i(1),X_i,T_i) \big) =    \frac{1}{n_t}\sum_{T_i=1} D_{3,i}^2 - \Big( \frac{1}{n_t}\sum_{T_i=1} D_{3,i} \Big)^2 + o_P(1) = \hat{V}_1 + o_P(1),
\$
By similar arguments for $\phi_0(Y_i(0),Z_i(0),X_i,T_i)$, we can show 
\$
\Var\big( \phi_0(Y_i(0),Z_i(0),X_i,T_i) \big) =   \hat{V}_0 + o_P(1).
\$
Furthermore, the fact that  $\hat{p}\pto p$ establishes the consistency of $\hat{\sigma}_\delta^2$ for $\sigma_\delta^2$. By Slutsky's theorem, 
\$
\sqrt{n} (\hat\delta - \delta) /  \hat{\sigma}_\delta \stackrel{d}{\to} N(0,1),
\$
Therefore, we conclude the proof of Theorem~\ref{thm:rr_inference}.
\end{proof}

\subsection{Proof of asymptotic inference: under SDA} \label{app:r_inference}
\begin{proof}[Proof of Theorem~\ref{thm:r_inference}]
Firstly, we note that $\frac{1}{n}\sum_{i=1}^n \Gamma_i$ coincides with $\tilde\theta_\deb$ defined in~\eqref{eq:db_count_0}, except that $\hat\mu_w(X_i)$ is substitued by $\hat\mu_w(X_i,Z_i)$ here for $w\in\{0,1\}$. Therefore, by exactly the same arguments as in the proof of Theorem~\ref{thm:asymp_var} for $\tilde\theta_{\deb}$, we can show that 
\$
\frac{1}{n}\sum_{i=1}^n \Gamma_i &= \EE\big[Y_i(1)\big]- \EE\big[Y_i(0)\big] + \frac{1}{n} \sum_{i=1}^n \big( \mu_1^*(X_i,Z_i) - \mu_0^*(X_i,Z_i)\big) \\
&\qquad + \frac{1}{n_t}\sum_{T_i=1} \big(Y_i(1) - \mu_1^*(X_i,Z_i)\big) - \frac{1}{n_c} \sum_{T_i=0} \big( Y_i(0) - \mu_0^*(X_i,Z_i)\big) + o_P\big(1/\sqrt{n}\big) \\
&= \EE\big[Y_i(1)\big]- \EE\big[Y_i(0)\big] + \frac{1}{n} \sum_{i=1}^n \big( \tilde\mu_1^*(X_i,Z_i) - \tilde\mu_0^*(X_i,Z_i)\big) \\
&\qquad + \frac{1}{n_t}\sum_{T_i=1} \big(\tilde Y_i(1) - \tilde\mu_1^*(X_i,Z_i)\big) - \frac{1}{n_c} \sum_{T_i=0} \big(\tilde Y_i(0) - \tilde\mu_0^*(X_i,Z_i)\big) + o_P\big(1/\sqrt{n}\big),
\$
where $\tilde{Y}_i(w) = Y_i(w) - \EE[Y_i(w)]$ and $\tilde\mu_w^*(X_i,Z_i) = \mu_w^*(X_i,Z_i) - \EE[\mu_w^*(X_i,Z_i)]$ for $w\in \{0,1\}$ are centered random variables. Therefore 
\$
&\frac{1}{n} \sum_{i=1}^n \big( \tilde\mu_1^*(X_i,Z_i) - \tilde\mu_0^*(X_i,Z_i)\big)  + 
\frac{1}{n_t}\sum_{T_i=1} \big(\tilde Y_i(1) - \tilde\mu_1^*(X_i,Z_i)\big) - \frac{1}{n_c} \sum_{T_i=0} \big(\tilde Y_i(0) - \tilde\mu_0^*(X_i,Z_i)\big) \\
&= \frac{1}{n} \sum_{i=1}^n \bigg[ \big( \tilde\mu_1^*(X_i,Z_i) - \tilde\mu_0^*(X_i,Z_i)\big)  +  \frac{T_i}{\hat p}
    \big(\tilde Y_i(1) - \tilde\mu_1^*(X_i,Z_i)\big) - \frac{1-T_i}{1-\hat p} \big(\tilde Y_i(0) - \tilde\mu_0^*(X_i,Z_i)\big) \bigg]\\
&= \frac{1}{n} \sum_{i=1}^n \Gamma_i^* +
\Big(\frac{1}{\hat p}-\frac{1}{p}\Big)\cdot \frac{1}{n} \sum_{i=1}^n T_i\big(\tilde Y_i(1) - \tilde\mu_1^*(X_i,Z_i)\big) \\
&\qquad +  
\Big(\frac{1}{1-\hat p}-\frac{1}{1-p}\Big) \cdot \frac{1}{n} \sum_{i=1}^n  (1-T_i) \cdot \big(\tilde Y_i(0) - \tilde\mu_0^*(X_i,Z_i)\big),
\$
where $\Gamma_i^*$ is defined in \eqref{eq:r_infl}. 
Since $\hat p\pto p$ and $\tilde{Y}_i(w)-\tilde\mu_w^*(X_i,Z_i)$ are i.i.d.~with mean zero, the last two terms in the above summation are both $o_P(1/\sqrt{n})$. Therefore 
\$
\frac{1}{n} \sum_{i=1}^n \Gamma_i = \EE\big[Y_i(1)\big]- \EE\big[Y_i(0)\big] + \frac{1}{n} \sum_{i=1}^n \Gamma_i^* + o_P(1/\sqrt{n}).
\$
By a standard argument of delta method or Taylor expansion, we see that 
\$
\hat\delta' &= \frac{\frac{1}{n}\sum_{i=1}^n \Gamma_i }{\frac{1}{n}\sum_{i=1}^n Z_i} = \frac{\EE [Y_i(1) ]- \EE [Y_i(0) ] + \frac{1}{n} \sum_{i=1}^n \Gamma_i^* + o_P(1/\sqrt{n})}{\EE[Z_i] +\frac{1}{n}\sum_{i=1}^n (Z_i - \EE[Z_i])} \\
&= \frac{\EE [Y_i(1) ]- \EE [Y_i(0) ]}{\EE[Z_i]} + \frac{1}{n}\sum_{i=1}^n \phi_{\delta}'(Y_i,Z_i,X_i,T_i) +o_P(1/\sqrt{n}).
\$
This leads to the asymptotic distribution $\sqrt{n}(\hat\delta' - \delta') \stackrel{d}{\to} N(0,\sigma_{\delta'}^2)$, where 
\$
\sigma_{\delta'}^2 = \Var\big( \phi_{\delta}'(Y_i,Z_i,X_i,T_i) \big).
\$
We now proceed to establish the asymptotic consistency of the variance estimator. To this end, writing $1/\EE[Z_i] = \theta_1$ and $- \frac{\EE[Y_i(1)]-\EE[Y_i(0)]}{\EE[Z_i]^2} = \theta_2$, we have 
\$
\phi_\delta'(Y_i,Z_i,X_i,T_i) &= T_i \cdot \phi_1(Y_i(1),Y_i(0),Z_i,X_i) + (1-T_i) \cdot \phi_0(Y_i(1),Y_i(0), Z_i,X_i),\\
\text{where}\quad \phi_1(Y_i(1),Y_i(0),Z_i,X_i) &= \theta_1 \tilde\mu_1^*(X_i,Z_i) - \theta_1\tilde\mu_0^*(X_i,Z_i) + \frac{\theta_1}{p}\big(\tilde{Y}_i(1) - \tilde\mu_1^*(X_i,Z_i)\big)  + \theta_2 \big(Z_i - \EE[Z_i]\big),\\
\phi_0(Y_i(1),Y_i(0),Z_i,X_i) &= \theta_1 \tilde\mu_1^*(X_i,Z_i) - \theta_1\tilde\mu_0^*(X_i,Z_i) - \frac{\theta_1}{1-p}\big(\tilde{Y}_i(0) - \tilde\mu_0^*(X_i,Z_i)\big)  + \theta_2 \big(Z_i - \EE[Z_i]\big).
\$
Since $T_i$'s are independent of all other random variables, we have 
\$
\Var\big( \phi_{\delta}'(Y_i,Z_i,X_i,T_i) \big) = p\cdot \Var\big( \phi_1(Y_i(1),Y_i(0),Z_i,X_i) \big) + (1-p)\cdot \Var\big(\phi_0(Y_i(1),Y_i(0),Z_i,X_i)   \big).
\$
Since adding constants does not change variance, the law of large numbers implies
\$
\Var\big( \phi_1(Y_i(1),Y_i(0),Z_i,X_i) \big) &= \Var\Big(   \theta_1  \mu_1^*(X_i,Z_i) - \theta_1 \mu_0^*(X_i,Z_i) + \frac{\theta_1}{p}\big( {Y}_i(1) -  \mu_1^*(X_i,Z_i)\big)  + \theta_2 Z_i \Big)\\
&= \frac{1}{n_t} \sum_{T_i=1} (D_i^*)^2 - \Big(  \frac{1}{n_t} \sum_{T_i=1} D_i^* \Big)^2 + o_P(1),
\$
where we write
\$
D_{i}^* &= \theta_1  \mu_1^*(X_i,Z_i) - \theta_1 \mu_0^*(X_i,Z_i) + \frac{\theta_1}{p}\big( {Y}_i(1) -  \mu_1^*(X_i,Z_i)\big)  + \theta_2 Z_i \\
\text{and}\quad D_{1,i} &= \hat \theta_1  \hat \mu_1(X_i,Z_i) -\hat  \theta_1\hat  \mu_0 (X_i,Z_i) + \frac{\hat \theta_1}{\hat  p}\big( {Y}_i(1) -  \hat \mu_1 (X_i,Z_i)\big)  + \hat \theta_2 Z_i,
\$
with $\hat\theta_1 = 1/\bar{Z} \pto \theta_1$, $\hat\theta_2 = -\frac{\bar{Y}(1)-\bar{Y}(0)}{\bar{Z}^2} \pto \theta_2$ and $\hat p= n_t/n\pto p$ being consistent estimators. Now we define $D_{1,i} = D_{2,i} + E_{2,i}$, where 
\$
D_{2,i} = \hat \theta_1   \mu_1^*(X_i,Z_i) -\hat  \theta_1  \mu_0^*(X_i,Z_i) + \frac{\hat \theta_1}{\hat  p}\big( {Y}_i(1) -  \mu_1^*(X_i,Z_i)\big)  + \hat \theta_2 Z_i.
\$
Also write $\Delta_w(X_i,Z_i) = \hat\mu_w(X_i,Z_i) - \mu_w^*(X_i,Z_i)$ for $w\in\{0,1\}$. Then by Cauchy-Schwarz inequality, 
\$
\frac{1}{n_t}\sum_{T_i=1} E_{2,i}^2 \leq \hat\theta_1^2 \Big(1-\frac{1}{\hat p}\Big)^2\cdot  \frac{2}{n_t} \sum_{T_i=1} \Delta_1(X_i,Z_i)^2 +  \hat\theta_1^2 \cdot  \frac{2}{n_t} \sum_{T_i=1} \Delta_0(X_i,Z_i)^2.
\$
Using similar arguments as in the proof of Theorem~\ref{thm:rr_inference}, we have 
\$
\frac{1}{n_t}\sum_{T_i=1} E_{2,i}^2 = o_P(1), \quad \frac{1}{n_t}\sum_{T_i=1} E_{2,i} = o_P(1),
\$
and
\$
\frac{1}{n_t} \sum_{T_i=1} D_{1,i} ^2 - \Big(  \frac{1}{n_t} \sum_{T_i=1} D_{1,i} \Big)^2
&= \frac{1}{n_t} \sum_{T_i=1} D_{2,i} ^2 - \Big(  \frac{1}{n_t} \sum_{T_i=1} D_{2,i} \Big)^2 + o_P(1).
\$
Writing $D_{2,i} = D_i^* + E_{3,i}$, where 
\$
E_{3,i} &= (\hat\theta_1 - \theta_1)  \mu_1^*(X_i,Z_i) -(\hat\theta_1 - \theta_1)   \mu_0^*(X_i,Z_i) +\Big(\frac{\hat\theta_1}{\hat p}-  \frac{\theta_1}{p}\Big)\big( {Y}_i(1) -  \mu_1^*(X_i,Z_i)\big)  + (\hat\theta_1 - \theta_2) Z_i,
\$
similar arguments as the proof of Theorem~\ref{thm:rr_inference} lead to 
\$
\frac{1}{n_t}\sum_{T_i=1} E_{3,i}^2 = o_P(1), \quad \frac{1}{n_t}\sum_{T_i=1} E_{3,i} = o_P(1),
\$
and (noting that $D_{1,i} = g_{1,i}$ as defined in~\eqref{eq:r_var_est})
\$
\Var\big( \phi_1(Y_i(1),Y_i(0),Z_i,X_i) \big) 
&= \frac{1}{n_t} \sum_{T_i=1} (D_i^*)^2 - \Big(  \frac{1}{n_t} \sum_{T_i=1} D_i^* \Big)^2 + o_P(1) \\
&=  \frac{1}{n_t} \sum_{T_i=1} D_{2,i} ^2 - \Big(  \frac{1}{n_t} \sum_{T_i=1} D_{2,i} \Big)^2 + o_P(1) \\
&=  \frac{1}{n_t} \sum_{T_i=1} D_{1,i} ^2 - \Big(  \frac{1}{n_t} \sum_{T_i=1} D_{1,i} \Big)^2 + o_P(1) = \hat{V}_1 + o_P(1)
\$
for $\hat{V}_1$ defined in~\eqref{eq:r_var_est}. With exactly the same arguments, we can show 
\$
\Var\big( \phi_0(Y_i(1),Y_i(0),Z_i,X_i) \big) = \hat{V}_0 + o_P(1).
\$
Combining the above two results and the fact that  $\hat p\pto p$, we know $\hat\sigma_{\delta'}^2 \pto \sigma_{\delta'}^2$. By Slutsky's theorem, we have 
\$
\sqrt{n}(\hat\delta' - \delta')/ \hat\sigma_{\delta'} \stackrel{d}{\to} N(0,1),
\$
which establishes the validity of confidence interval. Therefore, we complete the proof of Theorem~\ref{thm:r_inference}.

\end{proof}

\section{Results on semiparametric efficiency}

In this section, we provide proofs for semiparametric efficiency of the proposed procedures, which are based on variance bounds established in Appendix~\ref{app:var_bd}.

\subsection{Semiparametric efficiency for count metric} \label{app:count_efficiency}

\begin{proof}[Proof of Theorem~\ref{thm:efficiency}]
We first compute the asymptotic variance of $\hat\theta_\deb$ and $\tilde\theta_\deb$ under Assumptions~\ref{assump:treat} and~\ref{assump:consist}. Here under the consistency condition, Assumption~\ref{assump:conv} is satisfied with $\mu_1^*=\mu_1$ and $\mu_0^*=\mu_0$. Thus by Theorem~\ref{thm:asymp_var}, the asymptotic variance is 
\$
\sigma_*^2 &=  \frac{1}{p}\Var\Big(Y_i(1) - (1-p) \mu_1(X_i) - p \mu_0(X_i)\Big) + \frac{1}{1-p}\Var\Big( Y_i(0) - (1-p) \mu_1(X_i) - p \mu_0(X_i)\Big) .
\$
To establish the semiparametric efficiency, we quote the classical semiparametric efficiency results on treatment effects, see Lemma~\ref{lem:optimal_aipw}. Here since $T_i$'s are independent of all other random variables, the condition in Lemma~\ref{lem:optimal_aipw} is satisfied with $e(X_i) \equiv p$. Recall that $\mu_1(x)=\EE[Y_i(1)\given X_i=x]$ and $\mu_0(x)=\EE[Y_i(0)\given X_i=x]$. The asymptotic variance bound for $\tau$ is 
\$
V^* &= \EE\bigg[  \frac{\Var(Y_i(1)\given X_i)}{e(X_i)} + \frac{\Var(Y_i(0)\given X_i)}{1-e(X_i)} + \big( \EE[Y_i(1)\given X_i] - \EE[Y_i(0)\given X_i] -  \tau\big)^2   \bigg]\\
&= \frac{1}{p} \EE\big[ \Var(Y_i(1)\given X_i) ] + \frac{1}{1-p}\EE\big[ \Var(Y_i(0)\given X_i) ] + \Var\big( \mu_1(X_i) - \mu_0(X_i)  \big).
\$
Since $\mu_1(X_i)=\EE[Y_i(1)\given X_i]$, we know $Y_i(1) - \mu_1(X_i)$ is uncorrelated with $\mu_1(X_i)-\mu_0(X_i)$. Similarly for $Y_i(0)-\mu_0(X_i)$. Thus,
\$
\sigma_*^2 &=  \frac{1}{p}\Var\Big(Y_i(1) - \mu_1(X_i) + p \big[ \mu_1(X_i) -  \mu_0(X_i)\big]\Big) \\
&\qquad + \frac{1}{1-p}\Var\Big( Y_i(0) - \mu_0(X_i) - (1-p) \big[ \mu_1(X_i) -  \mu_0(X_i)\big]\Big) \\
&= \frac{1}{p}\Var\big(Y_i(1)-\mu_1(X_i)\big) + p\cdot \Var\big( \mu_1(X_i) - \mu_0(X_i)  \big) \\
&\qquad + \frac{1}{1-p} \Var\big(Y_i(0)-\mu_0(X_i)\big) + (1-p)\cdot \Var\big( \mu_1(X_i) - \mu_0(X_i)  \big) \\
&= \frac{1}{p}  \EE\big[ \Var(Y_i(1)\given X_i) ] + \frac{1}{1-p}\EE\big[ \Var(Y_i(0)\given X_i) ] + \Var\big( \mu_1(X_i) - \mu_0(X_i)  \big) = V^*.
\$
Therefore, $\sigma_*^2$ is the semiparametric asymptotic variance bound for $\tau$ in this case.
\end{proof}

\subsection{Semiparametric efficiency for ratio metric: without SDA}\label{app:rr_efficiency}

\begin{proof}[Proof of Theorem~\ref{thm:rr_optimal}]
We first show that $\sigma_{\delta,\dag}^2$ is the semiparametric asymptotic variance bound for $\delta$. Applying Theorem~\ref{thm:rr_var_bound}, the condition $T_i\indep (Y_i(1),Y_i(0),Z_i(1),Z_i(0))\given X_i$ is satisfied in the randomized experiment setting and the propensity score is $e(X_i)\equiv p$. Therefore, the efficient influence function is $\phi_{\delta}^*$ in~\eqref{eq:rr_eff_infl_general}, which coincides with $\phi_{\delta,\dag}$ in~\eqref{eq:rr_eff_infl}. So the variance bound is given by $\sigma^2_{\delta,\dag}$. 

It remains to show that $\sqrt{n}(\hat\delta - \delta)\stackrel{d}{\to} N(0,\sigma_{\delta,\dag}^2)$. By Assumption~\ref{assump:rr_consistency}, Assumption~\ref{assump:conv} is satisfied with $\mu_{w,Y}^* = \mu_{w,Y}$ and $\mu_{w,Z}^*=\mu_{w,Z}$ for $w\in\{0,1\}$. In this case, $\EE[\mu_{w,Y}^*(X_i)]=\EE[Y_i(w)]$ and $\EE[\mu_{w,Z}^*(X_i)] = \EE[Z_i(w)]$ for $w\in\{0,1\}$. Thus $A_i^*=A_i^\dag$. Therefore, applying Theorem~\ref{thm:rr_inference}, we have $\sqrt{n}(\hat\delta - \delta)\stackrel{d}{\to} N(0,\sigma_{\delta,\dag}^2)$, which completes the proof. 
\end{proof}

\subsection{Semiparametric efficiency for ratio metric: with SDA}\label{app:r_efficiency}

\begin{proof}[Proof of Theorem~\ref{thm:r_optimality}]
    We first show that $\sigma_{\delta',\dag}^2$ is the semiparametric variance bound for $\delta'$. The condition $T_i\indep (Y_i(1),Y_i(0))\given (X_i,Z_i)$ in Theorem~\ref{thm:r_var_bound} is satisfied here in the randomized experiment since $T_i$ is independent of all other random variables. Thus the efficient influence function is given by $\phi_{\delta'}^*$ in~\eqref{eq:r_eff_infl_general}, which coincides with $\phi_{\delta',\dag}$ in~\eqref{eq:r_eff_infl} since $e(X_i,Z_i)\equiv p$. Hence the variance bound is exactly $\sigma_{\delta',\dag}^2$. 
    
    Then we show that $\sqrt{n}(\hat\delta' - \delta') \stackrel{d}{\to} N(0,\sigma_{\delta,\dag}^2)$. According to Assumption~\ref{assump:r_consistency}, Assumption~\ref{assump:conv} is satisfied with $\mu_1^*=\mu_1$ and $\mu_0^*=\mu_0$, and $\EE[Y(1)]=\EE[ \mu_1(X,Z)]$, $\EE[Y(0)]=\EE[\mu_0(X,Z)]$. Hence $\Gamma_i^*=\Gamma_i^\dag$ and $\sigma_{\delta'}^2 = \sigma_{\delta',\dag}^2$, and applying Theorem~\ref{thm:r_inference} yields $\sqrt{n}(\hat\delta'-\delta')\stackrel{d}{\to} N(0,\sigma_{\delta',\dag}^2)$, hence completing the proof of Theorem~\ref{thm:r_optimality}.
    \end{proof}

\section{Semiparametric variance bounds}\label{app:var_bd}
In this part, we establish semiparametric variance bounds for the estimands $\tau$, $\delta$ and $\delta'$, which forms the basis for the optimality of the proposed procedures. 

\subsection{Backgrounds and definitions} \label{app:var_bd_background}
We first introduce basic definitions that are necessary for semiparametric variance bounds from the literature of semiparametric efficiency. Let $\mu$ be a fixed $\sigma$-finite measure on $(\cX,\cB)$, and let $L_2(\mu)$ be the space of all $L_2$-integrable functions with respect to $\mu$. 

Let $\mathcal{P}$ be a collection of measures on $(\cX,\cB)$ dominated by $\mu$, which can be equivalently viewed as a subset of $L_2(\mu)$ via the embedding 
\#\label{eq:embed_l2}
P ~\to ~p=\frac{\ud P}{\ud \mu} ~\to ~r = \sqrt{p}\in L_2(\mu)
\#
for all $P\in \mathcal{P}$, where $\frac{\ud P}{\ud \mu}$ is the Radon–Nikodym derivative.

\begin{definition}[Regular parametric space]
Suppose $\cQ$ is a collection of measures on $(\cX,\cB)$ dominated by $\mu$, which is parametrized by $\theta\in \Theta\subset \cR^k$, so that $\cQ= \{P_\theta:\theta\in \Theta\}$. Denote the embedding $r(\theta)=\sqrt{p(\theta)}\in L_2(\mu)$, where $p(\theta) = \frac{\ud P_\theta}{\ud \mu} $ is the Radon–Nikodym derivative. We say the parametrization $\theta\to p(\theta)$ is \textit{regular} if $\Theta$ is open, and for any $\theta_0\in \Theta$, the map $\theta\to r(\theta)$ is Fr\'echet differentiable at $\theta_0$ with derivative $\dot{r}(\theta_0)\in L_2(\mu)^k$, and the matrix $\int \dot{r}(\theta_0)\dot{r}(\theta_0)^\top \ud \mu$ is non-singular.
\end{definition}

\begin{definition}[Regular non-parametric space]
Let $\mathcal{P}$ be a collection of measures on $(\cX,\cB)$ dominated by $\mu$. An element $P_0\in \cP$ is \textit{regular} if it is contained in a regular parametric subspace of $\cP$. We say $\cP$ is \textit{regular} if every point $P\in \cP$ is regular.
\end{definition}

\begin{definition}[Tangent space]
Let $\mathcal{P}$ be a collection of measures on $(\cX,\cB)$ dominated by $\mu$, which can be viewed as a subset $\cR$ of $L_2(\mu)$ via the embedding in Equation \eqref{eq:embed_l2}. For a fixed $P_0\in \cP$, we define the \textit{tangent set} at $P_0$ as the union of all (one-dimensional) tangent spaces of curves $C\subset \cP$ passing through $P_0$. We define \textit{tangent space} at $P_0$ as the closed linear span of the tangent set at $P_0$. 
\end{definition}

Note that in the above definition, everything considered is in the Hilbert space $L_2(\mu)$. 

\begin{definition}[Pathwise differentiability]
	
Let $\mathcal{P}$ be a regular nonparametric model, and $\beta:\mathcal{P}\to \cR$ be a functional on $\mathcal{P}$. Under the embedding in Equation \eqref{eq:embed_l2}, let $\beta$ also denote the mapping $\beta:L_2(\mu)\to \cR$ with $\beta(r(P)) = \beta(P)$. For a fixed $P_0\in \cP$, let $\dot{\cP}$ denote the tangent space of $\cP$ at $P_0$, and let $r_0=r(P_0)$ be its embedding in $L_2(\mu)$, with tangent space $\dot{\cR}$ at $r_0$.
We say $\beta$ is pathwise differentiable on $\cP$ at $P_0$ (or $r_0$) if there exists a bounded linear functional $\dot{\beta}(r_0):\dot{\cP}\to \cR$ so that 
\$ 
{\beta}(r(\eta)) =\beta(r_0)+\eta \dot{\beta}(r_0)(t) +o(|\eta|)
\$ 
for any curve $r(\cdot)$ in $\cR$ passing through $r_0=r(0)$ and $\dot{r}(0)=t$. We also define $\dot{\beta}(P_0):\dot{\cP}\to \cR$ by $\dot{\beta}(P_0)(h)=\dot{\beta}(r_0)(hr_0/2)$, so that $\beta(P_\eta) = \beta(P_0) + \eta \dot{\beta}(P_0)(h)+o(|\eta|)$, where $P_\eta$ is a curve in $\cP$ corresponding to $r(\eta)$ in $\cR$ and $h=2t/r_0$.
\end{definition}


The following definition gives the notion of efficient influence function by defining it as the projection of pathwise derivative on the tangent space, and the efficient variance, or semiparametric variance bound, is defined as the variance of efficient influence function. The efficient variance is actually a lower bound for all asymptotically linear estimators, hence any asymptotically linear estimator whose asymptotic variance coincide with the efficient variance is semiparametrically efficient. More detailed results are in the textbook~\cite{bickel1993efficient}. 

\begin{definition}[Efficient influence function]\label{def:eff_inf_fct}
	Let $\mathcal{P}$ be a regular nonparametric model, and $\beta:\mathcal{P}\to \cR$ be a functional on $\mathcal{P}$ that is pathwise differentiable at $P_0$ with derivative $\dot{\beta}(P_0)$. We define the \textit{efficient influence function} as $\tilde{\phi}(\cdot,P_0\given \beta,\cP) = \Pi(\dot{\beta}(P_0)\given \dot{\cP})$, the projection of $\dot{\beta}(P_0)\in L_2(\mu)$ onto the tangent space $\dot{\cP}$. We also define the \textit{efficient variance} as 
	\$
	\widetilde{V}_\beta = \|\tilde\phi(\cdot,P_0\given \beta,\cP)\|_0^2~,
	\$
	where the norm $\|\cdot\|_0$ is defined as $\|f\|_0^2 = \langle f,f\rangle_0 = \int f^2 \ud P_0$ for any $f\in L_2(\mu)$.
\end{definition}

Based on these definitions, given an estimand $\beta=\beta(\cP)$, we outline the following protocol  of providing semiparametric variance bounds, which will be applied in Appedix~\ref{app:rr_var_bound} and~\ref{app:r_var_bound}.
\begin{itemize}
    \item Step 1: Characterize parametric submodels $\{P_\theta\colon \theta\in \Theta\}$ for $\cP$. 
    \item Step 2: Find the tangent space $\cT$ of any regular parametric submodel $\{P_\theta\colon \theta\in \Theta\}$.
    \item Step 3: Find pathwise derivative $\ud \beta(\theta)/\ud \theta$ of the estimand at the parametric submodel.
    \item Step 4: Project the pathwise derivative, as a member in the Hilbert space $L_2(\mu)$, onto the tangent space. Then the projection is the efficient influence function, whose variance is the efficient variance. 
\end{itemize}

\subsection{Semiparametric variance bound for $\tau$}
The semiparametric variance bound for $\tau$ has been established in a general setting, as in the following lemma. 
\begin{lemma}[Theorem 1 in~\cite{hahn1998role}]\label{lem:optimal_aipw}
Suppose $T_i\indep (Y_i(1),Y_i(0)\given X_i$, and let $e(x) = \PP(T_i=1\given X_i=x)$ be the propensity score. Then the asymptotic variance bounds for $\tau = \EE[Y_i(1)-Y_i(0)]$ is 
\$
\EE\bigg[  \frac{\Var(Y_i(1)\given X_i)}{e(X_i)} + \frac{\Var(Y_i(0)\given X_i)}{1-e(X_i)} + \big( \EE[Y_i(1)\given X_i] - \EE[Y_i(0)\given X_i] -  \tau\big)^2   \bigg].
\$
\end{lemma}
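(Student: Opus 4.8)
The plan is to follow the four-step protocol laid out in Appendix~\ref{app:var_bd_background}, specialized to the observed-data model for $(X_i,T_i,Y_i)$ with $Y_i = Y_i(T_i)$. The first step is to factor the observed-data density. Under the unconfoundedness assumption $T_i\indep (Y_i(1),Y_i(0))\given X_i$, the law of $(X,T,Y)$ factors as
\[
p(x,t,y) = p_X(x)\cdot e(x)^{t}\big(1-e(x)\big)^{1-t}\cdot f_t(y\given x),
\]
where $f_1(\cdot\given x)$ and $f_0(\cdot\given x)$ are the conditional densities of $Y(1)$ and $Y(0)$ given $X=x$. The three variation-independent ingredients are the marginal $p_X$, the propensity $e$, and the outcome laws $f_0,f_1$, and a regular parametric submodel perturbs each of them.

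Second, I would compute the tangent space. Differentiating the log-likelihood of a submodel at the truth, the score splits as $s = s_X(x) + (t-e(x))\,b(x) + s_Y(y,t,x)$, where $\EE[s_X(X)]=0$, $b$ is arbitrary, and $\EE[s_Y(Y,T,X)\given T,X]=0$. These three pieces are mutually orthogonal in $L_2(P_0)$, and because the model is nonparametric they range over the whole corresponding subspaces; hence the tangent space is the orthogonal direct sum $\mathcal{T}_X\oplus\mathcal{T}_T\oplus\mathcal{T}_Y$ of mean-zero functions of $X$, functions of the form $(T-e(X))b(X)$, and conditionally-mean-zero functions of $(Y,T,X)$.

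Third, writing $\mu_w(x)=\EE[Y(w)\given X=x]$ so that $\tau=\EE[\mu_1(X)-\mu_0(X)]$, I would differentiate $\tau$ along a submodel. Perturbing the $X$-marginal contributes $\EE[(\mu_1(X)-\mu_0(X)-\tau)\,s_X(X)]$, while perturbing the outcome laws contributes, after rewriting $\dot\mu_w(x)=\EE[(Y-\mu_w(x))\,s_Y\given T=w,X=x]$ and using $\EE[T\cdot(\,\cdot\,)\given X]=e(X)\,\EE[\,\cdot\,\given T=1,X]$, the inner product of $s_Y$ with $\tfrac{T(Y-\mu_1(X))}{e(X)}-\tfrac{(1-T)(Y-\mu_0(X))}{1-e(X)}$. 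No term involves the propensity, so the derivative is orthogonal to $\mathcal{T}_T$. Assembling the pieces, the candidate efficient influence function is the AIPW form
\[
\psi(X,T,Y) = \frac{T\,(Y-\mu_1(X))}{e(X)} - \frac{(1-T)\,(Y-\mu_0(X))}{1-e(X)} + \mu_1(X)-\mu_0(X)-\tau .
\]
I would then verify that $\psi$ lies in the tangent space — its first two terms belong to $\mathcal{T}_Y$ and the last to $\mathcal{T}_X$, with no component along $\mathcal{T}_T$ — and that $\EE[\psi\,s]$ reproduces the pathwise derivative for every score $s$, which identifies $\psi$ as the projection of any representer onto the tangent space and hence as the efficient influence function.

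Finally, I would compute the variance. The three summands of $\psi$ are pairwise uncorrelated: the first two never overlap because $T(1-T)=0$, and each is conditionally mean-zero given $X$, hence uncorrelated with the $X$-measurable last term. Conditioning on $X$ and using $T\indep (Y(1),Y(0))\given X$ turns $\EE[T(Y-\mu_1)^2/e(X)^2\given X]$ into $\Var(Y(1)\given X)/e(X)$, and symmetrically for the control term, yielding exactly the stated bound $\EE[\Var(Y(1)\given X)/e(X) + \Var(Y(0)\given X)/(1-e(X)) + (\mu_1(X)-\mu_0(X)-\tau)^2]$. The main obstacle I anticipate is the third step: correctly deriving the AIPW representer and verifying its orthogonality relations against all three score components — in particular showing that the derivative annihilates the propensity direction $\mathcal{T}_T$, which is the conceptual heart of the result and the reason the bound does not involve derivatives of $e(\cdot)$.
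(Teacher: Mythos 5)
Your proof is correct. Note that the paper itself does not prove this lemma at all: it is quoted verbatim as Theorem 1 of \cite{hahn1998role}, so there is no in-paper argument to compare against. That said, your derivation is exactly the four-step protocol of Appendix~\ref{app:var_bd_background} that the paper does carry out in full for the ratio-metric bounds (Theorems~\ref{thm:rr_var_bound} and~\ref{thm:r_var_bound}) -- same likelihood factorization, same three-way orthogonal tangent-space decomposition, same verification that the pathwise derivative annihilates the propensity direction so that the AIPW function is the efficient influence function -- and your final variance computation correctly recovers the stated bound. So you have supplied a correct, self-contained proof of a step the authors delegate to a citation, in precisely the style they use elsewhere.
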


\subsection{Semiparametric variance bound for $\delta$} \label{app:rr_var_bound}

We establish the efficient influence function and efficient variance for $\delta$ in the general setting following the protocol in Appedix~\ref{app:var_bd_background}. Recall that $\mu_{w,Y}(x)=\EE[Y_i(w)\given X_i=x]$ and $\mu_{w,Z}(x)=\EE[Z_i(w)\given X_i=x]$, $w\in\{0,1\}$ are the true conditional mean functions.

\begin{theorem}\label{thm:rr_var_bound}
Suppose $(X_i,Z_i(0),Z_i(1),Y_i(0),Y_i(1))\iid \PP$ and $T_i\indep (Y_i(1),Y_i(0), Z_i(1),Z_i(0))\given X_i$. Let $e(x)=\PP(T_i=1\given X_i=x)$ be the propensity score. Then the efficient influence function for  $\delta = \EE[Y_i(1)]/\EE[Z_i(1)] - \EE[Y_i(0)]/\EE[Z_i(0)]$ is $\phi_\delta^*(Y_i,Z_i,X_i,T_i)$, where 
\#
\phi_\delta^*(Y_i,Z_i,X_i,T_i) &= \frac{A_i^{**}}{\EE[Z_i(1)]} - \frac{\EE[Y_i(1)]}{\EE[Z_i(1)]^2} B_i^{**} - \frac{C_i^{**}}{\EE[Z_i(0)]} + \frac{\EE[Y_i(0)]}{\EE[Z_i(0)]^2} D_i^{**}, \label{eq:rr_eff_infl_general}\\
\textrm{where}\quad A_i^{**} &=   \mu_{1,Y} (X_i ) + \frac{T_i}{e(X_i)}  \big(Y_i  - \mu_{1,Y} (X_i ) \big) - \EE[Y_i(1)], \notag \\
 B_i^{**} &=   \mu_{1,Z} (X_i ) + \frac{T_i}{e(X_i)}  \big(Z_i  - \mu_{1,Z} (X_i ) \big)-\EE[Z_i(1)], \notag \\
C_i^{**} &= \mu_{0,Y}^*(X_i ) + \frac{1-T_i}{1-e(X_i)}  \big(Y_i   -  \mu_{0,Y} (X_i ) \big)- \EE[Y_i(0)], \notag  \\
D_i^{**} &= \mu_{0,Z}^*(X_i ) + \frac{1-T_i}{1-e(X_i)}  \big(Z_i -  \mu_{0,Z} (X_i ) \big)- \EE[Z_i(0)]. \notag
\#
The asymptotic variance bound for $\delta$ is $V_\delta^* = \Var(\phi_\delta^*(Y_i,Z_i,X_i,T_i))$. 
\end{theorem}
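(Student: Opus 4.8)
The plan is to exploit the fact that $\delta$ is a smooth function of four ordinary mean functionals, so that its efficient influence function is obtained by a chain rule from the influence functions of those means. First I would observe that, since we only observe $(X_i,T_i,Y_i,Z_i)$ with $Y_i=Y_i(T_i)$ and $Z_i=Z_i(T_i)$, the setup is exactly the unconfounded treatment-effect problem of Lemma~\ref{lem:optimal_aipw}, except that the ``outcome'' is the bivariate pair $(Y,Z)$ rather than a scalar. Writing $m = \big(\EE[Y(1)],\EE[Z(1)],\EE[Y(0)],\EE[Z(0)]\big)$, I would set $\delta = g(m)$ with $g(a,b,c,d) = a/b - c/d$, which is well-defined and continuously differentiable at the truth because the denominators $\EE[Z(w)]$ are assumed nonzero.

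The core step is to identify the efficient influence function of the vector $m$. For this I would follow the four-step protocol of Appendix~\ref{app:var_bd_background}: specify a regular parametric submodel whose likelihood factorizes into the marginal of $X$, the conditional law of $(Y(1),Z(1),Y(0),Z(0))$ given $X$, and the assignment factor $e(X)^{T}(1-e(X))^{1-T}$; compute the tangent space, which under $T\indep(Y(1),Y(0),Z(1),Z(0))\given X$ decomposes orthogonally into an $X$-marginal piece, mean-zero conditional-outcome pieces for the treated and control strata, and a propensity piece; differentiate each coordinate of $m$ along the submodel; and project the resulting pathwise derivative onto the tangent space. The projection of the naive gradient for $\EE[Y(1)]$ yields exactly the AIPW form $A_i^{**} = \mu_{1,Y}(X_i) + \frac{T_i}{e(X_i)}\big(Y_i-\mu_{1,Y}(X_i)\big) - \EE[Y(1)]$, and analogously for $B_i^{**},C_i^{**},D_i^{**}$. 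This is the bivariate-outcome version of the computation underlying Lemma~\ref{lem:optimal_aipw}, so I would either cite that derivation or reproduce it for the bivariate case.

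Given the efficient influence function of $m$, I would invoke the chain rule for pathwise-differentiable parameters (Definition~\ref{def:eff_inf_fct}): because projection onto the tangent space is linear and $\nabla g(m_0)$ is a fixed vector independent of the perturbation direction, the efficient influence function of $\delta = g(m)$ is $\nabla g(m_0)^\top$ applied to the efficient influence function of $m$. Computing $\nabla g(a,b,c,d) = \big(1/b,\,-a/b^2,\,-1/d,\,c/d^2\big)$ and substituting the true values gives coefficients $1/\EE[Z(1)]$, $-\EE[Y(1)]/\EE[Z(1)]^2$, $-1/\EE[Z(0)]$, $\EE[Y(0)]/\EE[Z(0)]^2$, so that the efficient influence function is precisely $\phi_\delta^*$ in~\eqref{eq:rr_eff_infl_general}; the variance bound is then $V_\delta^* = \Var(\phi_\delta^*)$ by the definition of the efficient variance.

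I expect the main obstacle to be the tangent-space projection producing the AIPW influence functions for the bivariate outcome: one must verify that the tangent space splits orthogonally across the $X$-marginal, the two outcome strata, and the treatment assignment, and that the $T/e(X)$ reweighting is exactly what projects the raw gradient back into that space. The delta-method assembly afterward is routine, the only mild point being to confirm differentiability of $g$ at $m_0$, which is guaranteed by the nonvanishing denominators.
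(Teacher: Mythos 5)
Your proposal is correct and follows essentially the same route as the paper's proof: the paper also runs the four-step protocol of Appendix~\ref{app:var_bd_background}, computes the same tangent space, differentiates $\delta(\theta)$ by the quotient rule to get exactly the coefficients $\nabla g(m_0)$, and then verifies four Riesz-representation identities which are precisely the statements that $A_i^{**},B_i^{**},C_i^{**},D_i^{**}$ are the AIPW efficient influence functions of the four component means. The only difference is organizational — you package the argument as ``EIF of the mean vector plus a chain rule'' while the paper differentiates $\delta(\theta)$ directly and checks the combined influence function term by term — but the underlying computations coincide.
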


\begin{proof}[Proof of Theorem~\ref{thm:rr_var_bound}]
We follow the general protocol in Appendix~\ref{app:var_bd_background}, which is similar to \cite{hahn1998role}. 

\paragraph{Step 1: Parametric submodel.} Under the strong ignorability $T_i\indep (Y_i(1),Y_i(0), Z_i(1),Z_i(0))\given X_i$, the density of the joint distribution of $(X_i,Z_i(0),Z_i(1),Y_i(0),Y_i(1))\iid \PP$ is given by 
\$
\bar{q}(x,y_0,y_1,z_0,z_1,t) = f(x) \cdot f(y_0,y_1,z_0,z_1\given x) \cdot e(x)^t (1-e(x))^{1-t},
\$
where $f(y_0,y_1,z_0,z_1\given x)$ is the conditional density function. The joint distribution of the observed quantities $(X_i,Y_i,Z_i,T_i)$ where $Y_i=Y_i(T_i)$ and $Z_i=Z_i(T_i)$ is thus 
\$
q(x,y,z,t) = \big[ f_1(y,z\given x) e(x)    \big]^{t}\cdot \big[ f_0(y,z\given x) (1-e(x))    \big]^{1-t} f(x),
\$
where $f_w(y,z\given x)$ is the conditional density function of $Y(w),Z(w)$ given $X$ for $w\in\{0,1\}$. 
Then a regular parametric submodel containing $\PP$ is characterized by 
\$
q(x,y,z,t;\theta) = \big[ f_1(y,z\given x,\theta ) e(x,\theta )    \big]^{t}\cdot \big[ f_0(y,z\given x,\theta) (1-e(x,\theta))    \big]^{1-t} f(x,\theta),
\$
where the submodel is parametrizedby $\theta\in \Theta$, and the density equals the true density $q(x,y,z,t)$ when $\theta=\theta_0$. Also, the functions $f_1(y,z\given x,\theta)$, $f_0(y,z\given x,\theta)$, $f(x,\theta)$ and $e(x,\theta)$ is differentiable w.r.t.~$\theta$ at $\theta_0$. 

\paragraph{Step 2: Tangent space.} By a straightforward calculation, the score function at $\PP(\theta)$ with density $q(x,y,z,t;\theta)$ is 
\$
s(\theta;x,y,z,t) = t\cdot \frac{\dot{f}_1(y,z\given x,\theta)}{f_1(y,z\given x,\theta)} + (1-t)\cdot \frac{\dot{f}_0(y,z\given x,\theta)}{f_0(y,z\given x,\theta)} + \frac{t-e(x,\theta)}{e(x,\theta)(1-e(x,\theta))}\cdot \dot{e}(x,\theta) + \frac{\dot{f}(x,\theta)}{f(x,\theta)},
\$
where 
\$
\dot{f}_w(y,z\given x,\theta) = \frac{\partial}{\partial \theta}f_w(y,z\given x,\theta),~w\in\{0,1\},\quad \dot{e}(x,\theta) = \frac{\partial}{\partial \theta} {e}(x,\theta),\quad \dot{f}(x,\theta) = \frac{\partial}{\partial \theta} f(x,\theta).
\$
Therefore, the tangent space at $\PP$ is 
\#\label{eq:rr_tangent}
\cT = \Big\{& t\cdot s_1(y,z\given x ) + (1-t)\cdot s_0(y,z\given x ) +  a(x) \big(t-e(x )\big) + s(x)  \colon \\
 &\quad \int s_w(y,z\given x ) f_w(y,z\given x ) \ud y\ud z= 0,\forall w\in \{0,1\},~\forall x,z,\\
&\qquad ~ \int s(x)f(x)\ud x = 0, ~a(x) \text{ is square integrable}\Big\}.\notag
\#
\paragraph{Step 3: Pathwise derivative.} The target $\delta$ can be written as 
\begin{displaymath}
\delta(\theta) = \frac{\int yf_1(y,z\given x,\theta) f(x,\theta) \ud x\ud y \ud z}{\int zf_1(y,z\given x,\theta) f(x,\theta) \ud x\ud y \ud z} - \frac{\int yf_0(y,z\given x,\theta) f(x,\theta) \ud x\ud y \ud z}{\int zf_0(y,z\given x,\theta) f(x,\theta) \ud x\ud y \ud z}.
\end{displaymath}
Taking derivative w.r.t.~$\theta$, we have 
\#\label{eq:diff_delta}
\frac{\partial \delta(\theta)}{\partial \theta} &= \frac{1}{\EE[Z(1)]} \bigg(\int y\dot{f}_1(y,z\given x,\theta) f(x,\theta) \ud x\ud y \ud z + \int y f_1(y,z\given x,\theta) \dot{f}(x,\theta) \ud x\ud y \ud z\bigg) \notag \\
&\quad - \frac{\EE[Y(1)]}{\EE[Z(1)]^2} \bigg(\int z\dot{f}_1(y,z\given x,\theta) f(x,\theta) \ud x\ud y \ud z + \int z f_1(y,z\given x,\theta) \dot{f}(x,\theta) \ud x\ud y \ud z\bigg) \notag \\
&\quad - \frac{1}{\EE[Z(0)]} \bigg(\int y\dot{f}_0(y,z\given x,\theta) f(x,\theta) \ud x\ud y \ud z + \int y f_0(y,z\given x,\theta) \dot{f}(x,\theta) \ud x\ud y \ud z\bigg) \notag \\
&\quad + \frac{\EE[Y(0)]}{\EE[Z(0)]^2} \bigg(\int z\dot{f}_0(y,z\given x,\theta) f(x,\theta) \ud x\ud y \ud z + \int z f_0(y,z\given x,\theta) \dot{f}(x,\theta) \ud x\ud y \ud z\bigg).
\#

\paragraph{Step 4: Projection onto tangent space.} We now establish the projection of $\partial\delta(\theta)/\partial\theta$ onto the tangent space $\cT$ by showing that the influence function $\phi_\delta^*$ in~\eqref{eq:rr_eff_infl} is in $\cT$, and is essentially the projection. Note that $\phi_\delta^*\in \cT$ by setting 
\$
s_1(y,z\given x) &= \frac{1}{e(x)} \frac{1}{\EE[Z(1)]} \big(y-\mu_{1,Y}(x)\big) - \frac{1}{e(x)} \frac{\EE[Y(1)]}{\EE[Z(1)]^2} \big(z-\mu_{1,z}(x)\big),\\
s_0(y,z\given x)&=-\frac{1}{1-e(x)} \frac{1}{\EE[Z(0)]} \big(y-\mu_{0,Y}(x)\big) + \frac{1}{1-e(x)} \frac{\EE[Y(0)]}{\EE[Z(0)]^2} \big(z-\mu_{0,z}(x)\big),\\
t(x)&= \frac{\mu_{1,Y}(x) -  \EE[Y(1)] }{\EE[Z(1)]} - \frac{\EE[Y(1)]}{\EE[Z(1)]^2}\big( \mu_{1,z}(x) -\EE[Z(1)]\big)\\
&\quad - \frac{\mu_{0,Y}(x) -  \EE[Y(0)] }{\EE[Z(0)]} + \frac{\EE[Y(0)]}{\EE[Z(0)]^2}\big( \mu_{0,z}(x) -\EE[Z(0)]\big).\quad a(x)\equiv 0
\$
in the definition of $\cT$ in~\eqref{eq:rr_tangent}. Now we show that $\phi_\delta^*$ is the projection onto $\cT$, for which it suffices to show  
\$
\frac{\partial \delta(\theta)}{\partial \theta} = \EE_\theta\big[  s(\theta;X_i,Y_i,Z_i,T_i) \cdot \phi_\delta^*(Y_i,Z_i,X_i,T_i)  \big],
\$
which implies $\phi_\delta^*(\cdot)$ is the Riesz representation of $\partial\delta(\theta)/\partial\theta$ in $L_2(\mu)$. In light of~\eqref{eq:rr_eff_infl} and~\eqref{eq:diff_delta}, it suffices to show the following four relations
\$
\EE\big[A_i^{**} \cdot s(\theta;X_i,Y_i,Z_i,T_i) \big] &= \int y\dot{f}_1(y,z\given x,\theta) f(x,\theta) \ud x\ud y \ud z + \int y f_1(y,z\given x,\theta) \dot{f}(x,\theta) \ud x\ud y \ud z,\\
\EE\big[B_i^{**} \cdot s(\theta;X_i,Y_i,Z_i,T_i) \big] &=\int z\dot{f}_1(y,z\given x,\theta) f(x,\theta) \ud x\ud y \ud z + \int z f_1(y,z\given x,\theta) \dot{f}(x,\theta) \ud x\ud y \ud z,\\
\EE\big[C_i^{**} \cdot s(\theta;X_i,Y_i,Z_i,T_i) \big] &= \int y\dot{f}_0(y,z\given x,\theta) f(x,\theta) \ud x\ud y \ud z + \int y f_0(y,z\given x,\theta) \dot{f}(x,\theta) \ud x\ud y \ud z\\
\EE\big[D_i^{**} \cdot s(\theta;X_i,Y_i,Z_i,T_i) \big] &= \int z\dot{f}_0(y,z\given x,\theta) f(x,\theta) \ud x\ud y \ud z + \int z f_0(y,z\given x,\theta) \dot{f}(x,\theta) \ud x\ud y \ud z.
\$
By symmetric roles of $Y_i,Z_i$ and $T_i,1-T_i$, we only show the first equation, and others follow exactly the same arguments. Note that 
\$
&\int y\dot{f}_1(y,z\given x,\theta) f(x,\theta) \ud x\ud y \ud z + \int y f_1(y,z\given x,\theta) \dot{f}(x,\theta) \ud x\ud y \ud z \\
&= \int y\frac{\dot{f}_1(y,z\given x,\theta)}{f_1(y,z\given x,\theta)}\cdot    f_1(y,z\given x,\theta) f(x,\theta) \ud x\ud y \ud z + \int y \frac{ \dot{f}(x,\theta)}{  {f}(x,\theta)}\cdot f_1(y,z\given x,\theta){f}(x,\theta) \ud x\ud y \ud z \\
&= \EE\bigg[ Y(1) \cdot \frac{\dot{f}_1(Y(1),Z(1)\given X,\theta)}{f_1(Y(1),Z(1)\given X,\theta)}   \bigg] + \EE\bigg[ Y(1) \cdot \frac{\dot{f} (X,\theta)}{f(X,\theta)}   \bigg].
\$
On the other hand, 
\$
\EE\big[A_i^{**} \cdot s(\theta;X_i,Y_i,Z_i,T_i) \big] &= \EE\bigg[A_i^{**} \cdot T_i \cdot \frac{\dot{f}_1(Y ,Z \given X,\theta)}{f_1(Y ,Z \given X,\theta)}\bigg] + \EE\bigg[A_i^{**} \cdot (1-T_i) \cdot \frac{\dot{f}_0(Y ,Z \given X,\theta)}{f_0(Y ,Z \given X,\theta)}\bigg] \\
&\qquad + \EE\bigg[A_i^{**} \cdot \frac{T_i-e(X_i)}{e(X_i)(1-e(X_i))} \cdot\dot{e}(X_i,\theta)\bigg] +  \EE\bigg[A_i^{**} \cdot \frac{\dot{f}(X_i,\theta)}{f(X_i,\theta)}     \bigg].
\$
By the definition of $A_i^{**}$, 
\$
\EE\bigg[A_i^{**} \cdot T_i \cdot \frac{\dot{f}_1(Y ,Z \given X,\theta)}{f_1(Y ,Z \given X,\theta)}\bigg] 
&= \EE\bigg[ \big( \mu_{1,Y}(X ) - \EE[Y (1)] \big)\cdot T \cdot \frac{\dot{f}_1(Y  ,Z  \given X,\theta)}{f_1(Y ,Z \given X,\theta)}\bigg] \\
&\qquad + \EE\bigg[ \big(Y (1) -  \mu_{1,Y}(X )   \big)\cdot \frac{T }{e(X)} \cdot \frac{\dot{f}_1(Y ,Z \given X,\theta)}{f_1(Y ,Z \given X,\theta)}\bigg],
\$
where by the tower property and takeout property of conditional expectations, 
\$
&\EE\bigg[ \big( \mu_{1,Y}(X ) - \EE[Y (1)] \big)\cdot T \cdot \frac{\dot{f}_1(Y  ,Z  \given X,\theta)}{f_1(Y ,Z \given X,\theta)}\bigg] \\
&= \EE\bigg[ \big( \mu_{1,Y}(X ) - \EE[Y (1)] \big)\cdot \EE[T\given X] \cdot \EE\Big[ \frac{\dot{f}_1(Y  ,Z  \given X,\theta)}{f_1(Y ,Z \given X,\theta)}\Biggiven X\Big]\bigg] \\
&= \EE\bigg[ \big( \mu_{1,Y}(X ) - \EE[Y (1)] \big)\cdot e(X) \cdot \frac{\dot{f}_1(Y  ,Z  \given X,\theta)}{f_1(Y ,Z \given X,\theta)}\bigg] = 0,
\$
where the last equality follows the fact that for any measurable function $g(\cdot)$, 
\$
\EE\bigg[g(X)\frac{\dot{f}_1(Y  ,Z  \given X,\theta)}{f_1(Y ,Z \given X,\theta)}\bigg] = \int_x g(x)f(x,\theta)  \int_{y,z} \dot{f}_1(y,z\given x,\theta)\ud y \ud z \ud x= 0.
\$
Similarly, 
\$
&\EE\bigg[ \big(Y (1) -  \mu_{1,Y}(X )   \big)\cdot \frac{T }{e(X)} \cdot \frac{\dot{f}_1(Y ,Z \given X,\theta)}{f_1(Y ,Z \given X,\theta)}\bigg] \\
&=\EE\bigg[ \big(Y (1) -  \mu_{1,Y}(X )   \big)\cdot  \frac{\dot{f}_1(Y ,Z \given X,\theta)}{f_1(Y ,Z \given X,\theta)}\bigg]  = \EE\bigg[ Y(1) \cdot \frac{\dot{f}_1(Y(1),Z(1)\given X,\theta)}{f_1(Y(1),Z(1)\given X,\theta)}   \bigg].
\$
Meanwhile, still by the conditional independence of $Y_i(1),Z_i(1)$ and $T_i$ given $X_i$ as well as properties of conditional expectations, 
\$
\EE\bigg[A_i^{**} \cdot (1-T_i) \cdot \frac{\dot{f}_0(Y ,Z \given X,\theta)}{f_0(Y ,Z \given X,\theta)}\bigg] &= \EE\bigg[ \big(\mu_{1,Y}(X) - \EE[Y(1)]\big)\cdot (1-T ) \cdot \frac{\dot{f}_0(Y ,Z \given X,\theta)}{f_0(Y ,Z \given X,\theta)}\bigg]\\
&= \EE\bigg[ \big(\mu_{1,Y}(X) - \EE[Y(1)]\big)\cdot \big(1-e(X)\big) \cdot \frac{\dot{f}_0(Y ,Z \given X,\theta)}{f_0(Y ,Z \given X,\theta)}\bigg] = 0.
\$
For the third term, 
\$
&\EE\bigg[A_i^{**} \cdot \frac{T_i-e(X_i)}{e(X_i)(1-e(X_i))} \cdot\dot{e}(X_i,\theta)\bigg] \\
&= \EE\bigg[\Big(\mu_{1,Y} (X  )  - \EE[Y (1)]\Big) \cdot \frac{T -e(X )}{e(X )(1-e(X ))} \cdot\dot{e}(X ,\theta)\bigg] \\
&\qquad +   \EE\bigg[  \frac{T }{e(X )}  \big(Y  - \mu_{1,Y} (X  ) \big)  \cdot \frac{T -e(X )}{e(X )(1-e(X ))} \cdot\dot{e}(X ,\theta)\bigg] \\
&= \EE\bigg[\Big(\mu_{1,Y} (X  )  - \EE[Y (1)]\Big) \cdot \frac{\EE[T\given X] -e(X )}{e(X )(1-e(X ))} \cdot\dot{e}(X ,\theta)\bigg] \\
&\qquad +   \EE\bigg[  \frac{T }{e(X )}  \big(Y(1)  - \mu_{1,Y} (X  ) \big)  \cdot \frac{T -e(X )}{e(X )(1-e(X ))} \cdot\dot{e}(X ,\theta)\bigg]\\
&=  \EE\bigg[  \frac{T }{e(X )}  \big(Y(1)  - \mu_{1,Y} (X  ) \big)  \cdot \frac{T -e(X )}{e(X )(1-e(X ))} \cdot\dot{e}(X ,\theta)\bigg] \\
&=  \EE\bigg[  \frac{T }{e(X )}  \big(\EE[Y(1)\given X]  - \mu_{1,Y} (X  ) \big)  \cdot \frac{T -e(X )}{e(X )(1-e(X ))} \cdot\dot{e}(X ,\theta)\bigg]  = 0.
\$
Here the second equality follows from the tower property and takeout property and the fact that $TY=TY(1)$, the forth equality follows from the conditional independence of $Y(1)$ and $T$ given $X$ as well as the tower property and takeout property. The last term satisfies 
\$
\EE\bigg[A_i^{**} \cdot \frac{\dot{f}(X_i,\theta)}{f(X_i,\theta)} \bigg] &= \EE\bigg[ \Big(  \mu_{1,Y} (X )- \EE[Y (1)]\Big) \cdot \frac{\dot{f}(X ,\theta)}{f(X ,\theta)} \bigg]  + \EE\bigg[  \frac{T }{e(X )}  \big(Y   - \mu_{1,Y} (X) \big) \cdot \frac{\dot{f}(X ,\theta)}{f(X,\theta)} \bigg]\\
&= \EE\bigg[  \frac{T }{e(X )}  \big(Y(1)  - \mu_{1,Y} (X) \big) \cdot \frac{\dot{f}(X ,\theta)}{f(X,\theta)} \bigg] =  \EE\bigg[  Y(1)  \cdot \frac{\dot{f}(X ,\theta)}{f(X,\theta)} \bigg].
\$
The above derivation  uses the fact that $TY=TY(1)$, $Y(1)\indep T\given X$ and
$
\EE\big [ g(X)\cdot \frac{\dot{f}(X ,\theta)}{f(X ,\theta)} \big] = 0
$
for all measurable function $g(\cdot)$. Combining the four terms, we arrive at the first of the desired relations. By symmetric roles of $Y_i,Z_i$ and $T_i,1-T_i$, the other four follow similarly. Thus, we complete the proof that $\phi_\delta^*$ is the projection of $\partial \delta(\theta)/\partial \theta$ onto $\cT$. Hence, $\phi_\delta^*$ is the efficient influence function, and its variance is the semiparametric asymptotic variance bound for estimating $\delta$. 
\end{proof}

\subsection{Semiparametric variance bound for $\delta'$} \label{app:r_var_bound}

Similar to the case of $\delta$, we establish the efficient influence function and efficient variance for $\delta'$ in the general setting following the protocol in Appedix~\ref{app:var_bd_background}. Recall that $\mu_{w}(x,z)=\EE[Y_i(w)\given X_i=x,Z_i=z]$, $w\in\{0,1\}$ are the true conditional mean functions.

\begin{theorem}\label{thm:r_var_bound}
Suppose $(X_i,Z_i,Y_i(0),Y_i(1))\iid \PP$ and $T_i\indep (Y_i(1),Y_i(0))\given (X_i,Z_i)$. Let $e(x,z)=\PP(T_i=1\given X_i=x,Z_i=z)$ be the propensity score. Then the efficient influence function for  $\delta' = \EE[Y_i(1)]/\EE[Z_i] - \EE[Y_i(0)]/\EE[Z_i]$ is $\phi_{\delta'}^*(Y_i,Z_i,X_i,T_i)$, where 
\#
\phi_{\delta'}^*(Y_i,Z_i,X_i,T_i) &= \frac{\Gamma_i^{**}}{\EE[Z_i]} - \frac{\EE[Y_i(1)]-\EE[Y_i(0)]}{\EE[Z_i]^2}\big(Z_i-\EE[Z_i]), \label{eq:r_eff_infl_general}\\
\textrm{where}\quad \Gamma_i^{**} &=   \mu_{1} (X_i,Z_i ) - \mu_0(X_i,Z_i) - \big( \EE[Y_i(1)] - \EE[Y_i(0)] \big) \\
&\quad  + \frac{T_i}{e(X_i,Z_i)}  \big(Y_i  - \mu_{1 } (X_i,Z_i ) \big) - \frac{1-T_i}{1-e(X_i,Z_i)}\big(Y_i  - \mu_{0 } (X_i,Z_i ) \big) . \notag 
\#
The asymptotic variance bound for $\delta$ is $V_{\delta'}^* = \Var(\phi_{\delta'}^*(Y_i,Z_i,X_i,T_i))$. 
\end{theorem}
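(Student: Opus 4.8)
The plan is to follow the four-step protocol of Appendix~\ref{app:var_bd_background}, reusing almost verbatim the machinery from the proof of Theorem~\ref{thm:rr_var_bound}, since the SDA problem is structurally the $\delta$-problem with the pair $(X_i,Z_i)$ playing the role of the conditioning covariate. First I would write the observed-data density: because $Z_i$ is a plain random variable (not a potential outcome) and $Y_i=Y_i(T_i)$, under $T_i\indep(Y_i(1),Y_i(0))\given(X_i,Z_i)$ the joint density of $(X_i,Z_i,Y_i,T_i)$ factors as
\[
q(x,z,y,t)=\big[f_1(y\given x,z)\,e(x,z)\big]^t\big[f_0(y\given x,z)\,(1-e(x,z))\big]^{1-t}f(x,z),
\]
where $f_w(y\given x,z)$ is the conditional density of $Y(w)$ given $(X,Z)$ and $f(x,z)$ is the marginal density of $(X,Z)$. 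A regular parametric submodel is obtained by indexing each factor by $\theta$ and requiring it to agree with the truth at $\theta_0$.

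Next I would compute the score and the tangent space exactly as in the $\delta$ case. The score splits into four pieces, and the tangent space $\cT$ consists of functions $t\,s_1(y\given x,z)+(1-t)\,s_0(y\given x,z)+a(x,z)\big(t-e(x,z)\big)+s(x,z)$, with each $s_w(\cdot\given x,z)$ of conditional mean zero, $s(x,z)$ of mean zero, and $a$ square integrable. The simplification specific to SDA enters at Step 3: since $\EE_\theta[Z]=\int z\,f(x,z,\theta)\,\ud x\,\ud z$ depends \emph{only} on the marginal factor $f(x,z,\theta)$, differentiating $\delta'(\theta)=(\EE_\theta[Y(1)]-\EE_\theta[Y(0)])/\EE_\theta[Z]$ produces a denominator derivative that feeds solely through the $\dot f/f$ score. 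Concretely $\partial\delta'/\partial\theta$ is the linear combination of $\partial\EE[Y(1)]/\partial\theta$, $\partial\EE[Y(0)]/\partial\theta$, $\partial\EE[Z]/\partial\theta$ with coefficients $1/\EE[Z]$, $-1/\EE[Z]$, $-(\EE[Y(1)]-\EE[Y(0)])/\EE[Z]^2$.

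For Step 4 I would first verify $\phi_{\delta'}^*\in\cT$ (with $\phi_{\delta'}^*$ as in~\eqref{eq:r_eff_infl_general}) by reading off $s_1(y\given x,z)=\frac{1}{\EE[Z]\,e(x,z)}(y-\mu_1(x,z))$, $s_0(y\given x,z)=-\frac{1}{\EE[Z](1-e(x,z))}(y-\mu_0(x,z))$, $a\equiv0$, and $s(x,z)=\frac{1}{\EE[Z]}\big(\mu_1(x,z)-\mu_0(x,z)-(\EE[Y(1)]-\EE[Y(0)])\big)-\frac{\EE[Y(1)]-\EE[Y(0)]}{\EE[Z]^2}(z-\EE[Z])$; the mean-zero constraints hold because $\mu_w=\EE[Y(w)\given X,Z]$ and $\EE[\mu_w]=\EE[Y(w)]$. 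It then remains to check the Riesz identity $\partial\delta'/\partial\theta=\EE_\theta[s\cdot\phi_{\delta'}^*]$, for which it suffices to prove $\EE[(Z-\EE[Z])\,s]=\partial\EE[Z]/\partial\theta$ and $\EE[\Gamma_i^{**}s]=\partial\EE[Y(1)]/\partial\theta-\partial\EE[Y(0)]/\partial\theta$. The first is immediate, since $Z-\EE[Z]$ is a function of $(X,Z)$ so only the $\dot f/f$ component of $s$ survives and returns $\int z\,\dot f\,\ud x\,\ud z$. For the second I would split $\Gamma_i^{**}$ into its treated and control halves and show each half reproduces $\partial\EE[Y(w)]/\partial\theta$; this is line-for-line the computation carried out for $A_i^{**}$ in the proof of Theorem~\ref{thm:rr_var_bound}, invoking $T_iY_i=T_iY_i(1)$, the conditional independence $Y_i(1)\indep T_i\given(X_i,Z_i)$, the tower and takeout properties, and the identities $\EE[g(X,Z)\,\dot f/f]=0$ and $\EE[\dot f_w/f_w\given X,Z]=0$.

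I expect the main obstacle to be bookkeeping rather than conceptual difficulty: one must track the four score components through each inner product and repeatedly use the conditional-independence structure so that the cross terms—the $a$-score piece and the ``wrong-arm'' outcome scores—vanish. Once those cancellations are secured, $\phi_{\delta'}^*$ is identified as the projection of $\partial\delta'/\partial\theta$ onto $\cT$, and by Definition~\ref{def:eff_inf_fct} it is the efficient influence function with $V_{\delta'}^*=\Var(\phi_{\delta'}^*)$ the efficient variance bound.
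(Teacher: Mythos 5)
Your proposal follows exactly the paper's own four-step argument for Theorem~\ref{thm:r_var_bound}: same factorized submodel, same score and tangent space, same pathwise derivative split into the three mean-derivatives, and the same Riesz-identity verification reducing to the $A_i^{**}$-type computation from Theorem~\ref{thm:rr_var_bound}. Your explicit tangent-space decomposition of $\phi_{\delta'}^*$ (with the $1/\EE[Z]$ factors, the sign on $s_0$, and the $(z-\EE[Z])$ term folded into $s(x,z)$) is in fact stated more carefully than in the paper's proof, so no gap to report.
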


\begin{proof}[Proof of Theorem~\ref{thm:r_var_bound}]
The proof follows the protocol in Appendix~\ref{app:var_bd_background}.

\paragraph{Step 1: Parametric submodel.} Under the strong ignorability $T_i\indep (Y_i(1),Y_i(0))\given (X_i,Z_i)$, the density of the joint distribution of $(X_i,Z_i,Y_i(0),Y_i(1),T_i)\iid \PP$ is given by 
\$
\bar{q}(x,y_0,y_1,z,t) = f(x,z) \cdot f(y_0,y_1,\given x,z) \cdot e(x)^t (1-e(x))^{1-t},
\$
where $f(y_0,y_1\given x,z)$ is the conditional density function. The joint distribution of the observed quantities $(X_i,Y_i,Z_i,T_i)$ where $Y_i=Y_i(T_i)$ is 
\$
q(x,y,z,t) = \big[ f_1(y\given x,z) e(x,z)    \big]^{t}\cdot \big[ f_0(y \given x,z) (1-e(x,z))    \big]^{1-t} f(x,z),
\$
where $f_w(y\given x,z)$ is the conditional density function of $Y(w)$ given $X,Z$ for $w\in\{0,1\}$. 
Then a regular parametric submodel containing $\PP$ is characterized by 
\$
q(x,y,z,t;\theta) = \big[ f_1(y\given x,z,\theta ) e(x,z,\theta )    \big]^{t}\cdot \big[ f_0(y \given x,z,\theta) (1-e(x,z,\theta))    \big]^{1-t} f(x,z,\theta),
\$
where the submodel is parametrizedby $\theta\in \Theta$, and the density equals the true density $q(x,y,z,t)$ when $\theta=\theta_0$. Also, the functions $f_1(y,z\given x,\theta)$, $f_0(y\given x,z,\theta)$, $f(x,z,\theta)$ and $e(x,z,\theta)$ are differentiable w.r.t.~$\theta$ at $\theta=\theta_0$. 

\paragraph{Step 2: Tangent space.} 
By the joint distribution of the observable, the score function at $\theta$ or $\PP(\theta)$ is 
\#\label{eq:r_score}
s(\theta;x,y,z,t) = t\cdot \frac{\dot{f}_1(y \given x,z,\theta)}{f_1(y\given x,z,\theta)} + (1-t)\cdot \frac{\dot{f}_0(y\given x,z,\theta)}{f_0(y\given x,z,\theta)} + \frac{t-e(x,z,\theta)}{e(x,z,\theta)(1-e(x,z,\theta))}\cdot \dot{e}(x,z,\theta) + \frac{\dot{f}(x,z,\theta)}{f(x,z,\theta)},
\#
where 
\$
\dot{f}_w(y\given x,z,\theta) = \frac{\partial}{\partial \theta}f_w(y\given x,z,\theta),~w\in\{0,1\},\quad \dot{e}(x,z,\theta) = \frac{\partial}{\partial \theta} {e}(x,z,\theta),\quad \dot{f}(x,z,\theta) = \frac{\partial}{\partial \theta} f(x,z,\theta).
\$
Therefore, the tangent space $\cT$ at $\theta$ is 
\#\label{eq:r_tangent}
\cT = \Big\{& t\cdot s_1(y\given x,z ) + (1-t)\cdot s_0(y\given x,z ) +  a(x,z) \big(t-e(x,z )\big) + s(x,z)  \colon \\
 &\int s_w(y\given x,z ) f_w(y\given x,z ) \ud y = 0,\forall w\in \{0,1\},~\forall x,z,\\
&\qquad  ~ \int s(x,z)f(x,z)\ud x = 0, ~a(x,z) \text{ is square integrable}\Big\}.\notag
\#

\paragraph{Step 3: Pathwise derivative.} The target $\delta'$ can be parametrized as 
\$
\delta'(\theta) = \frac{\int y f_1(y\given x,z,\theta) f(x,z,\theta) \ud x \ud y \ud z - \int y f_0(y\given x,z,\theta) f(x,z,\theta) \ud x \ud y \ud z}{\int z f(x,z,\theta) \ud x \ud z}.
\$
Therefore, the pathwise derivative at $\theta$ is 
\$
\frac{\partial \delta'(\theta)}{\partial \theta} &= \frac{1}{\EE[Z]}\bigg(  \int y \dot{f}_1(y\given x,z,\theta) f(x,z,\theta) \ud x \ud y \ud z - \int y \dot{f}_0(y\given x,z,\theta) f(x,z,\theta) \ud x \ud y \ud z   \bigg) \\
&\quad + \frac{1}{\EE[Z]}\bigg(  \int y {f}_1(y\given x,z,\theta) \dot{f}(x,z,\theta) \ud x \ud y \ud z - \int y {f}_0(y\given x,z,\theta) \dot{f}(x,z,\theta) \ud x \ud y \ud z   \bigg) \\
&\quad + \frac{\EE[Y(1)]-\EE[Y(0)]}{\EE[Z]^2} \int z\dot{f}(x,z,\theta) \ud x\ud z \\
&= \frac{1}{\EE[Z]} \bigg(  \EE\bigg[ Y(1) \frac{\dot{f}_1(Y(1)\given X,Z,\theta)}{ {f}_1(Y(1)\given X,Z,\theta)}  \bigg]  - \EE\bigg[ Y(0) \frac{\dot{f}_0(Y(0)\given X,Z,\theta)}{ {f}_0(Y(0)\given X,Z,\theta)}  \bigg]    \bigg) \\
&\quad +
\frac{1}{\EE[Z]} \cdot  \EE\bigg[ \big(Y(1)-Y(0)\big) \frac{\dot{f}(  X,Z,\theta)}{ {f}(  X,Z,\theta)}  \bigg]  
 + \frac{\EE[Y(1)]-\EE[Y(0)]}{\EE[Z]^2} \cdot \EE\bigg[ Z \cdot \frac{\dot{f}(  X,Z,\theta)}{ {f}(  X,Z,\theta)}  \bigg],
\$
where the second equality follows similar arguments as in the proof of Theorem~\ref{thm:rr_var_bound}.

\paragraph{Step 4: Projection onto tangent space.} We show that $\phi_{\delta'}^*$ is in $\cT$ and is the projection of $\frac{\partial \delta'(\theta)}{\partial \theta}$ onto $\cT$. Firstly, note that $\phi_{\delta'}^* \in \cT$ by setting $a(x,z)\equiv 0$ and 
\$
&s_1(y\given x,z) = \frac{y-\mu_1(x,z)}{e(x,z)},\quad s_0(y\given x,z) = \frac{y-\mu_0(x,z)}{1-e(x,z)},\\
 &s(x,z) = \mu_1(x,z)-\mu_0(x,z) - \EE[Y(1)] +\EE[Y(0)]
\$
in the specification of~\ref{eq:r_tangent}. Then to show that $\phi_{\delta'}^*$ is the projection of $\frac{\partial \delta'(\theta)}{\partial \theta}$ onto $\cT$, it suffices to show that 
\#\label{eq:r_zzz}
\frac{\partial \delta'(\theta)}{\partial \theta} = \EE_\theta\big[  s(\theta;X_i,Y_i,Z_i,T_i) \cdot \phi_{\delta'}^*(Y_i,Z_i,X_i,T_i)  \big]
\#
for the score function in~\eqref{eq:r_score}. To this end, we now show that 
\$
& 
\EE\big[ \big(Z_i-\EE[Z_i]\big) \cdot s(\theta;X_i,Y_i,Z_i,T_i) ] = \EE\bigg[ Z \cdot \frac{\dot{f}(  X,Z,\theta)}{ {f}(  X,Z,\theta)}  \bigg],\quad\text{and}\\
&\EE\big[\Gamma_i^{**} \cdot s(\theta;X_i,Y_i,Z_i,T_i) ] = \EE\bigg[ Y(1) \frac{\dot{f}_1(Y(1)\given X,Z,\theta)}{ {f}_1(Y(1)\given X,Z,\theta)}  \bigg]  - \EE\bigg[ Y(0) \frac{\dot{f}_0(Y(0)\given X,Z,\theta)}{ {f}_0(Y(0)\given X,Z,\theta)}  \bigg] \\
&\qquad \qquad +  \EE\bigg[ \big(Y(1)-Y(0)\big) \frac{\dot{f}(  X,Z,\theta)}{ {f}(  X,Z,\theta)}  \bigg].
\$
For the first equation, note that $\EE[s(\theta;X_i,Y_i,Z_i,T_i)]=0$, hence
\$
&\EE\big[ \big(Z_i-\EE[Z_i]\big) \cdot s(\theta;X_i,Y_i,Z_i,T_i) ] = \EE\big[  Z_i  \cdot s(\theta;X_i,Y_i,Z_i,T_i) ]  \\
&= \EE\bigg[  T \cdot Z \cdot \frac{\dot{f}_1(Y(1) \given X,Z,\theta)}{f_1(Y(1)\given X,Z,\theta)} + (1-T)\cdot Z \cdot \frac{\dot{f}_0(Y(0)\given X,Z,\theta)}{f_0(Y(0)\given X,Z,\theta)} \bigg]\\
&\qquad + \EE\bigg[  Z\cdot \frac{T-e(X,Z,\theta)}{e(X,Z,\theta)(1-e(X,Z,\theta))}\cdot \dot{e}(X,Z,\theta) + Z\cdot \frac{\dot{f}(X,Z,\theta)}{f(X,Z,\theta)}\bigg] \\
&= \EE\bigg[ e(X,Z) \cdot Z \cdot \frac{\dot{f}_1(Y(1) \given X,Z,\theta)}{f_1(Y(1)\given X,Z,\theta)} + (1-e(X,Z))\cdot Z \cdot \frac{\dot{f}_0(Y(0)\given X,Z,\theta)}{f_0(Y(0)\given X,Z,\theta)} \bigg]\\
&\qquad + \EE\bigg[  Z\cdot \frac{\EE[T\given X,Z]-e(X,Z,\theta)}{e(X,Z,\theta)(1-e(X,Z,\theta))}\cdot \dot{e}(X,Z,\theta) + Z\cdot \frac{\dot{f}(X,Z,\theta)}{f(x,Z,\theta)}\bigg] = \EE\bigg[ Z\cdot \frac{\dot{f}(X,Z,\theta)}{f(X,Z,\theta)}\bigg],
\$
where the third equality follows from the conditional independence of $T$ and $Y(1),Y(0)$ given $X,Z$ and the last equality follows the fact that 
\#\label{eq:r_zero}
\EE\bigg[g(X,Z)\frac{\dot{f}_w(Y(w)\given X,Z,\theta)}{f_w(Y(w)\given X,Z,\theta)}\bigg]=0,\quad \forall~ w\in\{0,1\}
\#
and any measurable function $g(\cdot,\cdot)$. For the second equality, note that 
\$
&\EE\bigg[\Gamma_i^{**} \cdot  T_i \cdot \frac{\dot{f}_1(Y_i(1) \given X_i,Z_i,\theta)}{f_1(Y_i(1)\given X_i,Z_i,\theta)}  \bigg] \\
&=  \EE\bigg[ \Big(\mu_1(X,Z)-\mu_0(X,Z) - \EE[Y(1)]+\EE[Y(0)]\Big)\cdot  T  \cdot \frac{\dot{f}_1(Y (1) \given X ,Z ,\theta)}{f_1(Y (1)\given X ,Z ,\theta)}  \bigg] \\
&\qquad + \EE\bigg[ \frac{T}{e(X,Z)} \big(Y(1)-\mu_1(X,Z)\big)  \cdot \frac{\dot{f}_1(Y (1) \given X ,Z ,\theta)}{f_1(Y (1)\given X ,Z ,\theta)}  \bigg] \\
&= \EE\bigg[ \Big(\mu_1(X,Z)-\mu_0(X,Z) - \EE[Y(1)]+\EE[Y(0)]\Big)\cdot  e(X,Z)  \cdot \frac{\dot{f}_1(Y (1) \given X ,Z ,\theta)}{f_1(Y (1)\given X ,Z ,\theta)}  \bigg] \\
&\qquad + \EE\bigg[ \big(Y(1)-\mu_1(X,Z)\big)  \cdot \frac{\dot{f}_1(Y (1) \given X ,Z ,\theta)}{f_1(Y (1)\given X ,Z ,\theta)}  \bigg] = \EE\bigg[  Y(1) \cdot \frac{\dot{f}_1(Y (1) \given X ,Z ,\theta)}{f_1(Y (1)\given X ,Z ,\theta)}  \bigg].
\$
Here the second equality follows from the conditional independence of $T$ and $Y(1),Y(0)$ given $X,Z$ and the last equality follows from~\eqref{eq:r_zero}. Following exactly the same arguments, we have 
\$
\EE\bigg[\Gamma_i^{**} \cdot  (1-T_i) \cdot \frac{\dot{f}_0(Y_i(0) \given X_i,Z_i,\theta)}{f_0(Y_i(0)\given X_i,Z_i,\theta)}  \bigg]  =  \EE\bigg[ Y(0)\cdot \frac{\dot{f}_0(Y_i(0) \given X_i,Z_i,\theta)}{f_0(Y_i(0)\given X_i,Z_i,\theta)}  \bigg].
\$
Meanwhile, note that 
\$
&\EE\bigg[\Gamma_i^{**} \cdot   \frac{\dot{f} (  X_i,Z_i,\theta)}{f ( X_i,Z_i,\theta)}  \bigg]  \\
&=\EE\bigg[ \Big( \mu_1(X,Z) - \mu_0(X,Z) - \EE[Y(1)]+\EE[Y(0)]\Big) \cdot   \frac{\dot{f} (  X ,Z ,\theta)}{f ( X ,Z ,\theta)}  \bigg] \\
&\quad + \EE\bigg[\frac{T}{ e(X,Z)} \cdot \big(Y(1)-\mu_1(X,Z)\big) \cdot  \frac{\dot{f} (  X ,Z ,\theta)}{f ( X ,Z ,\theta)}  \bigg]
+ \EE\bigg[\frac{1-T}{ 1-e(X,Z)} \cdot \big(Y(0)-\mu_0(X,Z)\big) \cdot  \frac{\dot{f} (  X ,Z ,\theta)}{f ( X ,Z ,\theta)}  \bigg] \\
&= \EE\bigg[ \Big( \mu_1(X,Z) - \mu_0(X,Z)  \Big) \cdot   \frac{\dot{f} (  X ,Z ,\theta)}{f ( X ,Z ,\theta)}  \bigg] = \EE\bigg[ \big( Y(1)-Y(0)  \big) \cdot   \frac{\dot{f} (  X ,Z ,\theta)}{f ( X ,Z ,\theta)}  \bigg], 
\$
where the second equality follows from the conditional independence and the fact that $\EE\big[\frac{\dot{f} (  X ,Z ,\theta)}{f ( X ,Z ,\theta)}\big]=0$, and the last equality follows from the takeout and tower property of conditional expectations. Also, 
\$
&\EE\bigg[\Gamma_i^{**} \cdot   \frac{T_i-e(X_i,Z_i)}{e(X_i,Z_i)(1-e(X_i,Z_i))} \bigg] \\
&= \EE\bigg[\Big( \mu_1(X,Z) - \mu_0(X,Z) - \EE[Y(1)]+\EE[Y(0)]\Big) \cdot    \frac{T -e(X ,Z )}{e(X ,Z )(1-e(X ,Z))} \bigg] \\
&\quad + \EE\bigg[\frac{T}{ e(X,Z)} \cdot \big(Y(1)-\mu_1(X,Z)\big)\cdot  \frac{T -e(X ,Z )}{e(X ,Z )(1-e(X ,Z))} \bigg] \\
&\quad +  \EE\bigg[\frac{1-T}{ 1-e(X,Z)} \cdot \big(Y(0)-\mu_0(X,Z)\big)\cdot  \frac{T -e(X ,Z )}{e(X ,Z )(1-e(X ,Z))} \bigg] =0,
\$
where all the three terms in the summation equals zero because of the tower and takeout property of conditional expectations. Putting them together, we've proved~\eqref{eq:r_zzz}. Therefore, $\phi_{\delta'}^*$ is the projection of $\frac{\partial \delta'(\theta)}{\partial \theta}$ onto $\cT$. Hence, $\phi_{\delta'}^*$ is the efficient influence function, and its variance is the semiparametric asymptotic variance bound for estimating $\delta'$. 
\end{proof}

\section{Technical lemmas}

\begin{lemma}\label{lem:cond_to_op}
Let $\cF_n$ be a sequence of $\sigma$-algebra, and let $A_n\geq 0$ be a sequence of nonnegative random variables. If $\EE[A_n\given \cF_n] = o_P(1)$, then $A_n=o_P(1)$. 
\end{lemma}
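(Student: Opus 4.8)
The plan is to upgrade convergence in probability of the conditional means to convergence in probability of the $A_n$ themselves, using a conditional Markov inequality together with a truncation of the (random) conditioning information. The central device is to split the target event according to whether the conditional mean is small. Fixing arbitrary $\epsilon>0$ and $\delta>0$, I would write
\[
\PP(A_n > \epsilon) = \PP\big(A_n > \epsilon,\, \EE[A_n \mid \cF_n] \le \delta\big) + \PP\big(A_n > \epsilon,\, \EE[A_n \mid \cF_n] > \delta\big),
\]
and then bound the two pieces by different arguments.

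The second piece is immediate from the hypothesis: it is at most $\PP(\EE[A_n \mid \cF_n] > \delta)$, which tends to $0$ as $n\to\infty$ because $\EE[A_n\mid\cF_n] = o_P(1)$. For the first piece, the key step is to condition on $\cF_n$ and invoke the conditional Markov inequality, legitimate since $A_n \ge 0$, giving $\PP(A_n > \epsilon \mid \cF_n) \le \EE[A_n \mid \cF_n]/\epsilon$. Rewriting the joint probability as the expectation of an indicator against this conditional probability via the tower property, and then exploiting that on the event $\{\EE[A_n \mid \cF_n] \le \delta\}$ the factor $\EE[A_n \mid \cF_n]/\epsilon$ is deterministically at most $\delta/\epsilon$, I obtain $\PP(A_n > \epsilon,\, \EE[A_n \mid \cF_n] \le \delta) \le \delta/\epsilon$.

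Combining the two bounds yields $\limsup_n \PP(A_n > \epsilon) \le \delta/\epsilon$; since $\delta>0$ is arbitrary, the $\limsup$ is $0$, which is exactly $A_n = o_P(1)$. The main obstacle is essentially conceptual rather than computational: one cannot apply Markov's inequality globally, because the resulting bound $\EE[A_n\mid\cF_n]/\epsilon$ is itself a random quantity and need not be small on a vanishing-probability event where the conditional mean is large. The two-part decomposition is precisely the mechanism that isolates the event on which the conditional mean is controlled by the fixed constant $\delta$, so that the random Markov bound can be replaced by a deterministic one before integrating. Getting this truncation ordering right is the only point requiring care.
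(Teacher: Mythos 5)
Your proof is correct, but it takes a different route from the paper's. The paper first applies the conditional Markov inequality to form $B_n := \PP(A_n>\epsilon\mid\cF_n)\leq \EE[A_n\mid\cF_n]/\epsilon = o_P(1)$, and then upgrades $B_n\pto 0$ to $\EE[B_n]\to 0$ (which equals $\PP(A_n>\epsilon)\to 0$) via the standard subsequence argument: extract from any subsequence a further subsequence along which $B_n\to 0$ almost surely, apply dominated convergence using $B_n\in[0,1]$, and conclude by arbitrariness of the subsequence. You instead avoid subsequences entirely by splitting $\PP(A_n>\epsilon)$ on the $\cF_n$-measurable event $\{\EE[A_n\mid\cF_n]\leq\delta\}$: on that event the conditional Markov bound is deterministically at most $\delta/\epsilon$, and the complementary event has vanishing probability by hypothesis, giving $\limsup_n\PP(A_n>\epsilon)\leq\delta/\epsilon$ for every $\delta>0$. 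The two arguments are of comparable length; yours is more elementary and quantitative (it yields an explicit bound in terms of $\delta/\epsilon$ and $\PP(\EE[A_n\mid\cF_n]>\delta)$ at each $n$), while the paper's is an instance of the general principle that bounded sequences converging in probability converge in $L_1$, which is reusable without the nonnegativity-specific truncation. Both correctly use that the truncation event is $\cF_n$-measurable (in your case) or that $B_n$ is uniformly bounded (in the paper's), which are the respective points where care is needed.
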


\begin{proof}[Proof of Lemma~\ref{lem:cond_to_op}]
By Markov's inequality, for any $\epsilon>0$, we have 
\$
B_n := \PP( A_n >\epsilon \given \cF_n) \leq \frac{\EE[A_n\given \cF_n]}{\epsilon } = o_P(1),
\$
and $B_n\in[0,1]$ are bounded random variables. For any subsequence $\{n_k\}_{k\geq 1}$ of $\NN$, since $B_{n_k}\pto 0$, there exists a subsequence $\{n_{k_i}\}_{i\geq 1} \subset \{n_k\}_{k\geq 1}$ such that $B_{n_{k_i}} \stackrel{\text{a.s.}}{\to} 0$ as $i\to \infty$. By the dominated convergence theorem, we have $\EE[B_{n_{k_i}}]\to 0$, or equivalently, 
$
\PP(A_{n_{k_i}} >\epsilon) \to 0.
$
Therefore, for any subsequence $\{n_k\}_{k\geq 1}$ of $\NN$, there exists a subsequence $\{n_{k_i}\}_{i\geq 1} \subset \{n_k\}_{k\geq 1}$ such that $A_{n_{k_i}} \pto 0$ as $i\to \infty$. By the arbitrariness of $\{n_k\}_{k\geq 1}$, we know $A_n\pto 0$ as $n\to \infty$, which completes the proof. 
\end{proof}

\end{document}